\documentclass[11pt]{article}
\usepackage[utf8]{inputenc}
\usepackage[numbers]{natbib}
\usepackage{times}
\usepackage[a4paper, margin=1in]{geometry}

\usepackage{pgfplots}
\pgfplotsset{width=10cm,compat=1.9}
\usepgfplotslibrary{external}
\title{Stronger Memory-Query Tradeoffs for Convex Optimization: The Limitations of Subquadratic Memory}
\author{\makebox[1.8in]{Michael Menart  \thanks{Department of Computer Science,
University of Toronto, 
Vector Institute, 
\texttt{michael.menart@utoronto.ca}}}
\makebox[1.8in]{Aleksandar Nikolov \thanks{Department of Computer Science,
University of Toronto, 
\texttt{anikolov@cs.toronto.edu}}
}
\makebox[1.8in]{Ohad Shamir \thanks{Weizmann Institute of Science and University of Toronto, 
\texttt{ohad.shamir@weizmann.ac.il}}
}
}
\date{}

\usepackage{todonotes}

\usepackage{macros}

\begin{document}

\maketitle

\begin{abstract}
    We prove two lower bounds for the first order oracle complexity of minimizing a $d$-dimensional $1$-Lipschitz convex function over the unit ball with $m$ bits of memory. We first show that any such (possibly randomized) algorithm must make $\tilde{\Omega}(\frac{d^2}{\sqrt{m}})$ oracle queries. For deterministic optimization algorithms, we show that $\tilde{\Omega}(\min\{d^{1.6},\frac{d^{8/3}}{m^{2/3}}\})$ queries are required. For all memory regimes of interest, these improves upon the previous best known lower bounds of $\tilde{\Omega}(\max\{\frac{d^{8/3}}{m^{4/3}},\frac{d^{4/3}}{m^{1/6}}\})$ and $\tilde{\Omega}(\frac{d^{5/3}}{m^{1/3}})$ for randomized and deterministic algorithms respectively. Notably, due to existing upper bounds, our lower bound for deterministic algorithms is the first to show a sharp oracle complexity phase transition around $m\approx d^2$, where a polylogarithmic change in memory leads to a $\mathsf{poly}(d)$ change in the number of required oracle calls. Further, when the suboptimality is polynomially small in $d$, our lower bound randomized algorithms is the first to show that $\tilde{\Omega}(d^2)$ memory is necessary to nearly match the optimal query complexity among algorithms without memory constraints. Previously, such a result was only known for the regime where the suboptimality is quasipolynomially small in $d$.

\end{abstract}

\section{Introduction}
Convex optimization is a fundamental problem which has been studied through the oracle complexity model as far back as \cite{nemirovsky85}. In the first order oracle model, an algorithm must minimize an unknown function $F:\cB(1)\mapsto \re$ (here $\cB(1)$ denotes the unit ball) 
via access through a first order oracle, $\cO$, which at each query $w \in \cB(1)$ returns $F(w)$ and $g \in \partial F(w)$, i.e. an element in the subgradient of $F(w)$\footnote{Lower bounds for the case where the algorithm can make queries outside the ball, or even output a solution outside the ball (i.e. the unconstrained setting) can be obtained by a generic reduction, \cite[Lemma 44]{MSSV24-Arxiv}.}. 
Under the guarantee that $F$ $1$-Lipschitz and convex, 
the optimizer attempts to find a point $\widehat{w}$ satisfying $F(\widehat{w})-\min_{w\in\cB(1)}\{F(w)\} \leq \alpha$ in the fewest number of oracle calls. 

Without memory constraints, this problem is well understood, and the oracle complexity is known to be $\Theta(\min\{1/\alpha^2,d\log(1/\alpha)\}$ \cite{nemirovsky-information-based-complexity, BGP17}. When $\alpha \geq 1/\sqrt{d}$ (or equivalently, $d\geq 1/\alpha^2$), often referred to as the large scale regime, gradient descent achieves the optimal upper bound. Even in the high accuracy regime $\alpha < 1/\sqrt{d}$, gradient descent remains the dominant method in practice. Cutting plane methods such as the center of gravity method achieve the optimal $O(d\log(1/\alpha))$ query complexity in the high accuracy regime, but come with significant drawbacks. Notably, they all require at least $\Omega(d^3\log(1/\alpha))$ running time due to the complexity of computing the oracle query points \cite{LSW15,jiang20-improve-cutting-plane}\footnote{It should be noted that the methods which we can run in $\tilde{O}(d^3)$ time are more involved than center of gravity.}.
They also all require $\tilde{\Omega}(d^2)$ bits of memory to store the cutting planes. This contrasts sharply with the $O(d\log(1/\alpha))$ bits of memory needed for gradient descent. Our intuition for what makes optimization challenging in the high accuracy regime is substantially limited by such gaps.

This motivates the COLT 2019 open problem of characterizing memory-query tradeoffs in convex optimization \cite{woodworth19-opt-with-limited-memory}. First progress on this problem was made by \cite{MSSV22} for randomized algorithms, who showed that optimizing to $\alpha = 1/\mathsf{poly}(d)$ accuracy with $m$ bits of memory requires $\tilde{\Omega}\big(\frac{d^{8/3}}{m^{4/3}}\big)$ oracle queries. 
In the regime $m\geq d^{8/7}$ and $\alpha$ quasipolynomially small in $d$ (specifically, $\alpha \leq 2^{-\log^5(d)}$), this was improved by \cite{CP23}, who showed $\tilde{\Omega}\big(\frac{d^{4/3}}{m^{1/6}}\big)$ queries are necessary. For instance, this implies that $m = d^{2-\delta}$ bits of memory require $\tilde{\Omega}(d^{1+\delta/6})$ queries. 
These are the best previously known lower bounds for randomized algorithms, but \cite{BZJ23} proved a stronger result for deterministic algorithms, and showed that $\tilde{\Omega}\big(\frac{d^{5/3}}{m^{1/3}}\big)$ queries are required. Equivalently, $m=d^{2-\delta}$, implies $\tilde{\Omega}(d^{1+\delta/3})$ queries are required.

\subsection{Contributions.}
Our contributions are the following. We first introduce a new primitive, the \emph{marked subspace game with hint} (MSGH), which we use as a tool to analyze the memory constrained optimization problem.
Notably, this game provides fundamental insight into the algorithmic limitations of making queries orthogonal to a high entropy random matrix when given a limited amount of informative bits. 
Using the MSGH, we prove new lower bounds for memory constrained optimization.
This proof structure is in keeping with past work, although our game and proof technique differ substantially from those previously developed.
Using the MSGH, we show a new oracle complexity lower bound for randomized algorithms with $m$ bits of memory. Such algorithms must make $\tilde{\Omega}\big(\frac{d^2}{\sqrt{m}}\big)$ queries to find a $1/\mathsf{poly}(d)$ accurate solution.
Notably, this shows that near quadratic memory is necessary to nearly match the optimal $\tilde{O}(d)$ oracle complexity among algorithms without memory constraints, even for accuracy polynomially small in $d$. Previous work had shown such a result only for the case when $\alpha \leq 2^{-\log^5(d)}$ \cite{CP23}.
For deterministic algorithms, we provide an even stronger lower bound of $\tilde{\Omega}(\min\{d^{1.6},\frac{d^{8/3}}{m^{2/3}}\})$. 
This bound is the first to show a sharp transition around $m=d^2$. Specifically, for some $m=O(d^2\log(1/\alpha))$, Vaidya's algorithm achieves oracle complexity $O(d\log(1/\alpha))$ \cite{blanchard23-mem-constrained-algorithms}. But for some $m = \Omega(\frac{d^2}{\log(d)})$, our results show the necessary query complexity jumps by a $\mathsf{poly}(d)$ factor to $\tilde{\Omega}(d^{4/3})$. 
Thus, \emph{any} optimization algorithm in the oracle model that improves on the memory complexity of Vaidya's algorithm, \emph{even by a poly-logarithmic factor}, must suffer polynomial loss in the oracle complexity. By contrast, previous lower bounds did not show such a sharp transition, and instead only showed oracle complexity degrades by a polynomial factor if $m=O(d^{2-\delta})$ with $\delta=\Omega(1)$.

\subsection{Comparison to Previous Approaches.} 
\paragraph{Overview of previous lower bounds.}
Three previous works, \cite{MSSV22, BZJ23, CP23}, have provided lower bounds for the memory constrained optimization problem. They all analyze a hard loss function instance that roughly takes the form,
$F(w) = \max\{ \max_{j\in[N]}\{\ip{x_j}{w} - j\gamma\}, \|Aw\|_\infty - \rho\},$
where the first term in the max is a ``Nemirovski" function defined using $N$ randomly sampled vectors, $x_{1}, \ldots, x_{N}$, and offset $\gamma>0$, and the second term is a ``barrier'' function which uses a high entropy matrix $A$ and offset $\rho>0$. Query complexity lower bounds are proven by showing that with low memory, the optimizer cannot memorize $A$, and thus needs $\tilde{\Omega}(d)$ queries to ``discover'' the next few Nemirovski vectors.
In both \cite{MSSV22} and \cite{BZJ23}, a running time lower bound is proven by showing the optimization algorithm repeatedly solves the ``orthogonal vector game'' (OVG), where the goal of the game player is to find $k$ well spread out vectors nearly orthogonal to $A$.
If the memory of the optimizer is $m=\Theta(kd)$, this allows them to show that discovering the next $k$ Nemirovski vectors requires $\Omega(d)$ oracle queries. They thus achieve query lower bounds of $\Omega\big(\frac{dN}{k}\big)$, where $N$ is limited by various factors, most notably the needed scale for the offset, $\gamma$. \cite{MSSV22} was able to set $N=\tilde{\Omega}\big(\frac{d}{k}\big)^{1/3}$, achieving a $\tilde{\Omega}\big(\frac{d^{4/3}}{k^{4/3}}\big)$ query lower bound for randomized algorithms. 
Subsequently, \cite{BZJ23} leveraged determinism and a modification of the OVG (the OVG with hints) to increase $N$ and get a $\tilde{\Omega}\big(\frac{d^{4/3}}{k^{1/3}}\big)$ lower bound for deterministic algorithms. 

The work \cite{CP23} introduced a new game, the orthogonal \textit{correlated} vector game (OCVG), which requires the algorithm to find a single vector orthogonal to $A$, but under the additional requirement that the vector correlates with a randomly sampled vector $x$. Again analyzing a similar loss construction, they use this new game to show that roughly $d$ queries are needed to discover just one new Nemirovski vector. 
This gives a $\tilde{\Omega}(Nd)$ lower bound, 
but also requires a larger offset, and consequently uses $N=o((d/k)^{1/3})$.
The final result is a 
$\tilde{\Omega}\big(\frac{d^{7/6}}{k^{1/6}}\big)$ query lower bound for randomized algorithms.

\vspace{-5pt}\paragraph{Our approach.} 
An intuition underlying the analysis in these previous lower bounds is that it is hard for the optimizer to ``explore'' a large subspace orthogonal to $A$ because of its limited memory. For example, consider the strategy where the optimizer uses its $kd$ bits of memory to store a $\tilde{\Theta}(k)$-dimensional linear subspace orthogonal to $A$. Intuitively, one would expect that the only queries the optimizer can make that avoid the barrier term in the loss are in this small subspace.

In our work, we improve upon existing analysis of the loss construction by introducing the Marked Subspace Game with Hint (MSGH). Roughly, this game shows that the behavior of any information limited strategy for querying points nearly orthogonal to $A$ mimics the simple strategy just described.
In the game, the player first gets to choose and store a large message about $A$. Then, an adversary is allowed to ``mark'' a $O(k)$-dimensional linear subspace based on the player's message. The player only wins the game if they find a point far from the marked subspace that is nearly orthogonal to $A$. 
We show the MSGH is not winnable with a $dk$-bit message about $A$, even if the player receives a small ``hint'' after a subspace is marked. 

Using this new game, we improve lower bounds for both randomized and deterministic memory-constrained optimizers. Specifically, the MSGH will show that at any point during the optimization procedure, any query the optimizer makes that avoids the barrier function lies near a $k$-dimensional subspace (i.e. the marked subspace).
For deterministic algorithms, we leverage the fact that the Nemirovski vectors can be sampled adaptively to the optimization procedure. When the optimizer reaches a point where it can achieve non-trivial correlation with the current Nemirovski vector, we sample the next Nemirovski vector orthogonal to all previously ``marked'' subspaces. 
By sampling the vector adaptively in this way, we can use a very small value for the offsets, and thus embed more vectors. 
A bottleneck here is that the past Nemirovski vectors may leak a non-trivial amount of information about $A$ while the optimizer is searching for the next vector. Carefully controlling this leakage is an important part of our analysis. This issue was also tackled by \cite{BZJ23}, although our techniques in this regard differ.
Overall, we are able to embed roughly $\min\{d^{3/5},\frac{d}{k^{2/3}}\}$ vectors, giving a lower bound of $\tilde{\Omega}\{\min\{d^{1.6},\frac{d^2}{k^{2/3}}\}\}$.

For randomized algorithms, we observe that the MSGH gives a tight lower bound for the orthogonal correlated vector game from \cite{CP23} (i.e. a strategy for winning the OCVG will win the MSGH, under appropriate parameters). Specifically, we show the OCVG is not winnable when the correlation requirement for the randomly sampled vector $x$ is roughly $\sqrt{k/d}$, whereas $\cite{CP23}$ required a threshold of $(k/d)^{1/4}$.
By itself, this would allow us to fully remove the $1/k$ factor from the \cite{MSSV22} lower bound while preserving $N=(d/k)^{1/3}$. However, we push our bound here even farther by introducing a modified version of the wall function from \cite{bubeck19_highlyparallel} into 
the loss construction.
The idea behind this addition is that because the OCVG cares about degree of correlation with the target vector, optimization progress is harder when the algorithm is forced to make queries with small norm, which is enforced by the wall function. This allows us to decrease $\gamma$ further and increase $N$ to $N=\sqrt{d/k}$, achieving a $\tilde{\Omega}(Nd) = \tilde{\Omega}\big(\frac{d^{1.5}}{\sqrt{k}}\big)$ lower bound. Notably, this is stronger even than the previously best known lower bound for deterministic algorithms \cite{BZJ23}.

\subsection{Additional Related Work.}
Outside of first order optimization, many other machine learning and statistical estimation problems have been studied under memory constraints, such as online learning, linear regression, detecting correlations, and statistical queries, to name just a few examples \cite{SH14,dagan18a,DKS19,Steinhardt15, SteinhardtVW16,moshkovitz_et_al_18,peng23}. Memory constrained learning is also closely related to learning under communication or information constraints, which has been a topic of increasing interest \cite{braverman16, bassily18-little-information, brown2021memorization, attias2024infocomplexitySCO}. 
Another closely related problem is the feasibility problem, which several works have studied under memory constraints \cite{BZJ23, blanchard23-mem-constrained-algorithms, blanchard24}. This problem is related in that lower bounds for first order optimization yield lower bounds for feasibility, although the converse does not hold. 
In addition to lower bounds, memory efficient algorithms have also been studied for convex optimization \cite{nocedal89, Liu1989OnTL, pilcanci17-newtonsketch}, although the only one to provide formal guarantees which surpass standard cutting plane methods is the recursive cutting plane method of \cite{blanchard23-mem-constrained-algorithms}. Unfortunately, this method only gives significant improvements for the regime where the suboptimality is  exponentially small in $d$.

Outside of memory constraints, oracle complexity lower bounds have been studied for decades \cite{nemirovsky85,nemirovsky-information-based-complexity}. Oracle complexity bounds not only give insight into the running time limitations of optimization, but also provide intuition for algorithm design. Traditionally such bounds have been expanded by exploring different regularity conditions on the loss, but more recent work has explored oracle complexity under algorithmic constraints (outside of limited memory) such as limited adaptivity, intermittent communication, gradient compression, and privacy \cite{arjevani2015communication,DG20, woodworth21a, acharya-info-constrained-opt, MN25}.

\section{Preliminaries}
\paragraph{Memory constrained convex optimization.}
We consider the class of optimization algorithms which make at most $n$ calls to a first order oracle and have access to $m$ bits of memory, which we denote as $\cM_{\mathsf{rand}}(m,n)$ and $\cM_{\mathsf{det}}(m,n)$ for randomized and deterministic algorithms respectively. The algorithm may use arbitrary memory in between oracle queries, but we require that the algorithm's state be representable by $m$ bits at some point between each query to the first order oracle. The first order oracle for the loss/objective $F$ can be any function $\cO:\cB(1) \mapsto \re\times \re^d$ (where $\cB(1)$ is the unit Euclidean ball) such that $\forall w\in\cB(1): \cO(w) = (F(w),g_w)$ for some $g_w \in \partial F(w)$, with $\partial F$ denoting the subgradient. We are interested in the ability of such optimization algorithms to find approximate minima of a $1$-Lipschitz convex function $F$ under the unit ball constraint, where the suboptimality of $\widehat{w}$ is measured by $F(\widehat{w})-\min_{w\in\cB(1)}\{F(w)\}$.

\vspace{-5pt}\paragraph{Notation.} We use $\cB(r)$ to denote the Euclidean $d$-dimensional ball of radius $r$; $\cS(r)$ will denote the $(d-1)$-dimensional sphere. The Minkowski sum of two sets $\cA$ and $\cB$ is denoted with $\cA\msum\cB$.
$\Span(v_1,..,.v_n)$ denotes the span of a set of vectors and $\Span^\perp(...)$ denotes the orthogonal complement of the span.  $\mathsf{Gr}(k,d)$ denotes the Grassmannian which is the set of all $k$-dimensional linear subspaces of $\re^d$. We let $\Gr([k],d) = \bigcup_{j\in[k]} \Gr(j,d)$. $\proj_S$ denotes the orthogonal projection onto $\Span(S)$ and $\proj_S^\perp$ denotes projection onto its orthogonal complement. $\proj_{S,S'}$ denotes orthogonal projection onto $\Span(S\cup S')$. Collections of vectors $u_1,...,u_t$ may be indexed via $u_{< l} = \{u_q\}_{q\in[l-1]}$, $u_{p:l}=\{u_p,...,u_l\}$, or $u_{\neq l} = \{u_1, \ldots, u_{l-1},u_{l+1}, \ldots,u_t\}$.
For $p\in\mathbb{Z}^+$, we use $[p]$ to denote the set $\{1,\dots,p\}$.
Finally, we note that several standard lemmas on random projections and packings/covers will be used frequently throughout. We defer these to Appendix \ref{app:lemmas}.

\section{Marked Subspace Game} \label{sec:game}
Algorithm \ref{alg:MSGH} details the new primitive we consider for our analysis, the Marked Subspace Game with Hint (MSGH). Here, the player first stores some large message about the random matrix $A$. Then, depending on the message and $A$, a $k$-dimensional subspace is ``marked''.  The player then receives a smaller ``hint'' message, which can be adaptive to the marked subspace.
Finally, the player is allowed to make $T$ row queries, and wins if any such query is nearly orthogonal to $A$ and far from the marked subspace.
Note that the player can still make row queries in the marked subspace, but such queries do not on their own result in the player winning the game.
Also, the player can implement any randomized version of the message/hint by randomly sampling $h_1$ and $h_2$. Since $\cL$ is chosen based on $h_1$, the adversary functionally has access to (at least part of) the random seed of $\cP$ when picking $\cL$. 
We have the following statement about the difficulty of the game.

\begin{algorithm}[t]
\caption{\textit{Marked Subspace Game with Hint (MSGH)}}
\label{alg:MSGH}
\begin{algorithmic}[1]
\REQUIRE embedding dimension $d$, matrix size $d'\in[d]$, query limit: $T \in [d']$, message size $m_1 \in [d^2]$, hint size $m_2 \in \{0,...,d\}$, dimension threshold $k > 0$, row norm parameters $\Delta_1,...,\Delta_{d'}$
\STATE Player $\cP$ chooses $h_1:\re^{d'\times d} \mapsto \{0,1\}^{m_1}$ and $h_2: \Gr([k], d) \times \re^{d'\times d} \mapsto \bit^{m_2}$ 
\STATE Sample $A \in \re^{d'\times d}$ s.t. for each $i\in [d']$, row $i$ is sampled as $a_i \sim \Unif(\cS(\Delta_i))$
\STATE Adversary picks (``marks'') a linear subspace $\cL \in \Gr([k],d)$ using $A$ and $h_1$  
\STATE $\cP$ receives $B=h_1(A)$ and $q = h_2(\cL,A)$ \label{line:player-info}
\FOR{$t=1...T$}
    \STATE $\cP$ queries $u_t \in \re^d$ and $\eta_t \in \re^{d'}$.
    \STATE $\cP$ receives $\hat{a}_t = a_{i_t}$ where $i_t= \argmax\limits_{i \in [d']}\bc{\ip{u_t}{a_i} - \eta_{t,i}}$
\ENDFOR
\STATE $\cP$ wins if ~$\exists t\in[T]$ satisfying both 
\begin{align*}
    &\text{1) Near orthogonality: } ~~\|A u_t\|_\infty \leq \frac{\|u_t\|}{2 d^{4.5}}, \quad \text{2) Subspace distance: } \frac{u_t}{\|u_t\|} \notin \cL \msum \cB\big(d^{-1.5}\big)
\end{align*}
\end{algorithmic}
\end{algorithm}
\begin{proposition}\label{prop:msgh-bound}
Let $C_1, C_2 > 1$ be universal constants and $d$ larger than some constant. Let $d' \geq d/2$,
$T < \frac{d}{C_1\log(d)}$, 
$m_1 \leq \frac{d^2}{64\log(d)}$, $m_2 \leq \frac{d}{C_2}$ and $k\geq 32 \max\{\frac{m_1}{d}, \log(d)\}$.
There exists a strategy for the adversary such that any (potentially randomized) player loses the MSGH w.p. at least $1-O(\frac{1}{d^3})$. 
\end{proposition}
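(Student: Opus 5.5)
We plan a compression argument. It is set up in three moves: fix the adversary's marking strategy, turn every winning run into a short encoding of $A$, and then argue that $A$ has too much entropy for such short encodings to exist often.

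\textbf{The adversary.} Given $B=h_1(A)$, the adversary considers the conditional distribution of $A$ given $B$ and looks at the set of unit vectors nearly orthogonal to $A$,
\[
\cU(A)=\{u\in\cS(1): \|Au\|_\infty\le 1/(2d^{4.5})\}.
\]
It marks a subspace $\cL$ of dimension at most $k$ that is ``heavy'' for $B$. Concretely, it builds $\cL$ greedily: while there is a unit direction $v\in\cU(A)$ outside $\cL\msum\cB(d^{-1.5})$ for which $\Pr[v\in\cU(A')\mid h_1(A')=B]$ is non-negligible (say at least $2^{-cd}$), it adds $v$ to $\cL$.

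The key claim is that this process stops before $k$ steps with high probability. Suppose it runs for $k$ steps, producing $v_1,\dots,v_k$, where each new direction has a component of size at least $d^{-1.5}$ orthogonal to the previous ones. Then $A$ lies in the set of matrices nearly orthogonal to the $k$ roughly independent directions $\Span(v_1,\ldots,v_k)$. By the random-projection lemmas of Appendix \ref{app:lemmas}, a uniform row $a_i$ satisfies $|\langle a_i,v_j\rangle|\le \Delta_i/(2d^{4.5})$ for all $j$ with probability about $d^{-\Omega(k)}$. Since the rows are independent and $d'\ge d/2$, the whole matrix satisfies this with probability $2^{-\Omega(kd\log d)}$.

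To combine this with the message, we take a union bound over a net of $k$-tuples of directions (net size $2^{O(kd\log d)}$) and over the $2^{m_1}$ messages. Because $k\ge 32m_1/d$, the quantity $kd\log d$ dominates $m_1$, so with probability $1-O(d^{-3})$ the greedy process halts with $\dim\cL\le k$. The $\log d$ bookkeeping and the constant $32$ are chosen to make this union bound close. I would phrase the argument directly as a counting bound: the number of matrices consistent with a message and with $k$ marked near-orthogonal directions, times the number of messages, is a tiny fraction of the whole volume.

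\textbf{Handling the hint and the queries.} After marking, any winning query $u_t$ must lie in $\cU(A)$ and outside $\cL\msum\cB(d^{-1.5})$. By construction such a direction has small conditional probability given $B$ alone. What remains is to show that the hint $q$ (only $m_2\le d/C_2$ bits) and the $T$ query responses cannot raise that probability enough.

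Each response is a row $a_{i_t}$, which is an index in $[d']$ together with the row vector itself. Conditioning on revealed rows removes them, but leaves the other $d'-T\ge d/2-d/(C_1\log d)$ rows fresh. The near-orthogonality constraint on those fresh rows alone already has probability $d^{-\Omega(d)}$ for any fixed $u$ outside $\cL$'s neighbourhood, even conditionally on $B$, because of how the adversary built $\cL$.

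To make this rigorous, I would simulate the player's $T$ adaptive rounds deterministically from $(B,q,\text{indices }i_{1..t})$ plus the revealed rows. I would then take a union bound over the $2^{m_2}\cdot (d')^{T}$ possible transcripts of hint and indices. Since $T<d/(C_1\log d)$, this count is $2^{O(d/C)}$. The revealed rows are handled by conditioning: the winning event concerns the remaining rows, whose distribution given everything is still their prior restricted by $B$. Choosing $C_1,C_2$ large makes $2^{m_2}d^{T}\cdot 2^{-cd}\le d^{-3}$. Finally, randomized players are handled by fixing the random seed, which is absorbed into $h_1,h_2$ as remarked before the game.

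\textbf{Main obstacle.} The step I expect to be hardest is the interaction between conditioning on $B$ and revealing rows. The adversary's heaviness threshold is stated in terms of the distribution given $B$, but the queries reveal actual rows, which can correlate with $B$. I would first try to define heaviness row-wise, via the product structure: bound the probability over the unrevealed rows only, and use that $B$ has $m_1\ll d^2/\log d$ bits to argue that for most of the rows the conditional distribution given $B$ is close to uniform. This can be done by a chain-rule or entropy argument showing that at most $O(m_1/(d\log d))\le k$ rows are significantly biased, and the adversary can afford to include those rows' information in $\cL$.
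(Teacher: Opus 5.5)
Your adversary is essentially the paper's: it marks a subspace covering the nearly orthogonal directions in the strong posterior set. However, your proof that this subspace has dimension at most $k$ has a genuine gap, because it never uses heaviness. You union-bound over a net of \emph{all} $k$-tuples of directions and over the $2^{m_1}$ messages. If that argument were valid, it would show that w.h.p.\ no $k$ directions with Gram--Schmidt components at least $d^{-1.5}$ are simultaneously nearly orthogonal to $A$. That is false. For $d'=d/2$, which the proposition allows and the applications use, $A$ has a kernel of dimension at least $d/2\ge k$, so such tuples exist with probability one.

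Quantitatively, a net fine enough to preserve the threshold $\frac{1}{2d^{4.5}}$ has roughly $2^{4.5\,dk\log d}$ tuples. A fixed tuple, even an orthonormal one, is nearly orthogonal to all $d/2$ rows with probability only about $2^{-2dk\log d}$. The exponents do not close for any net scale or constants, so comparing $kd\log d$ with $m_1$ is not the relevant check.

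The missing idea is that heaviness must confine the candidate directions to a $B$-determined set of small metric entropy. This is the core of the paper's argument (Lemma~\ref{lem:prob-in-bgood}): a maximal $d^{-3}$-packing of the heavy set has at most $2^{d/100}$ points w.h.p. It is proved by a second compression argument:
\begin{enumerate}
\item If the packing were larger, linearity of expectation over $A\mid B$ gives, with probability at least $2^{-d/200}$, some $2^{d/200-1}$ packing points that are simultaneously nearly orthogonal.
\item Such a packing has slice dimension $\Omega(d/\log d)$ (Lemma~\ref{lem:packnum-lb-linear-dim}).
\item The maximal-volume selection of Lemma~\ref{lem:slice-dim-yields-ortho-space} turns this into an $\Omega(d/\log d)$-dimensional subspace all of whose unit vectors are nearly orthogonal, describable with $m_1+d\hat d/100$ bits.
\item This contradicts Lemma~\ref{lem:no-large-ortho-space}.
\end{enumerate}
Only with this bound does naming $k$ directions cost $m_1+kd/100$ bits rather than $\Theta(kd\log d)$, which is what lets a counting step close. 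That step should be run on the finite packing itself, as the paper does. Rounding an ill-conditioned greedy basis by $d^{-3}$ can destroy its $d^{-1.5}$ separations.

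Your treatment of the queries is also incomplete. The union bound over the $2^{m_2}$ hint values is fine; it plays the role of Lemma~\ref{lem:bits-bound-posterior}. A union bound over index sequences, however, does not cover the row responses, for two reasons.
\begin{itemize}
\item Later queries depend on the revealed row vectors. These are continuous, and discretizing them would cost about $Td\log d\gg d$ bits.
\item Each response leaks information about the unrevealed rows. Returning $i_t$ certifies $\ip{a_j}{u_t}-\eta_{t,j}\le \ip{a_{i_t}}{u_t}-\eta_{t,i_t}$ for every other $j$, with offsets the player can choose to make this informative.
\end{itemize}
So the unrevealed rows are not distributed as ``their prior restricted by $B$.'' The paper handles this by defining heaviness through the unrevealed rows $\tilde A$ and proving Lemma~\ref{lem:G-updates}. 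With probability $1-d^{-4}$, the event on $\tilde A$ revealed by a response has conditional probability at least $d^{-4}$. By Bayes' rule, each response then inflates the posterior of any event on $\tilde A$ by at most $d^4$. The total factor is $2^{4T\log d}=2^{O(d/C_1)}$, which the $2^{-d/c_1}$ heaviness threshold absorbs. Your row-wise-bias idea in the last paragraph does not address this argmax leakage.
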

We provide some discussion before providing the proof in Section \ref{sec:msgh-proof}.
Another way of looking at this result is that for any query strategy, there exists some subspace $\cL$ such that any query which is nearly orthogonal to $A$ lies near this $k$-dimensional subspace. This seems to closely mimic the natural upper bound strategy where the player uses $m_1$ to store an arbitrary $\tilde{\Theta}(k)$-dimensional subspace orthogonal to $A$. 
One would expect the conditional distribution of $A$ to be (roughly) uniformly random in the space orthogonal to the space stored by $m_1$. As such, we would indeed infer that any attempt at a query outside the stored space will likely not be orthogonal to $A$ due to random projection. 

In our lower bound for deterministic optimization algorithms, we will sample the Nemirovski vectors adaptively and orthogonally to $\cL$. This means they may leak additional information about $A$, which is why the MSGH accounts for a hint $q$. 

\paragraph{Orthogonal Correlated Vector Game (OCVG).} 
We can use the MSGH to provide a tighter analysis for the Orthogonal Vector Correlated Game (OCVG) introduced in \cite{CP23} (see also \cite[Remark 2.2]{CP23}). 
Specifically, we can lower the correlation threshold from $(k/d)^{1/4}$ to $\sqrt{k/d}$ under the same memory requirement. 
\begin{proposition}\label{prop:ocvg-bound}
Consider the modification of Algorithm \ref{alg:MSGH} where $\cP$ also receives $x\sim \Unif(\cS(1))$ at Line \ref{line:player-info}. 
Then under the same parameter regimes specified in Proposition \ref{prop:msgh-bound},
with probability at least $1-O(1/d^3)$ there does not exist any $t\in[T]$ such that $u_t$ satisfies both the modified win conditions,
\begin{align*}
  \text{1) Near orthogonality: } ~~\|A u_t\|_\infty \leq \frac{\|u_t\|}{2 d^{4.5}} \quad  \text{2) Sufficient correlation: }|\ipnos{u_t}{x}| \geq 3\|u_t\|\sqrt{\frac{k}{d}}  
\end{align*}
We refer to this modification as the orthogonal correlated vector game (OCVG) \footnote{In minor ways, our OCVG does not match exactly the game defined in \cite{CP23}. For example, our win condition depends on the norm of the query.}.
\end{proposition}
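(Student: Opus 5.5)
We reduce the OCVG to the MSGH of Proposition \ref{prop:msgh-bound}. Fix any OCVG player $\cP'$. From it we build an MSGH player $\cP$ and play it against the adversary strategy guaranteed by Proposition \ref{prop:msgh-bound}, with the same parameters $(d,d',T,m_1,m_2,k)$. The randomness used to sample $x\sim\Unif(\cS(1))$ is folded into $\cP$'s internal randomness. Since the vector $x$ is not part of the MSGH message, $\cP$ cannot simply receive it. Instead, we argue in two steps: first condition on $A$ and the message, and only afterwards on the marked subspace $\cL$; then note that $x$ is independent of $(A,\cL)$.

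Concretely, $\cP$ uses the same message $h_1$ and hint $h_2$ as $\cP'$, and runs $\cP'$ with a fresh $x\sim\Unif(\cS(1))$ sampled after Line \ref{line:player-info}. Because the adversary chooses $\cL$ from $(A,h_1)$ only, $x$ is independent of $(A,\cL,B,q)$. The simulation is faithful: the query responses $\hat a_t$ depend only on $A,u_t,\eta_t$. By Proposition \ref{prop:msgh-bound}, with probability $1-O(d^{-3})$ every $u_t$ satisfying near orthogonality also satisfies $u_t/\|u_t\|\in\cL\msum\cB(d^{-1.5})$. For such $u_t$, write $u_t/\|u_t\| = \ell + e$ with $\ell\in\cL$, $\|\ell\|\le 1+d^{-1.5}$ and $\|e\|\le d^{-1.5}$. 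Then
\[
|\ipnos{u_t}{x}|/\|u_t\| \le \|\proj_{\cL} x\|(1+d^{-1.5}) + d^{-1.5}.
\]

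It therefore suffices to show $\|\proj_\cL x\|\le 2\sqrt{k/d}$ with probability $1-O(d^{-3})$. The subspace $\cL$ has dimension $j\le k$ and is independent of $x$, so $\|\proj_\cL x\|^2$ has the distribution of the squared norm of the projection of a uniform unit vector onto a fixed $j$-dimensional subspace. The standard random projection concentration lemma from Appendix \ref{app:lemmas} gives $\|\proj_\cL x\|^2\le \frac{j}{d}+ O\bigl(\sqrt{j\log d}/d+\log d/d\bigr)$ with failure probability $d^{-3}$. Since $k\ge 32\log d$, this is at most $4k/d$, so $\|\proj_\cL x\|\le 2\sqrt{k/d}$. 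Because $k\ge 1$, we have $d^{-1.5}\ll\sqrt{k/d}$, and the bound becomes $|\ipnos{u_t}{x}|<3\|u_t\|\sqrt{k/d}$, which contradicts the correlation condition. A union bound over the two failure events completes the argument.

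The main subtlety is the dependence structure. The queries $u_t$ depend on $x$, but this is harmless: the geometric containment $u_t/\|u_t\|\in\cL\msum\cB(d^{-1.5})$ holds for \emph{every} near-orthogonal query simultaneously, whatever the player does. We then only need the bound on $\|\proj_\cL x\|$, which concerns $x$ and $\cL$ alone. The one point to check is that Proposition \ref{prop:msgh-bound} holds against randomized players whose randomness (here including $x$) is unseen by the adversary. The proposition explicitly covers potentially randomized players, and the adversary sees only $h_1$, so this is fine.
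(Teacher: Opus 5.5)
Your proposal is correct and is essentially the paper's argument. The marked subspace $\cL$ depends only on $(A,h_1)$, so it is independent of $x$, and a random-projection bound gives $\|\proj_{\cL} x\| \le 2\sqrt{k/d}$ with probability $1-O(d^{-3})$; Proposition~\ref{prop:msgh-bound} then confines every near-orthogonal query to $\cL \msum \cB(d^{-1.5})$, where the correlation with $x$ falls below $3\sqrt{k/d}$. The only differences are cosmetic: you phrase it as an explicit reduction with $x$ sampled as post-marking player randomness, you derive the projection tail for $\dim\cL = j \le k$ directly rather than applying Lemma~\ref{lem:random-projection} with $\epsilon=1$ to get $2e^{-k/2}\le d^{-3}$, and you track the $(1+d^{-1.5})$ factor that the paper drops.
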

\begin{proof}%
Observe that the subspace chosen by the adversary only depends on $A$ and $h_1$, and is thus independent of $x$.
Thus by standard concentration results for random projections (see Lemma \ref{lem:random-projection}), we have for any choice of $\cL$ such that $\Dim(\cL) \leq k$ that
$\pr{x}{\max\limits_{v\in\cL\cap\cS(1)}\{|\ip{x}{v}|\}  \geq 2\sqrt{\frac{k}{d}}} \leq 2e^{-k/2} \leq \frac{1}{d^3}$ (recall $k> 32\log(d)$).
Assuming $\max\limits_{v\in\cL\cap\cS(1)}\{|\ip{x}{v}|\}$  is bounded by  $2\sqrt{\frac{k}{d}}$, we have that for any unit vector $v\in \cL\msum \cB(1/d^{1.5})$, $|\ip{x}{v}| \leq 2\sqrt{k/d} + d^{-1.5} \leq 3\sqrt{k/d}$.
On the other hand, we know by Proposition \ref{prop:msgh-bound} that there exists a choice of $\cL$ (made independently of $x$) such that w.p. at least $1-O(1/d^3)$ only queries in $\cL \msum \cB(d^{-1.5})$ satisfy near orthogonality. We have just shown that with probability at least $1-O(1/d^3)$ no such query can also satisfy sufficient correlation, so the proof is complete.
\end{proof}
Note that while we have used the marked subspace as a tool in the above proof, the OCVG game itself essentially ignores the marked subspace, particularly if $m_2=0$. %
The OCVG lower bound is what we will use in our analysis for randomized algorithms.
The correlation threshold is near tight, as it is matched by the strategy which stores a $\tilde\Omega(k)$ dimensional subspace orthogonal $A$, and queries the projection of $x$ onto the stored subspace.
We also note that the flexibility added to the game via the row norm parameters and query offsets $\eta_{1:p}$ is needed for technical reasons related to handling the wall function (see Section \ref{sec:randomized}). The hint $q$ is not needed for this application, but does not hurt asymptotics.

\subsection{Proof of Proposition \ref{prop:msgh-bound}} \label{sec:msgh-proof}
Since a randomized algorithm can be described as a distribution over deterministic algorithms, it suffices to show that the proposition holds for any deterministic algorithm (i.e. Yao's minimax principle). Note on this point that if $\cP$ were to implement a randomized mapping from $A$ to the bits (which it can do by randomly selecting $h_1$), it must still give the adversary the deterministic function $h_1$. In this way, the adversary functionally has access to the random seed of $\cP$ used for this first part of the game.
Further, without loss of generality we can assume for all $t\in[T]$ that $\|u_t\|=1$, since this does not change whether $u_t$ wins the OCVG or the row returned from $A$.

\paragraph{Proof preliminaries. } 
Let $\tilde{A}$ be the matrix of rows of $A$ not returned by any query made during the MSGH. 
The following set will be crucial in our analysis. For a constant $c_1$, define, %
\begin{align*}
 \cG_b = \Big\{u\in\cS(1): \bigpr{A|B=b}{\|\tilde{A} u\|_\infty \leq \frac{1}{d^{4.5}}} \geq 2^{-d/c_1} \Big\},   \quad \forall b\in\bit^{m_1}.
\end{align*}
We recall $\tilde{A}$ are the rows of $A$ not returned by the query responses.
Intuitively, $\cG_B$ is the set of points which $\cP$ is able to identify as orthogonal to $\tilde{A}$ with non-trivial probability (note the prior probability of any $x\in\cS(1)$ being near orthogonal to $\tilde{A}$ is $2^{-\Omega(d)}$). We informally refer to $\cG_B$ as the ``strong posterior set''. Our goal in the rest of the proof is to show that, roughly, the strong posterior set can contain only a $k$-dimensional subspace of points which satisfy the near orthogonality condition. Thus by marking this subspace, the adversary forces the player to win via some query in the complement of $\cG_B$, the ``weak posterior set''. However, by definition, $\cP$ has little information about which points in the weak posterior set win. The final step is showing that even after additional information provided by the hint and row queries, the weak posterior set is still weak.

It will be helpful to introduce the ``slice-dimension''. In words, the slice dimension of a set is the lowest dimension of a linear subspace such that all points in the set are close it. %

\begin{definition}\label{def:slice-dim}
The $\tau$-slice dimension of a set $\cY\subseteq \re^d$, $\Gamma(\cY,\tau)$, is defined as,
$\Gamma(\cY,\tau) = \min_{\cL}\{\Dim(\cL): \cY\subseteq \cL\msum \cB(\tau)\},$
where the minimum is taken over all linear subspaces of $\re^d$.
\end{definition}
 
\paragraph{Choosing the marked subspace.}
Our aim is now to show that there is a good choice of the marked subspace, $\cL$, which prevents $\cP$ from succeeding by guessing a point in $\cG_B$. 
\begin{lemma} \label{lem:ortho-set-small-width}
For any $A'\in\Supp(A)$ and $b\in \bit^m$, let 
$\cY_{b,A'} := \{v\in\cG_b: \|A'v\|_\infty \leq \frac{1}{d^{4.5}}\}$. Under the parameter settings in Proposition \ref{prop:msgh-bound}, it holds that
$\pr{A}{\Gamma(\cY_{B,A}, \frac{1}{2d^{1.5}}) \leq k} \geq 1-2^{-\Omega(dk)}$.
\end{lemma}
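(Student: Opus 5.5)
We will argue by a counting/union bound over the possible messages $b$ combined with a ``many well-spread points'' argument. The key observation is that if $\Gamma(\cY_{b,A},\frac{1}{2d^{1.5}})>k$, then $\cY_{b,A}$ contains a sequence of points $v_1,\dots,v_{k+1}$ that are robustly linearly independent. Concretely, we build it greedily: given $v_1,\dots,v_j$ with $j\le k$, the set $\cY_{b,A}$ is not contained in $\Span(v_{1:j})\msum\cB(\frac{1}{2d^{1.5}})$. Hence there is $v_{j+1}\in\cY_{b,A}$ whose component orthogonal to $\Span(v_{1:j})$ has norm at least $\frac{1}{2d^{1.5}}$. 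Each such $v_j$ lies in $\cG_b$ and satisfies $\|Av_j\|_\infty\le d^{-4.5}$. We will work with the rows of $A$ (not only the unqueried rows $\tilde A$). This is justified because $\|Av\|_\infty$ small implies $\|\tilde Av\|_\infty$ small, and the probability in the definition of $\cG_b$ concerns $\tilde A$, which we can dominate by considering the event on $A$ restricted to rows not returned. A cleaner route is to observe that the event is monotone in the row set. In the count we use $d'-T\ge d/4$ rows, so the row set matters only up to constants.

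Next we discretize. We replace each $v_j$ by a point of a fine net (spacing $d^{-10}$) of $\cG_b$. After Gram–Schmidt, the orthogonal components $w_j$ are orthogonal with norms at least $\frac{1}{2d^{1.5}}$. Each unqueried row $a_i$ must then satisfy $|\ip{a_i}{v_j}|\le 2d^{-4.5}$ for all $j\le k+1$. Inverting the triangular Gram–Schmidt system, the projection of $a_i$ onto $\Span(v_{1:k+1})$ has each orthonormal coordinate of size $\mathrm{poly}(d)\cdot d^{-4.5}$. By the random projection lemma (Lemma~\ref{lem:random-projection}, or a direct small-ball bound for a uniform vector on a sphere), this event has probability at most $(\mathrm{poly}(d)\, d^{-3})^{k+1}\le d^{-\Omega(k)}$ per row. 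The rows are independent, so the event that a fixed tuple $(v_1,\dots,v_{k+1})$ is near-orthogonal to all $\ge d/4$ relevant rows has probability $2^{-\Omega(dk\log d)}$. Controlling the condition numbers here, so that the exponent is a clean $\Omega(k\log d)$ per row despite the $d^{-1.5}$ separation, is the delicate calculation, and it is where we expect to have to choose the constants $4.5$ and $1.5$ carefully.

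The main obstacle is the union bound. The net over tuples has size $d^{O(dk)}$, which is comparable to the $2^{-\Omega(dk\log d)}$ bound just obtained, so a naive union bound over all tuples is borderline at best. To beat it we use the defining property of $\cG_b$: the points are not arbitrary. We instead bound, for fixed $b$, $\Pr[A\mid B=b]$ of the bad event. This probability is at most $\Pr[\text{bad}]/\Pr[B=b]$, and messages $b$ with $\Pr[B=b]\le 2^{-2m_1}$ carry total mass at most $2^{-m_1}$. For typical $b$, conditioning inflates probabilities by at most $2^{2m_1}\le 2^{dk/16}$, since $k\ge 32m_1/d$. Within a fixed typical $b$, we perform the greedy selection on $\cG_b$ itself, which is a deterministic set independent of $A$. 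The choice of $(v_1,\dots,v_{k+1})$ then ranges over a net of $\cG_b^{k+1}$, and the union bound costs $d^{O(dk)}$. We must therefore make the per-tuple probability decay faster than the net size; we expect the constants (e.g. $d^{-4.5}$ versus net spacing) to be chosen exactly so that this holds with room $2^{-\Omega(dk)}$. Alternatively, the membership condition in $\cG_b$ ($\ge 2^{-d/c_1}$ conditional mass) can reduce the effective count. Finally, summing over $b$ (at most $2^{m_1}$ values, again absorbed by $k\ge 32m_1/d$) gives $\Pr[\Gamma(\cY_{B,A},\frac{1}{2d^{1.5}})>k]\le 2^{-\Omega(dk)}$.
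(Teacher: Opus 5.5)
Your proposal has a genuine gap at exactly the step you flag as the main obstacle, and it cannot be closed by choosing constants. Your argument never really uses the defining property of $\mathcal{G}_b$ (conditional mass at least $2^{-d/c_1}$). Yet the conclusion fails for a general set in its place. If that set were the whole sphere and $d'=d/2$, then $\mathcal{Y}_{b,A}$ would contain the unit sphere of $\ker A$, whose slice dimension is $d/2>k$.

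Quantitatively, your spacing $d^{-10}$, or any spacing fine enough to preserve the $d^{-4.5}$ orthogonality, gives a net of a large $\mathcal{G}_b$ with at least about $d^{4.5d}$ points per selected vector. Each new direction with orthogonal component at least $\tau=\frac{1}{2d^{1.5}}$ buys only a factor of roughly $(C\sqrt{d}\,d^{-4.5}/\tau)^{d'}=(2Cd^{-2.5})^{d'}$, which is at least $d^{-2.5d}$ up to a $2^{O(d)}$ factor since $d'\le d$. The net always wins, and the $2^{2m_1}$ inflation from conditioning on $b$ only makes it worse. Your closing remark that membership in $\mathcal{G}_b$ ``can reduce the effective count'' names the missing idea, but that reduction needs its own nontrivial argument.

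The paper supplies this as Lemma \ref{lem:prob-in-bgood}: with probability $1-2^{-\Omega(d^2/\log d)}$, a maximal $d^{-3}$-packing $\mathcal{G}_B'$ of $\mathcal{G}_B$ has at most $2^{d/100}$ points. Its proof is itself a compression argument.
\begin{itemize}
\item Averaging the $2^{-d/c_1}$ lower bound over a sub-packing of size $2^{d/100}$ shows that, with conditional probability at least $2^{-d/200}$, some $2^{d/200-1}$ packing points are simultaneously near-orthogonal to $\tilde A$.
\item A packing that large has slice dimension $\Omega(d/\log d)$ by Lemma \ref{lem:packnum-lb-linear-dim}.
\item Lemma \ref{lem:slice-dim-yields-ortho-space} then encodes a large near-orthogonal subspace with too few bits, contradicting Lemma \ref{lem:no-large-ortho-space}.
\end{itemize}
With this size bound, the paper works at the coarse scale $\epsilon'=d^{-3}$ with the relaxed condition $\|Av\|_\infty\le 3\epsilon'$, i.e., on the set $\mathcal{Y}'_{B,A}$, transferred back to $\mathcal{Y}_{B,A}$ through the $2\epsilon'$-cover. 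Each basis vector then costs only $d/100$ bits, while Lemma \ref{lem:no-large-ortho-space} tolerates up to $d'/16$ bits per dimension.

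Two smaller issues. First, inverting the greedy Gram--Schmidt system coordinatewise can lose a factor exponential in $k$. Greedy selection only keeps each $v_j$ far from the span of its predecessors; for example, $v_j=\sqrt{1-\tau^2}e_{j-1}+\tau e_j$ gives coefficients of order $\tau^{-k}$. You need either a determinant/volume bound, or the paper's maximal-volume choice, which makes every vector $\tau$-far from the span of all the others and yields $\sigma_{\min}\ge\tau/\sqrt{d}$ via Lemma \ref{lem:singular-vals}. Second, discarding messages of mass at most $2^{-2m_1}$ costs $2^{-m_1}$, which is not $2^{-\Omega(dk)}$ when $m_1\ll dk$. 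This is easily fixed with a lower threshold.
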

\vspace{-5pt}
Given the lemma, the strategy for the adversary is to mark the linear subspace $\cL$ for which $\cY_{A,B} \in \cL \msum \cB(d^{-1.5})$, and whose existence is guaranteed to exist by the slice dimension bound.
The proof of the lemma is deferred to Appendix \ref{app:ortho-set-small-width}, and the rest of this paragraph section is devoted to highlighting the key ideas and auxiliary lemmas needed to establish Lemma \ref{lem:ortho-set-small-width}. %

The high-level idea is to show that if the slice dimension were large, then one could use the function $h_1$ to encode a linear subspace, such that with non-negligible probability \textit{every} point in the subspace is  orthogonal to $A$ (note $\cG_B$ only cares about a point's marginal probability of being orthogonal). We can then show that this is not possible via the following lemma.
\begin{lemma}\label{lem:no-large-ortho-space}
Let $d'\geq d/2$, $k\in[d]$, and let $A \in \re^{d'\times d}$ 
have rows sampled as uniformly random vectors of norms $\Delta_1,...,\Delta_{d'} \in [0.5,1]$. 
Let $g:\re^{d'\times d} \mapsto \Gr([d], d)$
such that $|\mathsf{Range}(g)| \leq 2^{kd'/16}$. Then, 
$\PP_A\Big[\Dim(g(A)) \geq k ~\land~ \max\limits_{u\in g(A)\cap \cS(1)}\{\|A u\|_\infty\} \leq \sqrt{\frac{k}{16d}}\Big] \leq 2e^{-d'k/16}.$
\end{lemma}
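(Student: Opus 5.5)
The argument is a union bound over the at most $2^{kd'/16}$ subspaces in the range of $g$, combined with a bound for each fixed subspace. The point is that the event only involves the random subspace $g(A)$ through its membership in $\mathsf{Range}(g)$. If the event holds, there is some $L\in\mathsf{Range}(g)$ with $\Dim(L)\ge k$ and $\max_{u\in L\cap\cS(1)}\|Au\|_\infty\le\sqrt{k/(16d)}$. Replacing $L$ by an arbitrary fixed $k$-dimensional subspace $L'\subseteq L$, chosen once for each element of the range, only weakens the condition. So it suffices to show that for every fixed $L'\in\Gr(k,d)$,
\[
\PP_A\Big[\max_{u\in L'\cap\cS(1)}\|Au\|_\infty\le \sqrt{\tfrac{k}{16d}}\Big]\le 2e^{-d'k/8}.
\]
A union bound then gives at most $2^{kd'/16}\cdot 2e^{-d'k/8}\le 2e^{-d'k/16}$, using $\ln 2/16 < 1/16$.

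For a fixed $L'$, I would use independence of the rows. The condition says that every row satisfies $\|\proj_{L'} a_i\| = \max_{u\in L'\cap\cS(1)}|\ip{a_i}{u}|\le\sqrt{k/(16d)}$. Since $a_i=\Delta_i z_i$ with $z_i\sim\Unif(\cS(1))$ and $\Delta_i\ge 1/2$, this forces $\|\proj_{L'}z_i\|\le \tfrac{1}{2}\sqrt{k/d}$. The rows are independent, so the probability is $p^{d'}$, where $p=\Pr[\|\proj_{L'}z\|^2\le k/(4d)]$ for uniform $z$. It remains to show $p\le e^{-k/8}$, which is a lower-tail bound for the squared norm of a random projection, whose mean is $k/d$. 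This is exactly the kind of statement in the random-projection lemma (Lemma \ref{lem:random-projection}), in its lower-tail form. If that lemma only provides an upper tail, I would derive the lower tail directly. Write $z=g/\|g\|$ with $g$ Gaussian. Then $\|\proj_{L'}z\|^2 = X/(X+Y)$ with $X\sim\chi^2_k$ and $Y\sim\chi^2_{d-k}$. The event $X/(X+Y)\le k/(4d)$ requires either $X\le k/2$ or $X+Y\ge 2d$. Standard chi-square bounds give $e^{-ck}$ for the first, and $e^{-cd}\le e^{-ck}$ for the second.

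The main obstacle is pinning down the constants so that the per-row exponent really is at least $k/8$, since it has to beat the $\ln 2\cdot k/16$ coming from the union bound. For small $k$ the generic chi-square constants may be too weak. In that case I would instead compute the Beta$(k/2,(d-k)/2)$ small-ball probability directly: the density near $0$ gives $\Pr[\mathrm{Beta}\le \epsilon k/d]\lesssim (e\epsilon)^{k/2}$, and with $\epsilon=1/4$ this yields roughly $(e/4)^{k/2}\approx e^{-0.19k}$. That is comfortably below $e^{-k/8}$ per row. After multiplying over the $d'$ rows, there is even slack to absorb the leading factor of $2$ in the final bound.
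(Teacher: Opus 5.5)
Your proposal is correct and follows the same route as the paper. Both arguments take a union bound over the at most $2^{kd'/16}$ subspaces in $\mathsf{Range}(g)$, and bound each fixed subspace of dimension at least $k$ using independence of the rows and a random-projection lower tail with per-row exponent $k/8$.

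Your insistence on a one-sided per-row bound $p\le e^{-k/8}$ with no leading factor $2$ is more careful than the paper. The paper multiplies the two-sided bounds from Lemma \ref{lem:random-projection}, which read literally gives $2^{d'}e^{-d'k/8}$ rather than $2e^{-d'k/8}$. Your chi-square split at $k/2$ and $2d$ is too weak for every $k$, not only small $k$: it only yields about $e^{-0.097k}$. It is your Beta small-ball computation, giving $(e/4)^{k/2}$, that actually closes the constant.
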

Note the range condition follows when $g$ is computed using $kd'/16$ bits that depend on $A$. 
The proof is deferred to Appendix \ref{app:no-large-ortho-space}, and uses the fact that any fixed linear subspace of dimension $k$ is nearly orthogonal to $A$ with probability at most $2^{-\Omega(d'k)}$ by the properties of random projection. Then taking a union bound over all possible outputs of $g$ with large dimension shows that w.h.p. there is no subspace in the range of $g$ that satisfies the stated condition.

Lemma \ref{lem:no-large-ortho-space} is not sufficient in itself to establish Lemma \ref{lem:ortho-set-small-width}. Even if there is a large set of points in $\cG_B$ which are orthogonal to $A$, identifying which points these are requires additional access to $A$.
To bridge this gap, we have the following lemma, which shows that if a set of points nearly orthogonal to $A$ has large slice dimension, then one can find a linear subspace which is entirely orthogonal to $A$ by specifying some number of vectors from the set. 
\begin{lemma} \label{lem:slice-dim-yields-ortho-space}
Let $\gamma > 0$, $\hat{d}\in[d]$ and $\cG\subseteq \re^d$.
Then there exists a function $f_{\cG}:\re^{d'\times d} \mapsto \Gr(\hat{d},d)$ such that $\log(|\Range(f_{\cG})|) \leq \hat{d}\log(|\cG|)$ and
whenever there exists $\cY\subseteq \cG$ such that 
$\forall v\in\cY: \|Av\|_\infty \leq \gamma$, and
$\Gamma(\cY,\tau)\geq\hat{d}$, then
$\max_{v\in f_\cG(A)}\{\|Av\|_\infty\} \leq \|v\|\frac{\gamma d}{\tau}.$
\end{lemma}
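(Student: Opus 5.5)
The function $f_{\cG}$ will be built by a greedy selection of $\hat d$ vectors from $\cG$ whose span has a well-conditioned basis. We then use the smallness of $\|Av\|_\infty$ on each chosen vector, together with that conditioning, to bound $\|Au\|_\infty$ for every $u$ in the span. Because $f_\cG(A)$ is determined by an ordered tuple of $\hat d$ elements of $\cG$, the range bound $\log|\Range(f_\cG)|\le \hat d\log|\cG|$ is automatic; we may take $\cG$ finite, since otherwise the bound is vacuous. Also, $f_\cG$ is allowed to depend on $A$ arbitrarily, so "whenever such a $\cY$ exists" lets us pick $\cY=\cY(A)$ first and then choose vectors from it. When no such $\cY$ exists, $f_\cG(A)$ is an arbitrary fixed $\hat d$-dimensional subspace.

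\textbf{Greedy construction.} Suppose $\cY\subseteq\cG$ satisfies $\|Av\|_\infty\le\gamma$ for all $v\in\cY$ and $\Gamma(\cY,\tau)\ge\hat d$. Choose $v_1,\dots,v_{\hat d}\in\cY$ inductively. Having chosen $v_1,\dots,v_{j-1}$ with $j\le\hat d$, the subspace $\cL_{j-1}=\Span(v_1,\dots,v_{j-1})$ has dimension at most $j-1<\hat d\le\Gamma(\cY,\tau)$. By Definition \ref{def:slice-dim}, $\cY\not\subseteq\cL_{j-1}\msum\cB(\tau)$, so some $v_j\in\cY$ satisfies $\|\proj^\perp_{\cL_{j-1}}v_j\|>\tau$. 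Set $e_j=\proj^\perp_{\cL_{j-1}}v_j/\|\proj^\perp_{\cL_{j-1}}v_j\|$. The vectors $e_1,\dots,e_{\hat d}$ are exactly the Gram--Schmidt orthonormalization of the $v_j$, so $f_\cG(A)=\Span(v_{1:\hat d})$ has dimension exactly $\hat d$. We may assume $\cG\subseteq\cS(1)$, or at least that $\|v\|\le1$ on $\cG$; this is the case in the application, where $\cG_b\subseteq\cS(1)$.

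\textbf{Bounding $A$ on the span.} Write $V$ for the matrix with columns $v_j$ and $E$ for the matrix with columns $e_j$. Then $V=ER$ with $R$ upper triangular and $R_{jj}=\|\proj^\perp_{\cL_{j-1}}v_j\|>\tau$. Off-diagonal entries satisfy $|R_{ij}|=|\langle e_i,v_j\rangle|\le1$. Any $u$ in the span can be written $u=Ec$ with $\|c\|=\|u\|$, so $u=Vx$ with $x=R^{-1}c$, and
\[
\|Au\|_\infty\le \sum_j |x_j|\,\|Av_j\|_\infty\le\gamma\|x\|_1\le\gamma\sqrt{\hat d}\,\|R^{-1}\|\,\|u\|.
\]
The main obstacle is bounding $\|R^{-1}\|$. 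A triangular matrix with diagonal entries at least $\tau$ and off-diagonal entries of size at most $1$ can have an inverse that is exponentially large in $\hat d$, so naive back-substitution is not enough.

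I would avoid this by strengthening the greedy rule. At step $j$, choose the $v_j\in\cY$ maximizing $\|\proj^\perp_{\cL_{j-1}}v_j\|$; the maximum is attained since $\cG$ is finite. Alternatively, choose the $\hat d$-tuple from $\cY$ maximizing the volume $\det(V^\top V)^{1/2}$, in the style of a maximal-volume or rank-revealing selection. With the volume-maximizing choice, Cramer's rule expresses each coefficient $x_j$ as a ratio of volumes. If the direction of $u$ can be approximated by elements of $\cY$ (which is the case in the application), then each $|x_j|\le1$ up to normalization.

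In general, I would instead directly bound the dual vectors. Let $w_j$ be the component of $e_j$ orthogonal to the other $v_i$. Maximality of volume forces every $v\in\cY$ to have $\|\proj^\perp_{\Span(v_{\neq j})}v\|\le\|\proj^\perp_{\Span(v_{\neq j})}v_j\|$. Combined with $\Gamma(\cY,\tau)\ge\hat d$, this gives $\|\proj^\perp_{\Span(v_{\neq j})}v_j\|\ge\tau/\sqrt{\hat d}$ or similar. Hence $|x_j|\le\|u\|\sqrt{\hat d}/\tau$, and summing over $j$ gives $\|Au\|_\infty\le\gamma\hat d\|u\|/\tau\le\gamma d\|u\|/\tau$, which is the bound stated in Lemma \ref{lem:slice-dim-yields-ortho-space}. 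Pinning down this lower bound on the volume-maximizing distances from the slice-dimension hypothesis, rather than settling for a weaker $\tau/\hat d$ factor, is the step I expect to require the most care.
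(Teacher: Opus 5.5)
Your final route is the paper's: take the $\hat d$-tuple from $\cY$ of maximal volume, and use maximality to get $\|\proj^\perp_{\Span(v_{\neq j})}v\|\le r_j:=\|\proj^\perp_{\Span(v_{\neq j})}v_j\|$ for all $v\in\cY$. You are also right that the greedy Gram--Schmidt choice alone does not control $\|R^{-1}\|$. However, you leave the decisive step open, and the version you write down does not yield the lemma. You hedge that the slice-dimension hypothesis gives $r_j\ge\tau/\sqrt{\hat d}$ ``or similar''. From that you get $|x_j|\le\sqrt{\hat d}\,\|u\|/\tau$, and summing over the $\hat d$ coordinates gives $\|x\|_1\le\hat d^{3/2}\|u\|/\tau$, not $\hat d\|u\|/\tau$ as you state. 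Since $\hat d^{3/2}$ can exceed $d$, this does not prove $\|Au\|_\infty\le\gamma d\|u\|/\tau$.

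The missing observation is a one-liner and costs nothing. The maximality inequality says exactly that $\cY\subseteq\Span(v_{\neq j})\msum\cB(r_j)$, where $\Span(v_{\neq j})$ has dimension at most $\hat d-1$. If $r_j\le\tau$, then $\Gamma(\cY,\tau)\le\hat d-1$, contradicting $\Gamma(\cY,\tau)\ge\hat d$. Hence $r_j>\tau$ for every $j$. The same hypothesis shows $\cY$ is not contained in any subspace of dimension $<\hat d$, so the maximal volume is positive and the base-times-height division is legitimate.

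Now set $w_j=\proj^\perp_{\Span(v_{\neq j})}v_j/r_j^2$, so that $\ip{w_j}{v_i}=\delta_{ij}$ and $\|w_j\|=1/r_j$. Every $u=\sum_j x_jv_j$ in the span then has $|x_j|=|\ip{u}{w_j}|\le\|u\|/r_j<\|u\|/\tau$. Hence $\|Au\|_\infty\le\gamma\|x\|_1\le\gamma\hat d\|u\|/\tau\le\gamma d\|u\|/\tau$.

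The paper instead feeds $r_j>\tau$ into Lemmas~\ref{lem:singular-vals} and~\ref{lem:l1-coeff-norm}, getting $\sigma_{\min}\ge\tau/\sqrt d$ and then $\|x\|_1\le\sqrt d\|u\|/\sigma_{\min}\le d\|u\|/\tau$. The dual-vector bound above is slightly more direct. The greedy construction, the unit-norm assumption on $\cG$, and the Cramer's-rule heuristic can all be dropped.
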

We defer the proof to Appendix \ref{app:slice-dim-yields-ortho-space}. Roughly, the proof proceeds by choosing $\hat{d}$ vectors from $\cY$ that span a parallelepiped of maximal volume, and having $f(A)$ be the span of these vectors. We use the large slice dimension assumption to guarantee that the selected vectors form a ``good basis'', i.e., each of them has a substantial component orthogonal to the other vectors. Via a linear algebraic argument, this then guarantees that for any vector $v$ in their span that $\|Av\|_\infty$ is appropriately bounded.

Using the above lemma, if the slice dimension of $\cY_{B,A}$ is $k$, we can map to a $k$-dimensional linear subspace $\cL$ such that every point in the space is nearly orthogonal to $A$. This process uses $m_1 +k\log(|\cG_B|)$ bits, where $m_1$ bits are used to specify $\cG_B$, and the rest are used to select appropriate vectors in $\cG_B$ which determine the subspace. 
The final piece of the proof is showing that $\cG_B$ (or rather, an appropriate packing-cover of $\cG_B$) is not too big, so that the number of bits needed to specify $\cL$ can be appropriately bounded and we can derive a contradiction from Lemma~\ref{lem:no-large-ortho-space}.
Proving the bound on (the packing-cover of) $\cG_B$ uses the machinery developed in Lemmas \ref{lem:no-large-ortho-space} and \ref{lem:slice-dim-yields-ortho-space}. The main difference is that instead of explicitly considering the case where the slice dimension is large, we use the fact that if the cover is very large, then with non-negligible probability many points in it will be nearly orthogonal to $A$, which follows from the way $\cG_B$ is defined. Then, we can use the fact that $\cG_B$ is also a packing to bound its slice dimension from below.

\paragraph{Bounding the impact of additional information. }
The preceding section established that w.h.p. the marked subspace can be chosen such that there are no winning points in the stronger posterior set $\cG_B$. 
We introduce two more lemmas which we will use to bound the change in the weak posterior set after $\cP$ receives additional information.
The hint can be handled nicely with the following lemma, which shows that few bits do not update the posterior much. 
The proof is in Appendix \ref{app:bits-bound-posterior}.
\begin{lemma}\label{lem:bits-bound-posterior}
Let $A,B$ be random variables where $\Supp(B)=\bit^s$, $s\in\cZ^+$. Then for any $\delta \in [0,1]$,
$\pr{B}{\exists \cE \subseteq \Supp(A): \pr{}{A\in\cE | B} \leq 2^{s+\log(1/\delta)}\pr{}{A\in\cE}} \leq \delta.$
\end{lemma}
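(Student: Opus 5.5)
The plan is to reduce the statement to a single fact about the realized value of $B$, which controls every event $\cE$ at once. I read the lemma in its intended direction: with probability at least $1-\delta$ over $B$, every $\cE\subseteq\Supp(A)$ satisfies $\pr{}{A\in\cE\mid B}\leq 2^{s+\log(1/\delta)}\pr{}{A\in\cE}$. Equivalently, the event that some $\cE$ violates this inequality has probability at most $\delta$. The inequality as displayed, with ``$\leq$'' inside the probability, appears to have its direction flipped, since that event holds for typical $B$ (for instance with $\cE=\emptyset$). The argument below bounds the violation event.

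\emph{Step 1 (pointwise bound).} Fix $b\in\bit^s$ with $\pr{}{B=b}>0$. For every measurable $\cE\subseteq\Supp(A)$, Bayes' rule gives
\[
\pr{}{A\in\cE\mid B=b}=\frac{\pr{}{A\in\cE,\,B=b}}{\pr{}{B=b}}\leq \frac{\pr{}{A\in\cE}}{\pr{}{B=b}} .
\]
This bound holds uniformly over all $\cE$. As a result, the existential quantifier over $\cE$ costs nothing, and no union bound over events is needed. So whenever $\pr{}{B=b}\geq \delta 2^{-s}$, every $\cE$ satisfies $\pr{}{A\in\cE\mid B=b}\leq 2^{s}\delta^{-1}\pr{}{A\in\cE}=2^{s+\log(1/\delta)}\pr{}{A\in\cE}$.

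\emph{Step 2 (light values are rare).} Let $\mathcal{N}=\{b\in\bit^s:\pr{}{B=b}<\delta 2^{-s}\}$ be the set of ``light'' messages. Since $|\mathcal{N}|\leq 2^s$, we get $\pr{}{B\in\mathcal{N}}\leq 2^s\cdot\delta 2^{-s}=\delta$. By Step 1, the violation event is contained in $\{B\in\mathcal{N}\}$, which concludes the proof. The case $\delta=0$ is trivial, because then $\mathcal{N}$ consists only of values with $\pr{}{B=b}=0$, which $B$ takes with probability zero.

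No step is a genuine obstacle. The only care needed is measure-theoretic: $A$ may be continuous (a real matrix), so conditioning should be read as conditioning on the discrete event $\{B=b\}$. This is well defined exactly for $\pr{}{B=b}>0$, and the values with $\pr{}{B=b}=0$ are absorbed into $\mathcal{N}$. In the application to the MSGH, $A$ is the matrix, $B$ is the hint $q$ with $s=m_2$, and $\delta=d^{-3}$. The lemma then shows that the weak posterior bound $2^{-d/c_1}$ defining $\cG_b$ degrades by at most a factor $2^{m_2+3\log d}$, which is still $2^{-\Omega(d)}$ when $m_2\leq d/C_2$.
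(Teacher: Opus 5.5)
Your proof is correct and is essentially the paper's argument. The paper calls $b$ bad if some $\cE$ has $\Pr[B=b\mid A\in\cE]/\Pr[B=b]\geq 2^s/\delta$, uses $\Pr[B=b\mid A\in\cE]\leq 1$ to conclude that every bad $b$ has $\Pr[B=b]\leq \delta 2^{-s}$, and union-bounds over at most $2^s$ such values, which is exactly your Steps 1--2. Your direct bound $\Pr[A\in\cE, B=b]\leq \Pr[A\in\cE]$ has the minor advantage of never conditioning on a possibly null event $\{A\in\cE\}$.

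You are also right that the inequality inside the displayed probability must be read as the violation $\Pr[A\in\cE\mid B]>2^{s+\log(1/\delta)}\Pr[A\in\cE]$. That is how the paper's proof, and its later use of the lemma, treats it.
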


Finally, we need to handle the fact that the row queries may leak information about $\tilde{A}$. For example, observe that on receiving row $a_{i_t}$ on query $(u_t,\eta_t)$, $\cP$ at least learns $|\ip{a_j}{u_t}|-\eta_{t,j} \leq |\ip{a_{i_t}}{u_t}|-\eta_{t,i_t}$ for all $j\neq i_t$. The proof of the following lemma can be found in Appendix \ref{app:G-updates}. %
\begin{lemma} \label{lem:G-updates}
Let $\cG(\beta, t) = \{u\in\cS(1): \PP[\|\tilde{A} u\|_\infty \leq \frac{1}{d^{4.5}}] \geq \beta \}$, where the probability is taken with respect to the conditional distribution of $A$ given the realizations of $B$, $q$, $u_{\leq t}$, and $\hat{a}_{\leq t}$. Then for any $\beta_0\in[0,1]$, 
with probability at least $1-\frac{1}{d^3}$ (w.r.t. $A$) it holds $\forall t\in [T]$ that
$\cG(2^{4t\log(d)}\beta_0, t) \subseteq \cG(\beta_0, 0).$
\end{lemma}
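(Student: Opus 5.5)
Fix the realizations of $B$ and $q$, and let $\PP_t$ denote the conditional law of $A$ given $B,q,u_{\le t},\hat a_{\le t}$. Write $E_u=\{\|\tilde A u\|_\infty\le d^{-4.5}\}$. Note that $\tilde A$ itself shrinks as rows are revealed; we will handle this by working with the final $\tilde A$ throughout. The plan is to control, step by step, how much a single query response can inflate the posterior probability of $E_u$, and then to chain these per-step bounds over $t$.

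\textbf{Per-step bound.} The key observation is that the response $\hat a_t$ carries two pieces of information: the index $i_t\in[d']$ and the vector $a_{i_t}$ itself. We discretize the continuous value $\ip{u_t}{a_{i_t}}-\eta_{t,i_t}$ to precision $\mathrm{poly}(d)^{-1}$ within its bounded range. Since $u_t$ is a unit vector and the row norms are at most $1$, this value lies in a bounded interval, so the discretization has $\mathrm{poly}(d)$ possible outcomes. The same bounded-range fact is what makes the discretization legitimate. The event "$i_t=i$ and the threshold equals $\theta$" is then a product-type event on the rows: row $i$ attains $\theta$, and every other row $j$ satisfies $\ip{u_t}{a_j}-\eta_{t,j}\le\theta$.

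Conditioning on the returned row $a_{i_t}$ removes that row from $\tilde A$. Since the rows are independent, this does not affect the law of the other rows beyond the threshold constraint. The threshold constraint, in turn, is an event determined by the discretized response, which takes $\le d^{c}$ values. We then apply Lemma \ref{lem:bits-bound-posterior} with $s=O(\log d)$ bits and $\delta=d^{-4}/T$. Over the randomness of $A$, this gives, simultaneously for all events $E_u$,
\[
\PP_t[E_u]\le 2^{s+\log(1/\delta)}\,\PP_{t-1}[E_u]\le 2^{4\log d}\,\PP_{t-1}[E_u].
\]
Here we need the constants to be chosen so that $s+\log(1/\delta)\le 4\log d$ (possibly after adjusting the discretization). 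Lemma \ref{lem:bits-bound-posterior} handles all events $\cE$ at once, which is essential because $u$ ranges over the whole sphere.

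\textbf{Chaining and union bound.} Iterating the per-step bound gives $\PP_t[E_u]\le 2^{4t\log d}\PP_0[E_u]$. Consequently, if $u\in\cG(2^{4t\log d}\beta_0,t)$ then $\PP_0[E_u]\ge\beta_0$, i.e.\ $u\in\cG(\beta_0,0)$, which is the claimed inclusion. A union bound over the $T$ steps, each failing with probability $\delta$, gives total failure probability at most $d^{-4}\le d^{-3}$.

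\textbf{Main obstacle.} The hardest step is rigorously justifying the treatment of the continuous revealed row. Revealing $a_{i_t}$ is infinitely many bits, so Lemma \ref{lem:bits-bound-posterior} cannot be applied to it directly. The argument instead relies on the independence of rows: the revealed row is excluded from $\tilde A$, and only its index and the discretized threshold leak information about the remaining rows. The discretization error must be absorbed, e.g.\ by slightly perturbing the threshold using continuity of the distribution, or by noting that threshold events are monotone so rounding up only enlarges the conditioning event by a controlled factor.
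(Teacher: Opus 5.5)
\textbf{The per-step bound is not established.} The step you flag as the main obstacle is exactly where the paper uses a different idea. Once $a_{i_t}$ is revealed, the player knows $\tau_t=\ip{u_t}{a_{i_t}}-\eta_{t,i_t}$ exactly. So, granting the product structure you assume, the posterior of $\tilde A$ is $\PP[\cdot\mid\cH_t]$ restricted to $\cA_t(\tau_t)$, the set of realizations all of whose rows satisfy $\ip{u_t}{a_j}-\eta_{t,j}\le\tau_t$. This is not an event determined by your discretized response $D$. Lemma~\ref{lem:bits-bound-posterior} controls $\PP[\cE\mid D]$. Passing to the exact conditioning would require comparing $\PP[\tilde A\in\cA_t(\tau_t)]$ with $\PP[\tilde A\in\cA_t(\theta+\epsilon)]$ for the bin $[\theta,\theta+\epsilon)$ containing $\tau_t$. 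That is anti-concentration of the maximum under a posterior that conditions on arbitrary functions $B,q$ of $A$, and nothing in your setup gives it. Your remark that ``rounding only enlarges the event by a controlled factor'' is precisely the unproved claim.

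\textbf{What the paper does instead.} It needs no discretization. By Bayes, $\PP[\cE\mid\cH_{t+1}]\le\PP[\cE\mid\cH_t]/\PP[\tilde A\in\cA_t(\tau_t)\mid\cH_t]$, uniformly in $\cE$. Since $\tau_t$ is a maximum over all rows, $\tau_t\le\tau^*$ forces $\tilde A\in\cA_t(\tau^*)$, where $\tau^*$ is the level at which $\PP[\tilde A\in\cA_t(\cdot)\mid\cH_t]$ reaches $d^{-4}$. Hence $\PP[\tau_t\le\tau^*]\le d^{-4}$, and otherwise the denominator is at least $d^{-4}$. This quantile argument for the nested family $\cA_t(\cdot)$ gives exactly the factor $2^{4\log d}$ with per-step failure $d^{-4}$, and the index $i_t$ costs nothing.

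\textbf{Your constants cannot reach the statement.} With $\delta=d^{-4}/T$ you already have $\log(1/\delta)>4\log d$, and $s\ge\log d'$ for the index alone. So no adjustment of the discretization gives $s+\log(1/\delta)\le 4\log d$. At best you get $2^{Ct\log d}$ with $C>4$. That would suffice downstream after enlarging $C_1$, but it is not the lemma as stated.

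\textbf{The independence claim fails.} The sentence ``since the rows are independent'' is false under $\PP_{t-1}$, which conditions on $B$ and $q$. Revealing $a_{i_t}$ can change the law of $\tilde A$ far more than the threshold does. For example, let $h_1(A)$ be a $d^{-6}$-rounding of $(a_2,a_3,a_4)$ XORed with far-out binary digits of $a_1$; this uses $O(d\log d)$ bits.
\begin{itemize}
\item Given $B$ alone, every $u$ has essentially the common prior probability of near-orthogonality.
\item Force $i_1=1$ via $\eta_{1,1}=-10$ and $\eta_{1,j}=10$ for $j\ne 1$, so the threshold constraint is vacuous. This reveals $a_2,a_3,a_4$ to within $d^{-5.5}$.
\item For unit $u$ orthogonal to their roundings, the posterior then grows by a factor of order $d^{12}$.
\item Taking $\beta_0$ just above the common prior value makes the inclusion fail at $t=1$.
\end{itemize}
The paper's proof makes the same move: it asserts that the posterior of $\tilde A$ given $\hat a_t,\tau_t$ is a renormalization over $\cA_t(\tau_t)$ that ``does not depend on $\hat a_t$''. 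So this is not something you could have recovered from the paper, but it is a genuine hole that neither argument closes without further restrictions.
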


\paragraph{Concluding the MSGH proof. } We can now prove the difficulty of winning the MSGH. 
\begin{proof}[Proof of Proposition \ref{prop:msgh-bound}]
By Lemma \ref{lem:ortho-set-small-width}, we have 
$\pr{A}{\Gamma(\cY_{B,A}, \frac{1}{d^{1.5}}) \leq k} \geq 1-2^{-\Omega(dk)}$. Conditional on this event, the adversary picks the linear subspace, $\cL$, such that $\cY_{B,A} \subset \cL \msum \cB(d^{-1.5})$, which exists by the definition of the slice dimension. Since $\cY_{B,A}$ is the set of points in $\cG_B$ which satisfy near orthogonality, this implies that no point in $\cG_B$ wins the MSGH. What remains is to show $\cP$ cannot win via a point in $\cS(1) \setminus \cG_B$. 
Recall $u\in\cG_B$ if $\bigpr{A|B=b}{\|\tilde{A} u\|_\infty \leq \frac{1}{d^{4.5}}} \geq 2^{-d/c_1}$.
Now assuming $C_2$ is large enough such that $m_2 \leq \frac{d}{3c_1}$,
Lemma \ref{lem:bits-bound-posterior} implies that for any $\delta \in[0,1]$, with probability at least $1-\delta$ under the draw of $q$ we have,
$\forall v\in \cS(1): \PP_{A|B,q}[\|\tilde{A} v\|_\infty \leq \frac{1}{d^{4.5}}] \leq 2^{\frac{d}{3c_1} + \log(1/\delta)}\PP_{A|B}[\|\tilde{A} v\|_\infty \leq \frac{1}{d^{4.5}}].$
Setting $\delta = 2^{-\frac{d}{3c_1}}$ we get with high probability
that $\forall v\in \cG_{B}: \pr{A|B,q}{\|\tilde{A} v\|_\infty \leq \frac{1}{d^{4.5}}} \leq 2^{\frac{2d}{3c_1} - \frac{d}{c_1}} \leq 2^{-\frac{d}{3c_1}}$. Thus
$\cG(2^{-\frac{d}{3c_1}}, 0) \subseteq \cG_{B}$.

We now apply Lemma \ref{lem:G-updates}, which implies 
$\cG(2^{{4t\log(d)-\frac{d}{3c_1}}}, t) \subseteq \cG(2^{-\frac{d}{3c_1}},0) \subseteq \cG_B$,
with probability at least $1-O(1/d^3)$. Under this event, for any $t\in[T]$, $u_t \notin \cG_B \implies u_t \notin \cG(2^{{4t\log(d)-\frac{d}{3c_1}}}, t)$ and thus by the definition of $\cG(2^{{4t\log(d)-\frac{d}{3c_1}}},t)$, $u_t$ does not satisfy near orthogonality with probability more than $2^{{4t\log(d)-\frac{d}{3c_1}}} = 2^{-\Omega(d)}$ provided $C_1$ is large enough. Taking a union bound over all queries, we obtain w.p. at least $1-O(1/d^3) - d 2^{-\Omega(d)} = 1-O(1/d^3)$ that $\cP$ does not win the MSGH. 
\end{proof}

\section{Lower Bound for Deterministic Algorithms} \label{sec:deterministic}

For deterministic optimizers, we give the following lower bound. We ignore the case where $m = o(d\log(d))$ since in this regime the problem is not solvable with any query budget \cite[Theorem 5]{woodworth19-opt-with-limited-memory}.
\begin{theorem}  \label{thm:detalg-lb}
Let $C_3$ be a universal constant, and assume $d$ is larger than some constant.
Let
$m \in [d\log(d), \frac{d^2}{128\log(d)}]$,
$n \leq \frac{1}{C_3\log(d)}\min\{d^{1.6}, \frac{d^{8/3}}{m^{2/3}}\}$
and $\opt \in \cM_{\mathsf{det}}(m,n)$.
Then there exists a $1$-Lipschitz convex loss function such that $\opt$ does not output a solution, $\widehat{w}\in\cB(1)$, satisfying
$F(\widehat{w})-\min_{w\in\cB(1)}\{F(w)\} \leq \frac{1}{4d^{6.5}}$.
\end{theorem}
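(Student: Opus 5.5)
We build a hard instance of the form $F(w)=\max\{\max_{j\in[N]}\{\ip{x_j}{w}-j\gamma\},\ \eta(\|Aw\|_\infty-\rho)\}$, where $A\in\re^{d'\times d}$ has $d'=d/2$ rows drawn uniformly from spheres of radius in $[0.5,1]$. We set $k=\Theta(\max\{m/d,\log d\})$ and choose $\rho$ at the near-orthogonality scale $d^{-4.5}$ of Proposition \ref{prop:msgh-bound}. The Nemirovski vectors $x_1,\dots,x_N$ are not fixed in advance: because $\opt$ is deterministic, we sample them adaptively. The target is $N\approx\min\{d^{3/5},d/k^{2/3}\}$ vectors, with $\Omega(d/\log d)$ queries charged per vector, which gives the claimed $\tilde\Omega(\min\{d^{1.6},d^{8/3}/m^{2/3}\})$ bound.

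\textbf{Adaptive sampling.} We simulate $\opt$ and reveal $x_j$ only once some query achieves non-trivial correlation with the current "frontier." At that moment we draw $x_{j+1}$ uniformly from the unit sphere of $\Span^\perp$ of all subspaces marked so far, together with $x_{\le j}$. We partition the run into epochs of length $T=d/(C_1\log d)$.

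\textbf{Reduction to the MSGH, one epoch at a time.} The $m$-bit memory state at the start of an epoch plays the role of the message $h_1$, with $m_1=m\le d^2/(64\log d)$. The adversary's marked subspace $\cL$ comes from Proposition \ref{prop:msgh-bound}.

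The information about $A$ carried by the Nemirovski vectors sampled during the epoch must be supplied through the hint $q$. Each $x_{j+1}$ depends on $A$ only through the marked subspaces, so it costs bits describing the relevant marked subspaces and the current index. We need the total to stay within $m_2\le d/C_2$. Achieving this requires discretizing $\cL$ to a net and limiting how many new vectors, and hence new marks, can occur in one epoch.

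Barrier queries, which return $\mathrm{sign}\cdot a_i$, are exactly the row queries of the game, with offsets $\eta$ encoding $\rho$ and the signs (we add rows $-a_i$). Function values and Nemirovski gradients can be computed from the hint and the memory. Proposition \ref{prop:msgh-bound} then says that, with probability $1-O(d^{-3})$, every query in the epoch that avoids the barrier lies within $d^{-1.5}$ of $\cL$.

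\textbf{Why each epoch finds $O(1)$ new vectors.} Since $x_{j+1}$ is chosen orthogonal to previously marked subspaces, a query that lies near the current $\cL$ has correlation with $x_{j+2}$ of order $\sqrt{k/d}$ or $d^{-1.5}$ plus leakage terms. We choose $\gamma$ larger than this, so the maximum term is attained at $x_{j+1}$ and $\opt$ cannot skip ahead. This caps progress at a constant number of vectors per epoch of $\Omega(d/\log d)$ queries, and gives $n\gtrsim Nd/\log d$.

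\textbf{Final accounting.} The constraints are:
\begin{itemize}
\item $\gamma\cdot N\le$ the gap between the barrier and the Nemirovski minimum, which limits $N$;
\item the accumulated marked subspaces have total dimension $Nk\le d/2$;
\item the hint size fits within $m_2$.
\end{itemize}
Any algorithm that stops early outputs $\widehat w$ whose value exceeds the optimum by at least $\gamma\approx 1/(4d^{6.5})$, since the optimum is at least as low as $-N\gamma$. A union bound over the epochs handles the failure probabilities, and Yao's principle is unnecessary because $\opt$ is deterministic.

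\textbf{Main obstacle.} The hardest step is controlling leakage. Past Nemirovski vectors reveal information about $A$ through their orthogonality to earlier marked subspaces, and this must fit in the $O(d)$-bit hint budget. Balancing the description cost of the marks against $\gamma$ and $N$ is where the exponents $3/5$ and $2/3$ emerge. My first attempt would be to quantize each marked subspace to precision $d^{-O(1)}$, costing $\tilde O(dk)$ bits, and carry only the marks of the last few epochs in the hint. Older marks would be absorbed into a conditioning argument via Lemma \ref{lem:bits-bound-posterior}. If that leaves too much leakage, I would reduce the number of vectors per epoch, which is the source of the $d^{1.6}$ cap.
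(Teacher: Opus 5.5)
Your skeleton matches the paper's: fixed-length epochs, the start-of-epoch memory state as $h_1$, and later Nemirovski vectors drawn orthogonal to the marked subspaces so that barrier-avoiding queries are $d^{-1.5}$-orthogonal to them. The gap is the information accounting, which is the crux, and it does not close. The MSGH player must simulate the oracle throughout the epoch, so it needs every Nemirovski vector revealed so far, and these depend on $A$ through the marks. You set $m_1=m$, so past vectors are not budgeted at all. Shipping quantized marks through the hint costs $\tilde\Theta(dk)$ bits per mark, against a hint budget $m_2\le d/C_2$; even describing one fresh vector in $\Theta(d)$ dimensions at precision $d^{-1.5}$ directly costs $\Theta(d\log d)$ bits. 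Lemma~\ref{lem:bits-bound-posterior} cannot absorb this, because it inflates posteriors by $2^{s}$, which for $s\gg d$ swamps the $2^{-d/c_1}$ threshold defining $\mathcal{G}_B$. Moving the marks into $h_1$ instead violates $k\ge 32m_1/d$.

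The missing idea is the paper's ``wrapping''. Independently of $A$, draw $\mathcal{K}_{j,l}\sim\mathrm{Unif}(\mathsf{Gr}(3kl,d))$ and a minimal $d^{-1.5}$-cover $\mathcal{V}_{j,l}$ of $\mathcal{K}_{j,l}\cap\mathcal{B}(1)$. Then sample $\hat x_{j,l}$ uniformly from $\mathcal{K}_{j,l}\cap\mathcal{L}_{j,l}^\perp\cap\mathcal{S}(1)$ and round it to $\mathcal{V}_{j,l}$. Marginally $\hat x_{j,l}$ is still uniform on $\mathcal{L}_{j,l}^\perp\cap\mathcal{S}(1)$, but $x_{j,l}$ is pinned down by only $3kl\log(3d^{1.5})$ bits that depend on $A$. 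This is what lets the current vector serve as the hint and all past vectors fit in $m$ extra message bits ($m_1=2m$, dimension threshold $2k$).

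Even granting that fix, a single accumulating chain cannot reach your target $N\approx d/k^{2/3}$. In the main regime $k\ge d^{3/5}$, your own constraint $Nk\le d/2$ fails, since then $Nk\approx dk^{1/3}$. Moreover, with wrapping the $\ell$-th vector in a chain costs $\Theta(k\ell\log d)$ hint bits. So a chain holds at most $L=\Theta(d/(k\log d))$ vectors, which gives only $\tilde\Omega(d^3/m)$ queries.

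The paper instead uses a two-level function $\max_{j,l}\{\langle w,x_{j,l}\rangle-j\gamma_1-l\gamma_2\}$, with $N=k^{1/3}/32$ blocks of $L=d/(32C_2k\log d)$ rounds each, and the marks reset at every block. Within a block, vectors are separated by $\gamma_2=5d^{-1.5}$ via orthogonality to the marks. Later blocks are separated by $\gamma_1=11\sqrt{\log d/d}$ via plain concentration, since their vectors are uniform in at least $d/2$ dimensions. Comparing $N\gamma_1$ with the minimizer value $\approx 1/\sqrt{NL}$ (with $\rho=1/\sqrt{NL}$) is what caps $N$ at $\Theta(k^{1/3})$. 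The $d^{1.6}$ cap comes from needing $3kNL^2\log(3d^{1.5})\le m$, i.e.\ $k\ge d^{3/5}$, not from fewer vectors per epoch.

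Your endgame needs redoing accordingly. The gap is not $\gamma$. It is the difference between $F(w^*)\le -\sqrt{\log d}\,k^{1/3}/\sqrt d$ and $F(\widehat w)\ge -N\gamma_1-L\gamma_2-2\sqrt{\log d/d}$ on a $4d^6$-Lipschitz instance. This exceeds $\alpha=1/\sqrt d$ and rescales to $1/(4d^{6.5})$.

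Correlation-triggered revealing is also unnecessary. In round $(j,l)$ the resisting oracle already contains $x_{j,l}$, and barrier-avoiding queries lie near $\mathcal{L}'_{j,l}\subseteq\mathcal{L}_{j,i}$. Hence they are nearly orthogonal to $x_{j,i}$ for all $i\ge l$.
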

We present the proof subsequently. The first step is to construct the hard objective adaptively. We will do this using a ``resisting oracle'' which will only reveal information about certain pieces of the loss function after many queries have been made. The main crux of the proof is then showing that this resisting oracle gives the same responses the true oracle would. We do this by showing that to discover new pieces of the loss function, the optimizer must essentially play the MSGH. Finally, if the optimizer does not make sufficient progress on the objective function, we show this leads to a suboptimal solution.

\subsection{Proof of Theorem \ref{thm:detalg-lb}}
\paragraph{Hard instance. }
By standard reductions, it suffices to show a lower bound for algorithms which achieve $\alpha=\frac{1}{\sqrt{d}}$ accuracy on $4d^{6}$-Lipschitz loss functions.
As we consider deterministic algorithms, the exact loss construction will depend on the queries made by $\opt$.
We first give only a general structure for the family of losses we consider.
In the following, $C_1,C_2$ are as in Proposition \ref{prop:msgh-bound}.
Further, we assume that $m \geq d^{8/5}$. Our result straightforwardly extends to lower memory regimes, but does not get stronger as $m$ decreases below $d^{8/5}$.
Let $k= \frac{32m}{d}$. Note $k \geq 32d^{3/5}$ by the assumption on $m$.
Let $\gamma_1 = 11\sqrt{\frac{\log(d)}{d}}$, $\gamma_2=\frac{5}{d^{1.5}}$, 
$N = k^{1/3}/32$, $L = \frac{d}{32C_2 k\log(d)}$ and $\rho = \frac{1}{ \sqrt{NL}}$.
For some set of vectors $x_{1:N,1:L} \in (\re^d)^{N\times L}$, we define the objective,
\begin{align}\label{eq:detalg-objective}
    F(w) = \max\big\{\underbrace{\max_{j\in[N],l\in [L]}\bc{\ip{w}{x_{j,l}} - j\gamma_1 - l\gamma_2}}_{\text{Nemirovski: } \nem(w)},~ \underbrace{4d^{6}\|A w\|_\infty - \rho}_{\text{barrier: } \barr(w)} \big\}.
\end{align}

\begin{algorithm}[h]
\caption{Optimization process with adversarially chosen objective}
\label{alg:detalg-opt}
\begin{algorithmic}[1]
\STATE $\forall j\in[N],l\in[L]: ~\cK_{j,l} \sim \Unif\big(\Gr(3kl, d)\big)$, and  $\cV_{j,l}$  a minimal $(d^{-1.5})$-cover of $\cK_{j,l}\cap \cB(1)$ 
\STATE Sample $A \in \re^{d'\times d}$ as matrix of $d'=d/2$ uniformly random unit vectors 
\STATE Round length $T = \frac{d}{C_1 \log(d)}$ ~~\textit{(recall $C_1$ is as in Proposition \ref{prop:msgh-bound})} \label{line:start-of-loops}
\FOR{$j\in[N-1]$} 
\STATE Set $\cL_{j,0}$ as the empty subspace
\FOR{$l \in [L]$}
\STATE \textit{We refer to this inner loop as round $(j,l)$}
\STATE $\cL_{j,l} = \Span(\cL_{j,l-1}, \cL_{j,l}')$ where $\cL_{j,l}' \in \Gr([2k],d)$ is chosen by some process (see Lemma \ref{lem:detalg-opt-ocvg})
\STATE $\hat{x}_{j,l} \sim \Unif(\cK_{j,l} \cap \cL_{j,l}^\perp \cap \cS(1))$ and $x_{j,l} = \argmin_{v\in\cV_{j,l}}\{\|v-\hat{x}_{j,l}\|\}$
\FOR{$t \in [T]$}
\STATE $\opt$ sends query $\ujlt$ and receives $\tilde{\cO}_{j,l}(\ujlt)$, where $\tilde{\cO}_{i,j}$ is as defined via Eqn. \eqref{eq:detalg-resisting-oracle}.
\ENDFOR
\ENDFOR
\ENDFOR
\STATE $\opt$ releases solution $\widehat{w}$
\STATE Sample $x_{N,1},\dots,x_{N,L} \sim \Unif(\cS(1))$
\end{algorithmic}
\end{algorithm}

Next, Algorithm \ref{alg:detalg-opt} shows how the objective function is chosen adaptively to $\opt$ by adversarially selecting the Nemirovski vectors.
Note the vectors are picked via a process that depends on a ``resisting oracle'' (defined below) rather than the true oracle. The bulk of the proof will be using the MSGH to show that this resisting oracle gives the same response the true oracle would, where the true oracle is determined by the realization of $F$ at the end of Algorithm \ref{alg:detalg-opt}.

The objective function is roughly built in the following manner. Consider some fixed $j\in[N]$. As we loop over $l\in [L]$, we start tracking the marked subspaces, which are determined by $A$ and the optimizer's memory state at the beginning of each round $(j,l)$. We sample the Nemirovski vector $x_{j,l}$ orthogonal to all previously marked subspaces.   
In order to ensure the vectors do no leak too much information in later rounds, they are ``wrapped'' in the randomly selected cover $\cV_{j,l}$. This allows the vector to have a marginal distribution that behaves like $\Unif(\cL_{j,l}^\perp \cap \cS(1))$, while at the same time allowing us to specify the vector with only $\tilde{O}(kl)$ bits that depend on $A$, rather than $\tilde{O}(d)$.

\paragraph{Resisting oracle and consistency via MSGH.}
For the resisting oracle, we take $\tilde{\cO}_{j,l}$ to be a first order oracle for the function,
\begin{align} \label{eq:detalg-resisting-oracle}
    F_{j,l}(w) = \max\big\{\max_{(j',l')\in ([j-1]\times [L]) \cup (\{j\} \times [l])}\bc{\ip{w}{x_{j',l'}} - j'\gamma_1 - l'\gamma_2},~ 4d^{6}\|A w\|_\infty - \rho\big\}. %
\end{align}
For any $w\in\re^d$, we assume the true oracle $\cO$ maps $w$ to the lowest index Nemirovski vector in $\partial \nem(w)$ or the lowest index row of $4d^6A$ in $\partial \barr(w)$ respectively; we use the analogous strategy for $\tilde{\cO}_{j,l}$.
The main ingredient in showing the resisting oracle is consistent with the true oracle is showing that each round is a play of the MSGH. 
\begin{lemma}\label{lem:detalg-opt-ocvg}
Let $j\in[N-1], l\in[L]$. 
There exists a strategy for choosing $\cL_{j,l}' \in \Gr([2k],d)$ such that,
\begin{align} \label{eq:detalg-opt-ocvg}
\underset{}{\PP}\Big[\forall t\in[T]: ~\|A\ujlt\|_\infty \geq \frac{\|\ujlt\|}{2d^{4.5}} ~\text{ or }~  \frac{\ujlt}{\|\ujlt\|} \in \cL_{j,l}' \msum\cB\Big(\frac{1}{d^{1.5}}\Big)\Big] = 1-O(1/d^3).
\end{align}
The probability is taken with respect to all randomness in Algorithm \ref{alg:detalg-opt}.
\end{lemma}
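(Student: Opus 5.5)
The plan is to reduce round $(j,l)$ to a single play of the MSGH (Proposition \ref{prop:msgh-bound}), with parameters $d'=d/2$, $\Delta_i=1$, $T=\frac{d}{C_1\log d}$, marked dimension $2k$, message size $m_1=m$, and a hint of size $m_2\le d/C_2$. We build a player $\cP$ that simulates $\opt$ during round $(j,l)$; the adversary's marked subspace is then taken as $\cL'_{j,l}$. Since $\opt$ is deterministic, its memory state at the start of round $(j,l)$ is a deterministic function of $A$ and of the Nemirovski randomness $(\cK,\cV,\hat x)$ drawn so far. We treat all the non-$A$ randomness of Algorithm \ref{alg:detalg-opt} as public randomness fixed in advance. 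This is allowed because Proposition \ref{prop:msgh-bound} holds for randomized players, and by averaging we may condition on this randomness. With it fixed, $h_1(A)$ is the $m$-bit memory state at the start of the round, and $2k\ge 32\max\{m/d,\log d\}$ holds because $k=32m/d$.

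\textbf{Hint.} During round $(j,l)$, $\opt$ also needs information that depends on $A$ beyond its memory, namely the vector $x_{j,l}$. The earlier vectors $x_{j',l'}$ were already accounted for through the memory state and the public randomness. Given $\cK_{j,l},\cV_{j,l}$ and the internal randomness, $x_{j,l}$ depends on $A$ only through $\cL_{j,l}$, which in turn depends on $A$ through the marked subspaces $\cL'_{j,1},\dots,\cL'_{j,l}$. Each of these is chosen adaptively via $h_1$ of earlier rounds. The plan is to let the hint encode $x_{j,l}$ directly as an index into the cover $\cV_{j,l}$. The cover is a $d^{-1.5}$-cover of a unit ball in a space of dimension $3kl$, so the index costs $O(kl\log d)\le O(kL\log d)=O(d/C_2)$ bits, using $L=\frac{d}{32C_2k\log d}$. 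Here $\hat x_{j,l}$ is sampled with fresh randomness, treated as part of the public seed, so given the hint the player knows $x_{j,l}$ exactly. I expect this accounting, together with checking that the $A$-dependence really factors through $\cL_{j,l}$, to be the main obstacle. A subtle point is that $\cL'_{j,l}$ itself is chosen after $h_1$, and the hint may depend on it; this is exactly what $h_2(\cL,A)$ permits.

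\textbf{Simulating oracle responses.} Knowing $x_{j',l'}$ for all $(j',l')$ at or before the current round, the player answers each query $\ujlt$ of $\opt$ as $\tilde\cO_{j,l}$ would. The Nemirovski part is computed directly. For the barrier part, the player issues the row query $(u,\eta)=(\ujlt,0)$, and also $(-\ujlt,0)$, and returns whichever row attains the larger value. This gives $\max_i|\ip{a_i}{u}|$ together with the lowest-index argmax, the latter up to measure-zero ties, which are handled by choosing $\eta$ as a tiny index-dependent offset. The value $F_{j,l}(\ujlt)$ and the subgradient then follow. Two row queries per step double $T$, which is absorbed by adjusting the constant $C_1$ in the round length. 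The player outputs the queries $\ujlt$ themselves, so any query satisfying both near orthogonality and distance from $\cL'_{j,l}$ is a winning play of the MSGH.

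\textbf{Conclusion.} Proposition \ref{prop:msgh-bound} gives an adversary strategy, i.e.\ a choice of $\cL'_{j,l}$ as a function of $A$ and $h_1$ (with $h_1$ determined by the public randomness), under which the player loses with probability $1-O(d^{-3})$. Losing means every $\ujlt$ either has $\|A\ujlt\|_\infty\ge\|\ujlt\|/(2d^{4.5})$ or $\ujlt/\|\ujlt\|\in\cL'_{j,l}\msum\cB(d^{-1.5})$, which is \eqref{eq:detalg-opt-ocvg}. Averaging over the public randomness preserves the probability bound. Finally, we need that $A$ is still uniform given the public randomness, i.e.\ that the public seed was drawn independently of $A$. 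This holds because all $A$-dependence enters only through $h_1$ and the hint, and we check it explicitly.
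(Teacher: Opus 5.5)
\textbf{The gap: the earlier Nemirovski vectors.} Your reduction cannot answer the resisting oracle in round $(j,l)$. $\tilde{\cO}_{j,l}$ is a subgradient oracle for $F_{j,l}$, and $F_{j,l}$ contains every earlier Nemirovski vector $x_{j',l'}$, $(j',l')\in\cI$, not only $x_{j,l}$. You claim these were ``already accounted for through the memory state and the public randomness'', but they are not.

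Each $\hat{x}_{j',l'}$ is drawn from $\cK_{j',l'}\cap\cL_{j',l'}^\perp\cap\cS(1)$. The subspace $\cL_{j',l'}$ is built from earlier marked subspaces, which the adversary chose as functions of $A$. So given the $A$-independent seed ($\cK$, $\cV$, and the sampling seeds), the vectors $x_{j',l'}$ are still functions of $A$. Meanwhile, $\opt$'s $m$-bit memory state need not retain them, and your player sees $A$ only through $h_1(A)$ and the hint. Whenever the maximizing term of $F_{j,l}$ at a query is an old vector, your player cannot produce the response.

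The earlier vectors also cannot go into the hint. The $d/C_2$ hint bits are sized for roughly one cover index, while there are up to $NL$ earlier vectors needing up to $3kNL^2\log(3d^{1.5})$ bits in total, which is polynomially more. For the same reason, your closing claim that all $A$-dependence enters through $h_1$ and the hint is false for your choice $h_1(A)=B_{j,l}$ with $m_1=m$.

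\textbf{The missing step.} The paper makes $h_1(A)$ the memory state $B_{j,l}$ concatenated with the cover indices of all $x_{j',l'}$, $(j',l')\in\cI$. Since $\cV_{j',l'}$ is independent of $A$, each index costs at most $3kl'\log(3d^{1.5})$ bits, for a total of at most $3kNL^2\log(3d^{1.5})$. With $N=k^{1/3}/32$ and $L=\frac{d}{32C_2k\log d}$, this total is at most $dk/32=m$ precisely because $k\ge 32d^{3/5}$, which is the assumption $m\ge d^{8/5}$.

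Hence $m_1=2m$. The hypotheses of Proposition~\ref{prop:msgh-bound} are then met exactly:
\begin{itemize}
\item the dimension threshold satisfies $2k=32m_1/d$, which is why the marked subspace has dimension $2k$ rather than $k$;
\item $m_1\le\frac{d^2}{64\log d}$ follows from the theorem's assumption $m\le\frac{d^2}{128\log d}$.
\end{itemize}
This accounting is not bookkeeping: it is the step that caps how many vectors can be embedded, so $N$, $L$, and the $\min\{d^{1.6},d^{8/3}/m^{2/3}\}$ bound come from it.

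The rest of your plan matches the paper's proof: encoding $x_{j,l}$ in the hint via $\cV_{j,l}$, conditioning on the $A$-independent randomness, taking the marked subspace as $\cL'_{j,l}$, and answering the barrier with row queries. Your use of both $\pm u$ to recover the $\|Au\|_\infty$ maximizer is more careful than the paper's single query with $\eta=0$, at the cost of a constant in the round length.
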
 
The proof is in Appendix \ref{app:detalg-opt-ocvg}. Roughly, we show that the MSGH player can use $\opt$ to play the MSGH (with the same matrix $A$) and $\cL_{j,l}'$ can be set to the marked subspace. The past Nemirovski vectors are added into the memory budget, which is the primary limitation on how many vectors we can embed in the loss. The current vector, $x_{j,l}$, is used as the hint.
Using Lemma \ref{lem:detalg-opt-ocvg}, we can show the resisting oracle matches the true oracle.
\begin{lemma} \label{lem:detalg-resisting-consistent}
For Algorithm \ref{alg:detalg-opt}, $\pr{}{\forall (j,l,t)\in[N]\times[L]\times[T]: \tilde{\cO}_{j,l}(\ujt) = \cO(\ujt)} \geq 1-O(\frac{1}{d})$. %
\end{lemma}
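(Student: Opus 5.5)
We prove Lemma~\ref{lem:detalg-resisting-consistent} by showing that, outside a failure event of probability $O(1/d)$, every query $\ujlt$ made in round $(j,l)$ has a true-oracle maximizer that is either the barrier or one of the Nemirovski vectors already present in $F_{j,l}$. Since both oracles break ties by lowest index, they then return the same value and subgradient. Concretely, it suffices to show that for every $(j',l')$ lexicographically after $(j,l)$ (including $j'=N$),
\[
\ip{\ujlt}{x_{j',l'}} - j'\gamma_1 - l'\gamma_2 < F_{j,l}(\ujlt).
\]
Call such a vector a future vector. We apply Lemma~\ref{lem:detalg-opt-ocvg} to every round and take a union bound over the $NL \le d$ rounds; this costs $O(NL/d^3)=O(1/d)$. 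On the resulting good event, each query $u=\ujlt$ falls into one of two cases.

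\textbf{Case 1: $\|Au\|_\infty \ge \|u\|/(2d^{4.5})$.} Then $\barr(u)\ge 2d^{1.5}\|u\|-\rho$. A future vector contributes at most $\ip{u}{x_{j',l'}} - \gamma_1-\gamma_2\le \|u\|(1+d^{-1.5})-\gamma_1$. If $\|u\|\ge \rho/d$, this is below the barrier because $d^{1.5}\|u\|\ge \rho\sqrt d\gg \rho$. If $\|u\|$ is smaller than that, then $\|u\|$ is tiny compared with $\gamma_1$, so the future term is negative. On the other hand, $F_{j,l}(u)\ge -\rho$ and $\rho\ll\gamma_1$, so the future term is still below $F_{j,l}(u)$. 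I would check these comparisons of constants against the chosen $\gamma_1,\rho$; a cleaner threshold may be needed.

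\textbf{Case 2: $u/\|u\|\in \cL'_{j,l}\msum\cB(d^{-1.5})$.} Every future vector $x_{j',l'}$ with $j'=j$ and $l'>l$ is sampled within $d^{-1.5}$ of $\cL_{j,l'}^\perp\subseteq \cL_{j,l}'^\perp$, so $|\ip{u}{x_{j',l'}}|\le 2\|u\|d^{-1.5}$. Meanwhile $x_{j,l}$ itself is in the resisting function. Hence the future term is at most
\[
\ip{u}{x_{j,l}}+2d^{-1.5}-l'\gamma_2+l\gamma_2-j\gamma_1-l\gamma_2,
\]
which is strictly below the $(j,l)$ term because $\gamma_2=5d^{-1.5}$ exceeds $2d^{-1.5}$ plus the rounding error. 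This is exactly why $\gamma_2$ can be as small as $d^{-1.5}$.

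For vectors with $j'>j$, the vector $x_{j',l'}$ is sampled from a uniformly random subspace $\cK_{j',l'}$ intersected with a complement, after round $(j,l)$. I would argue that, conditional on all queries up to round $(j,l)$, it is close to a random direction. Then $|\ip{u}{x_{j',l'}}|\le \gamma_1/2$ with probability $1-d^{-\Omega(1)}$ by Lemma~\ref{lem:random-projection}, using $\gamma_1=11\sqrt{\log d/d}$. Union-bounding over all $n\le d^2$ queries and all future vectors gives total failure probability $O(1/d)$. The offset gap of at least $\gamma_1$ then makes these terms strictly smaller than the $(j,\cdot)$ terms.

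\textbf{Main obstacle.} The delicate point is this last randomness argument. The later vector $x_{j',l'}$ is sampled after the queries, but its distribution depends on $\cL_{j',l'}$, which is chosen from the optimizer's later state and hence correlates with the queries. I would handle this by using only that $\hat x_{j',l'}$ is uniform on the sphere of a subspace that is uniformly rotated through $\cK_{j',l'}$; this subspace is independent of all queries made before it is drawn. The inner product of a fixed $u$ with such a vector concentrates at scale $\sqrt{\log d/d}$ regardless of the conditioning. If this fails, the fallback is to extract the needed sufficient correlation bound from an OCVG-type statement (Proposition~\ref{prop:ocvg-bound}).
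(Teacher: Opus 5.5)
Your skeleton matches the paper's proof. You condition on the event of Lemma~\ref{lem:detalg-opt-ocvg} for every round and on a concentration event for later blocks. You justify the latter, as the paper does via Lemma~\ref{lem:marginal-unif-vec}, by the independence of $\mathcal{K}_{j',l'}$ from everything that precedes the draw of $\hat x_{j',l'}$. You then let the barrier win when $\|Au\|_\infty$ is large, and inside the marked slab you use the $\gamma_2$ gap for same-block vectors and the $\gamma_1$ gap for later blocks. The slab case is fine, with one correction to the slack. You also need $|\langle u, x_{j,l}\rangle|\le 2d^{-1.5}$, which follows from $\hat x_{j,l}\perp\mathcal{L}'_{j,l}$, to lower-bound the $(j,l)$ term. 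So the slack is $4d^{-1.5}$ rather than $2d^{-1.5}$, and $\gamma_2=5d^{-1.5}$ still covers it.

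The genuine error is in the small-norm part of your Case~1. There you use $F_{j,l}(u)\ge-\rho$ together with $\rho\ll\gamma_1$, but the opposite is true. In fact $\rho=1/\sqrt{NL}=32\sqrt{C_2}\,k^{1/3}\sqrt{\log(d)/d}$, which is a factor $\Theta(k^{1/3})$ larger than $\gamma_1=11\sqrt{\log(d)/d}$. It has to be: Lemma~\ref{lem:detalg-minimizer} needs the barrier value $-\rho$ at $w^*$ to be at most $-\sqrt{\log d}\,k^{1/3}/\sqrt d$. So $-\rho$ lies below every Nemirovski term, and the comparison proves nothing.

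Your threshold $\rho/d$ also causes trouble, because it is far above $d^{-1.5}$. Take $\|u\|$ in that range and a same-block future vector $x_{j,l+1}$. In Case~1 there is no slab constraint, so $\langle u,x_{j,l+1}\rangle$ can be of order $\|u\|\gg\gamma_2$, and neither offset gap applies.

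The fix, which is the paper's Cases~1 and~2, is to split at $\|u\|=d^{-1.5}$.
\begin{itemize}
\item If $\|u\|\ge d^{-1.5}$ and $\|Au\|_\infty\ge\|u\|/(2d^{4.5})$, the barrier is at least $4d^6\cdot d^{-1.5}/(2d^{4.5})-\rho=2-\rho\ge 1.5$. This exceeds every Nemirovski term, which is at most $1+d^{-1.5}$.
\item If $\|u\|\le d^{-1.5}$, regardless of $A$, compare against the $(1,1)$ term, which belongs to every $F_{j,l}$. That term is at least $-2d^{-1.5}-\gamma_1-\gamma_2$. Any other term is at most $2d^{-1.5}-\gamma_1-2\gamma_2$ when $j'=1$, and at most $2d^{-1.5}-2\gamma_1-\gamma_2$ when $j'\ge 2$. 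Since $\gamma_2=5d^{-1.5}>4d^{-1.5}$ and $\gamma_1\gg d^{-1.5}$, the $(1,1)$ term wins strictly.
\end{itemize}
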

The proof is deferred to Appendix \ref{app:detalg-resisting-consistent}. In the proof, there are two cases of interest. For query $\ujlt$, if $\ujlt \notin \cL'_{j,l} \msum \cB(d^{-1.5})$, then the barrier function dominates, which is the same between $F_{j,l}$ and $F$. Alternatively, if $\ujlt \in \cL'_{j,l} \msum \cB(d^{-1.5})$, then for any $i \geq l$, $|\ip{\ujlt}{x_{j,i}}| \leq 2d^{-1.5}$ because each $x_{j,i}$ is sampled orthogonal to $\cL'_{j,l}$.
The scale of $\gamma_2$ then prevents such vectors from being in the subgradient of $F$. Concentration inequalities will also show that, for $j'>j$ and $l'\in [L]$, $x_{j',l'}$ also cannot be in the subgradient due to the scale of $\gamma_1$ and the fact that the vectors are sampled over a sufficiently high dimensional space.

\paragraph{Finishing the proof of Theorem \ref{thm:detalg-lb}.}
A final lemma we will need is that the constructed objective function has a stronger minimizer for any choice of $\opt$.
\begin{lemma}\label{lem:detalg-minimizer}
With probability at least $1-2^{-\Omega(d)}$ there exists $w^*\in\cB(1)$ such that
$F(w^*) \leq -\frac{\sqrt{\log(d)}k^{1/3}}{\sqrt{d}}$.
\end{lemma}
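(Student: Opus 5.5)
We exhibit an explicit candidate $w^*$: a negative scaled sum of all the Nemirovski vectors, projected onto the orthogonal complement of the rows of $A$. Set
\[
\bar{x} = \sum_{j\in[N],l\in[L]} x_{j,l}, \qquad w^* = -\rho'\,\frac{\proj^\perp_{A}\bar{x}}{\|\proj^\perp_{A}\bar{x}\|},
\]
where $\proj^\perp_A$ projects onto $\Span^\perp(a_1,\dots,a_{d'})$ and $\rho'\le 1$ is chosen at the end. Then $Aw^*=0$, so the barrier term is $-\rho = -1/\sqrt{NL}$. It remains to bound every Nemirovski term $\ip{w^*}{x_{j,l}} - j\gamma_1 - l\gamma_2 \le \ip{w^*}{x_{j,l}}$ by roughly $-\sqrt{\log(d)}k^{1/3}/\sqrt d$, and to check that $\rho$ is at least this large in absolute value.

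\textbf{Step 1: near-orthogonality of the vectors and their projections.} Each $x_{j,l}$ is within $d^{-1.5}$ of $\hat{x}_{j,l}$. The vector $\hat x_{j,l}$ is uniform on the unit sphere of $\cK_{j,l}\cap\cL_{j,l}^\perp$. Because $\cK_{j,l}$ is a uniformly random subspace of dimension $3kl$, independent of $\cL_{j,l}$, we expect $\hat x_{j,l}$ to be marginally close to uniform on $\cS(1)$. Even granting that, though, adaptivity couples it with $A$ and with the other vectors, so I would argue through the conditional distribution. Conditional on all other vectors and on $A$, $\hat{x}_{j,l}$ is uniform in a subspace of dimension at least $3kl-2kl\ge kl$, and this subspace is rotationally random within $\cL_{j,l}^\perp$.

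Using Lemma~\ref{lem:random-projection}, I would show with probability $1-2^{-\Omega(d)}$ that
\begin{itemize}
\item[] $\|\proj_A x_{j,l}\| \le c$ for a constant $c<1$, using $d'=d/2$;
\item[] the pairwise inner products satisfy $|\ip{x_{j,l}}{x_{j',l'}}| \lesssim \sqrt{\log d/d}$.
\end{itemize}
This is the step I expect to be the main obstacle. Adaptivity means the $x_{j,l}$ and $A$ are not independent; the cover rounding and the sampling restricted to $\cL_{j,l}^\perp$ introduce dependence. A clean alternative is to argue deterministically: sampling from $\cK_{j,l}$ with $\Dim\ge kl$ still gives $|\ip{\hat x_{j,l}}{v}|\le \sqrt{\log d/(kl)}$ for any fixed $v$ chosen beforehand. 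If this is too weak, replace the projection by taking $w^*$ simply proportional to $-\bar x$ restricted to the complement of the rows, and bound the correction via $\|\proj_A \bar x\|$ term by term.

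\textbf{Step 2: the Nemirovski terms.} Given Step 1, we have $\|\bar x\|^2 \approx NL$ and $\|\proj_A^\perp\bar x\|\gtrsim \sqrt{NL}$. For each $(j,l)$,
\[
\ip{\proj^\perp_A\bar x}{x_{j,l}} = \|\proj^\perp_A x_{j,l}\|^2 + \sum_{(j',l')\ne(j,l)} \ip{\proj^\perp_A x_{j',l'}}{x_{j,l}} \ge (1-c^2) - \text{cross terms}.
\]
If I can show the cross-term sum is at most $(1-c^2)/2$, via concentration of a sum of $NL$ weakly dependent terms each of size about $\sqrt{\log d/d}$ and random sign, giving $\sqrt{NL\log d/d}\ll 1$, then
\[
\ip{w^*}{x_{j,l}} \le -\rho'(1-c^2)/(2\|\proj_A^\perp\bar x\|) \lesssim -\rho'/\sqrt{NL}.
\]

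\textbf{Step 3: parameter check.} Take $\rho'=1$. Then
\[
F(w^*) \le \max\{-\Omega(1/\sqrt{NL}),\,-\rho\} = -\Omega(1/\sqrt{NL}).
\]
With $N=k^{1/3}/32$ and $L=d/(32C_2k\log d)$,
\[
\frac{1}{\sqrt{NL}} \asymp \sqrt{\frac{k^{2/3}\log d}{d}} \ge \frac{\sqrt{\log d}\,k^{1/3}}{\sqrt d},
\]
matching the claimed bound up to constants; any constant slack is absorbed by adjusting $\rho'$ or the constants in $N$ and $L$. Finally, a union bound over the $NL\le d$ vectors gives overall failure probability $2^{-\Omega(d)}$.
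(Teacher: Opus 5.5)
\textbf{Gap: the cross-term estimate in Steps 1--2 is not proved.} Your candidate $w^*$ is a legitimate alternative to the paper's. Step 3 is also fine: since $1/\sqrt{NL}=32\sqrt{C_2}\,k^{1/3}\sqrt{\log(d)/d}$, there is ample constant slack with $\rho'=1$, and no adjustment of $N$ or $L$ is needed (nor allowed). But Steps 1--2 carry the whole proof, you leave them unproved, and the justifications you sketch fail.
\begin{itemize}
\item You cannot condition on ``all other vectors''. The later subspaces $\cL_{j',l'}$, and hence the later $\hat x_{j',l'}$, depend on $x_{j,l}$ through $\opt$'s memory and the marking strategy.
\item The pairwise bound $|\ip{x_{j,l}}{x_{j',l'}}|\lesssim\sqrt{\log(d)/d}$ holds only with probability $1-d^{-O(1)}$ per pair, not $1-2^{-\Omega(d)}$.
\item ``$NL$ weakly dependent terms of random sign'' is precisely the claim that needs proof.
\end{itemize}
The right tool is the sequential structure. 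Conditional on $A$ and everything up to round $(j,l)$, $\hat x_{j,l}$ is uniform on $\cL_{j,l}^\perp\cap\cS(1)$ (Lemma~\ref{lem:marginal-unif-vec}), a subspace of dimension at least $d-2kL$. For your $w^*$ you would then need, for every row:
\begin{enumerate}
\item[(i)] a bound on $\ip{\proj_A^\perp\sum_{\text{earlier}}x_{j',l'}}{\hat x_{j,l}}$. This is a single inner product with a vector fixed before $\hat x_{j,l}$ is drawn, but that vector's norm must itself be shown to be $O(\sqrt{NL})$ inductively;
\item[(ii)] a martingale (Azuma-type) bound on $\sum_{\text{later}}\ip{\hat x_{j',l'}}{\proj_A^\perp x_{j,l}}$, whose increments are conditionally symmetric and sub-Gaussian at scale $d^{-1/2}$.
\end{enumerate}
You need the same for $\|\proj_A^\perp\bar x\|=O(\sqrt{NL})$. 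This can be done. However, the sum in (ii) has typical size $\sqrt{NL/d}$, so this route only gives $1-e^{-\Omega(d/(NL))}$ with $d/(NL)=\Theta(k^{2/3}\log d)$. Your final ``$2^{-\Omega(d)}$'' therefore does not follow. The weaker bound would suffice for Theorem~\ref{thm:detalg-lb}, which only needs $1-O(1/d)$.

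\textbf{How the paper avoids (ii).} The paper uses Gram--Schmidt: $\tilde x_{j,l}=\proj^\perp_{A,\cX_{j,l}}\hat x_{j,l}$, orthogonal to $A$ and to all previously built $\tilde x$'s, and $w^*=-\frac{1}{\sqrt{NL}}\sum_{j,l}\tilde x_{j,l}$. Since $\hat x_{j,l}\in\Span(A,\cX_{j,l},\tilde x_{j,l})$, every later $\tilde x_{j',l'}$ is orthogonal to $\hat x_{j,l}$. So the adaptively chosen later vectors drop out of $\ip{w^*}{x_{j,l}}$, up to the $d^{-1.5}$ rounding. Orthogonality also gives $\|w^*\|\le 1$ for free. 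The only estimate then needed with probability $1-2^{-\Omega(d)}$ is the constant-threshold event $\|\tilde x_{j,l}\|\ge 1/4$. The paper gets this by coupling $\hat x_{j,l}$ with a uniform vector and projecting off $\Span(A,\cX_{j,l},\cL_{j,l})$, which has dimension at most $3d/4$.

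The earlier terms $\ip{\sum_{\text{earlier}}\tilde x_{j',l'}}{\hat x_{j,l}}$ do not vanish, and the paper's write-up passes over them. They are, however, a single inner product against a vector of norm at most $\sqrt{NL}$ fixed before $\hat x_{j,l}$ is drawn. After the $1/\sqrt{NL}$ scaling they are therefore $O(\sqrt{\log(d)/d})$ with high probability, well below the main term $\frac{1}{32\sqrt{NL}}$. Strictly, then, the paper's argument also yields a high-probability bound rather than $1-2^{-\Omega(d)}$.
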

We defer the proof to Appendix \ref{app:detalg-minimizer}, and here finish the proof of the lower bound. %

\begin{proof}[Proof of Theorem \ref{thm:detalg-lb}]
To finish the proof of Theorem \ref{thm:detalg-lb}, recall $n \leq \frac{d^{8/3}}{C_3\log(d)m^{2/3}}$ which is at most $T(N-1)L$ for $C_3$ large enough.
Thus running Algorithm \ref{alg:detalg-opt} with $\opt$ implies that the candidate solution $\widehat{w}$ is independent of $x_{N,1}$, since the resisting oracle responses do not depend on $x_{N,1}$. A standard concentration inequality (Lemma \ref{lem:random-correlation}) then implies that with probability at least $1-O(1/d)$ the output of $\opt$ satisfies $F(\widehat{w}) \geq -N\gamma_1 - L\gamma_2 - 2\sqrt{\frac{\log(d)}{d}} \geq -15 N\sqrt{\frac{\log(d)}{d}} \geq -\frac{\sqrt{\log(d)}k^{1/3}}{3\sqrt{d}}$. Now by Lemma \ref{lem:detalg-minimizer} above, with probability at least $1-O(1/d)-2^{-\Omega(d)}$ we have $F(\widehat{w}) > F(w^*) + \alpha$.  
Further, by Lemma \ref{lem:detalg-resisting-consistent}, w.p. at least $1-O(1/d)$ the oracle responses given by the resisting oracle match the true oracle, and under this event the candidate solution $\widehat{w}$ is the same as it would be had $\opt$ run using the true oracle. 
By a union bound, the previously described events all happen with non-zero probability provided $d$ is larger than some constant. Consequently, there exists an objective function where $\opt$ does not output an $\alpha$-suboptimal solution.
Finally, the above holds for any $m\geq d^{8/5}$, and thus evaluating the bound at $m=d^{8/5}$ yields a lower bound for the low memory regime.
\end{proof}

\section{Lower Bound for Randomized Algorithms} \label{sec:randomized}
We now present our lower bound for randomized algorithms. 
\vspace{-5pt}\begin{theorem} \label{thm:main-lb}
Let $C_4$ be a universal constant, and assume $d$ is larger than some constant.
Let
$m \in [d\log(d), \frac{d^2}{256\log(d)}]$,
$n \leq \frac{d^2}{C_4\sqrt{m}\log(d)}$
and $\opt \in \cM_{\mathsf{rand}}(m,n)$.
Then there exists a $1$-Lipschitz convex loss function such that $\opt$ does not output a solution, $\widehat{w}\in\cB(1)$, satisfying
$F(\widehat{w})-\min_{w\in\cB(1)}\{F(w)\} \leq \frac{1}{d^{7}}$
with probability at least  $1-O(\frac{1}{d})$.
\end{theorem}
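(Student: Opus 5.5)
We will follow the template of Section~\ref{sec:deterministic}, but with a fixed (non-adaptive) random hard instance and with the OCVG bound of Proposition~\ref{prop:ocvg-bound} replacing the MSGH/marked-subspace argument. By Yao's principle it suffices to exhibit a distribution over losses on which every deterministic $\opt\in\cM_{\mathsf{det}}(m,n)$ fails with probability $1-O(1/d)$. The failure probability is averaged over the instance here, so a randomized optimizer's seed need not be handed to the adversary.

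\textbf{Construction.} Set $k=\Theta(m/d)$, so that $k\ge 32\max\{m/d,\log d\}$. Sample $x_1,\dots,x_N\sim\Unif(\cS(1))$ with $N=\Theta(\sqrt{d/k})$, and let $A$ have $d'=d/2$ uniformly random rows. Use
\[
F(w)=\max\Big\{\max_{j\in[N]}\{\ip{w}{x_j}-j\gamma\},\ \mathrm{poly}(d)\|Aw\|_\infty-\rho,\ \mathcal{W}(w)\Big\},
\]
where $\mathcal{W}$ is a wall function in the style of \cite{bubeck19_highlyparallel}. The wall penalizes queries of large norm, so the effective query radius $r$ is small. The offset is $\gamma\approx 3r\sqrt{k/d}$. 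Scaling by standard reductions yields the $1/d^7$ accuracy in the theorem.

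\textbf{Step 1: each new vector costs $T$ queries.} Define epochs of $T=d/(C_1\log d)$ queries. The key claim is that during an epoch in which $x_j$ is the next undiscovered vector, no query has oracle response revealing $x_j$ or any later vector, with probability $1-O(1/d^3)$.

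To prove this, reduce to the OCVG of Proposition~\ref{prop:ocvg-bound}. The player's message $h_1(A)$ is the $m$-bit memory state at the start of the epoch, together with the already discovered vectors $x_{<j}$; these are independent of $A$ and can be given for free as shared randomness. The fresh vector $x=x_j$ is independent of everything. Oracle responses are simulated by row queries $(u_t,\eta_t)$. The offsets $\eta$ encode the wall and the Nemirovski terms: comparing the barrier with the other pieces is exactly an argmax with per-row offsets, and the non-unit row norms $\Delta_i$ absorb the scaling.

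A query that reveals $x_j$ must beat the barrier, so it is nearly orthogonal to $A$. It must also beat $\ip{w}{x_{j-1}}-(j-1)\gamma$, which forces $|\ip{u_t}{x_j}|\ge\gamma\gtrsim 3\|u_t\|\sqrt{k/d}$, using the wall-enforced bound $\|u_t\|\lesssim r$. Proposition~\ref{prop:ocvg-bound} rules this out. Later vectors $x_{j'}$ with $j'>j$ are unseen, and random-correlation bounds (Lemma~\ref{lem:random-correlation}) show they do not enter subgradients.

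\textbf{Step 2: conclusion.} Union bound over the $N$ epochs. If $n< NT=\Theta(d^2/(\sqrt{m}\log d))$, the output $\widehat w$ is independent of $x_N$, so $F(\widehat w)\ge -N\gamma-\tilde O(r/\sqrt d)$. Meanwhile a point $w^*\propto -\sum_j x_j$, scaled to satisfy the wall and shifted to be orthogonal to $A$, achieves a value lower by $\Omega(N\gamma)$. This mirrors Lemma~\ref{lem:detalg-minimizer} and requires $\rho$ to be compatible with $N\gamma$.

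\textbf{Main obstacle.} The hard part is the wall. It must make large-norm queries useless, so that correlation is measured relative to $\|u_t\|$, while remaining convex and Lipschitz and still allowing the minimizer to reach the value $-\Omega(N\gamma)$. Its subgradients must also leak no information about $x_j$ or $A$, which is why the general offsets $\eta_t$ and norms $\Delta_i$ in the game are needed. I would first try $\mathcal{W}(w)=c(\|w\|-r)$ with $r\approx\sqrt{N}\cdot$(scale) and check that all parameters balance at $N=\sqrt{d/k}$.
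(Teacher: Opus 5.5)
\textbf{The gap is the wall.} You correctly identify the wall as the main obstacle, but your proposed wall does not work.

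A norm wall $\mathcal W(w)=c(\|w\|-r)$ is only a rescaling of the unwalled problem: it confines both the queries and the comparator to radius about $r$. A vector of norm $r$ can make all $N$ nearly orthogonal inner products $\ip{w}{x_j}$ at most $-\beta$ only if $\beta\lesssim r/\sqrt N$. Hence $F(w^*)\approx -r/\sqrt N$ at best.

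On the other side, the OCVG is essentially tight. It rules out only correlation of order $\|u\|\sqrt{k/d}$, so you need $\gamma\gtrsim r\sqrt{k/d}$, and then $F(\widehat w)\gtrsim -N\gamma\approx -Nr\sqrt{k/d}$. A positive gap therefore requires $N^{3/2}\lesssim\sqrt{d/k}$, i.e.\ $N\lesssim (d/k)^{1/3}$. That yields only $\tilde\Omega(d\cdot(d/k)^{1/3})=\tilde\Omega(d^{5/3}/m^{1/3})$, which is essentially the intermediate bound mentioned in the paper's overview. The balance you plan to check at $N=\sqrt{d/k}$ does not close.

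The paper instead uses a \emph{cap} wall, $\max_{z\in\cZ}\ip{w}{z}-4\alpha$, where $\cZ$ is the set of unit vectors with $|\ip{\tilde x_j}{z}|\le 6\sqrt{k/d}$ for all $j$, and $\tilde x_j=\proj_A^\perp x_j$. This decouples two norms:
\begin{itemize}
\item the minimizer $w^*=\sum_j s_j\tilde x_j$ can have norm $\Theta(\alpha\sqrt N)$ inside the caps and still satisfy $F(w^*)=0$ (Lemma~\ref{lem:minimizer});
\item a query's component $\tilde u_{j,t}=\proj^\perp_{\tilde x_{<j}}u_{j,t}$ outside the discovered directions is effectively limited to norm $O(\alpha)$ (Lemma~\ref{lem:wall}).
\end{itemize}
This is what permits $\gamma=10\alpha\sqrt{k/d}$ with $N\gamma=O(\alpha)$ at $N=\Theta(\sqrt{d/k})$. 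It also means the OCVG must be applied to $\tilde u_{j,t}$, not to $u_{j,t}$, whose norm is no longer controlled.

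\textbf{The cap wall is exactly what breaks your Step~1.} The caps must sit around the $A$-orthogonal vectors $\tilde x_j$, because the minimizer has to live essentially in $\Span^\perp(A)$. Caps around the $x_j$ themselves exempt nothing: the unit vector $\tilde x_j-\proj_A x_j$ is nearly orthogonal to every $x_i$, yet has inner product about $1/2$ with $\tilde x_j$. So wall subgradients reveal $\tilde x_{<j}$, which depends on $A$. The player needs these to simulate the oracle, and they cannot be supplied as shared randomness independent of $A$.

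Your account of $\eta_t$ and $\Delta_i$ (``encoding the wall and Nemirovski terms'') is also not a working mechanism. The game only returns an argmax row under per-row shifts.

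The paper handles this in four steps:
\begin{itemize}
\item It samples $x_j$ uniformly in $\Span^\perp(\tilde x_{<j})$. Then the $\tilde x_j$ are orthogonal, and $\ip{u_{j,t}}{x_l}=\ip{\tilde u_{j,t}}{x_l}$ for $l\ge j$.
\item It shows that, conditioned on $\tilde x_{<j}$, $x_{\le j}$ and $\bar a_i=\proj_{\bar x_{<j}}a_i$, each row equals $\bar a_i$ plus a uniform vector of norm $\Delta_i=\sqrt{1-\|\bar a_i\|^2}$ in $\Span^\perp(\tilde x_{<j})$ (Lemma~\ref{lem:conditional-of-rows}).
\item It plays the OCVG in dimension $d-j+1$ with these $\Delta_i$ and offsets $\eta_{t,i}=-\ip{u_{j,t}}{\bar a_i}$, so the returned row is the barrier's row.
\item It uses the singular-value bound of Lemma~\ref{lem:Aprime-lb-A} to transfer near-orthogonality from $A$ to $A'$, and to exchange $\ip{\tilde u}{x_j}$ with $\ip{\tilde u}{\tilde x_j}$.
\end{itemize}
A four-case consistency argument (Lemmas~\ref{lem:wall} and~\ref{lem:barrier-beats-wall}) is then needed, because the wall is now a live piece of the oracle.

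Separately, your indexing in Step~1 is off. Beating the $(j-1)$ term with the $x_j$ term gives $\ip{u}{x_j}-\ip{u}{x_{j-1}}\ge\gamma$. Since $x_j$ is fresh, this forces correlation with the already revealed $x_{j-1}$, not with $x_j$. The OCVG target must be the most recently revealed vector, and the message must be the memory state from before that vector entered the oracle.
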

\vspace{-5pt}The proof is provided in the next section, but we provide some discussion here first. In contrast to our proof for the deterministic algorithms, we use a (non-adaptive) distribution over loss instances. This greatly simplifies the optimization process (see Algorithm \ref{alg:opt}). Proving the resisting oracle is consistent in this case will use the orthogonal correlated vector game described in Proposition \ref{prop:ocvg-bound}, which will show that $\tilde{\Omega}(d)$ queries must be made before the next Nemirovski vector can be discovered. This results from the fact that OCVG will show it is hard to find a vector substantially correlated with the most recently discovered Nemirovski vector that also avoids the barrier function, and the scale of the Nemirovksi offset will be large enough to ensure that such large correlation is necessary to make progress.

In this construction, we also add a ``wall function'' to the objective (see Eqn. \eqref{eq:objective}), reminiscent of the kind used by \cite{bubeck19_highlyparallel}. Roughly, this wall function forces the component of the query outside the span of the previously discovered Nemirovski vectors to be small. This allows the Nemirovski function offset to be smaller and thus for more vectors to be embedded. However, the wall function substantially complicates the application of the OCVG, as the ``wall'' is defined using the projections of the Nemirovski vectors orthogonal to $A$. As such, its gradients may leak information about $A$. 
To control this, we argue explicitly about the conditional distribution of $A$ given the previously discovered Nemirovski vectors, and show it mostly still behaves like a collection of uniformly random vectors, but of smaller norm. 
When we argue that an OCVG player can use $\opt$ to play the OCVG (similar to Lemma \ref{lem:detalg-opt-ocvg}) we give the optimization problem a matrix, $A$, of uniformly random unit vectors, but give the OCVG instance a matrix, $A'$, of random vectors with smaller norms $\Delta_1,...,\Delta_{d'} \in [0,1]$. 
We can then write $A$ as $A'+E$, for some matrix $E\in\re^{d'\times d}$ which is known to the OCVG player and which depends on the previously discovered Nemirovski vectors.
The OCVG player then uses the game query offsets, $\eta_{1:T}$, to translate oracle queries made by $\opt$ into game queries that return the same row of $A'$ that the optimization query would return from $A$. This allows the player to correctly implement the resisting oracle. 
Ultimately, we are still able to show that making optimization progress would lead the OCVG player to win the game.  
We defer further discussion to the proof.

\subsection{Proof of Theorem \ref{thm:main-lb}}
\label{sec:proof-of-main-lb}
Our proof will use a distribution over hard loss instances. As such, throughout we will assume the optimization algorithm is deterministic, which yields the same lower bound by Yao's minimax principle. Further, by a standard reduction, it suffices to show a lower bound for algorithms which achieve $\alpha:=\frac{1}{d}$ suboptimality on $d^{6}$-Lipschitz loss functions.
\vspace{-5pt}\paragraph{Hard Instance.}
Let $k= \frac{32m}{d}$; note $k \geq 32\log(d)$ by the assumption on $m$.
Define $\gamma = 10\alpha\sqrt{\frac{k}{d}}$ and $N=\frac{1}{48}\sqrt{\frac{d}{k}}$ and $d'=d/2$.
Sample the matrix $A \in \re^{d'\times d}$ such that each row, $a_i \sim \Unif(\cS(1))$, $i\in[d']$. 
We sample $x_1,...,x_N$ in the following manner.
Let $x_1\sim\Unif(\cS(1))$ and define $\tilde{x}_1 = \proj_A^\perp x_1$. Now for each $j\in[N]$, sample 
$x_j \sim \Unif(\Span^\perp(\tilde{x}_{<j}) \cap \cS(1))$ 
and let
$\tilde{x}_j =\proj_{A}^\perp x_j$. 
Note that $\tilde{x}_1,\dots,\tilde{x}_N$ is an orthogonal set of vectors, as $\proj_A^\perp x_j = \proj_{A,\tilde{x}_{<j}}^\perp x_j$, since both the rows of $A$ and $x_j$ lie in $\Span^\perp(\tilde{x}_{<j})$.
The hard instance for the loss takes the following form:
\begin{align}\label{eq:objective}
    F(w) = \max\big\{\underbrace{\max_{j\in[N]}\bc{|\ip{w}{x_j} - 2\alpha| - j\gamma}}_{\text{Nemirovski: }\nem(w)},~ \underbrace{d^{6}\|A w\|_\infty}_{\text{barrier: }\barr(w)},~ \underbrace{\max_{z\in\cZ}\bc{\ip{w}{z}} - 4\alpha}_{\text{wall: } \wall(w)} \big\},
\end{align}
where $\cZ$ will be defined momentarily.
We will refer to the three components in the maximum as the Nemirovski function, barrier function (of \cite{MSSV22}) and wall function (a modified version of that found in \cite{bubeck19_highlyparallel}) respectively.
For the wall function, 
we define,
\vspace{-5pt}\begin{align*}
    \mathcal{C}_j = \Big\{y: \abs{\ip{\tilde{x}_j}{y}} > 6\sqrt{\frac{k}{d}}\Big\}, \quad \forall j\in[N] && \text{ and } && \cZ = \big\{z: \|z\|=1 ~~\land~~ z \notin \bigcup\limits_{j\in[N]} \cC_{j}\big\}.
\end{align*}
That is, $\cZ$ is the set outside the correlation caps, $\cC_1,...,\cC_N$. There are two main functional differences between our wall function and the one defined in \cite{bubeck19_highlyparallel}. First, a vector $w$ must have correlation $|\ip{w}{\tilde{x}_j}| \geq 6\sqrt{k/d}$ with at least one $\tilde{x}_j$ to avoid $\wall(w) = \|w\| - 4\alpha$, whereas \cite{bubeck19_highlyparallel} uses a correlation threshold of roughly $\sqrt{\log(d)/d}$. Further, the set $\cZ$ uses different vectors than those in the Nemirovski function, which leads to a more delicate analysis when controlling the information leaked by the oracle responses.

It can also be shown the vector $\alpha \sum_{j=1}^{N}s_j \tilde{x}_j$, for some $s\in[1,8]^N$, serves as a good minimizer. The proof of the following lemma can be found in Appendix \ref{app:minimizer}.
\begin{lemma} \label{lem:minimizer}
With probability at least $1-2^{-\Omega(d)}$ there exists $w^*\in\cB(1)$ such that $F(w^*) = 0$.   
\end{lemma}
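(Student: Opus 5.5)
We take $w^* = \alpha\sum_{j=1}^N s_j\tilde{x}_j$ for a suitable $s\in[1,8]^N$, and show that each of the three pieces of $F$ is at most $0$ at $w^*$, with equality for at least one of them.

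\textbf{Barrier and norm.} Each $\tilde{x}_j$ lies in $\Span^\perp(A)$, so $Aw^*=0$ and $\barr(w^*)=0$. Since the $\tilde{x}_j$ are orthogonal with $\|\tilde{x}_j\|\le 1$, we get $\|w^*\|\le 8\alpha\sqrt{N}\le 1$, so $w^*\in\cB(1)$.

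\textbf{Norms of the $\tilde{x}_j$.} This is where the probability bound enters. The vector $x_j$ is uniform on the unit sphere of a subspace of dimension $d-j+1$, and we project it off $\Span(A)\cap\Span^\perp(\tilde{x}_{<j})$, which has dimension $d'=d/2$ inside that subspace. By the random projection lemma (Lemma~\ref{lem:random-projection}) together with a union bound over $j\le N$, with probability $1-2^{-\Omega(d)}$ we have $\|\tilde{x}_j\|^2\in[1/4,3/4]$ for all $j$, using $N\ll d$. We also need the cross terms $\ip{x_i}{\tilde{x}_j}$ to be small. For $i>j$, $x_i\perp\tilde{x}_j$ by construction, so $\ip{x_i}{\tilde{x}_j}=0$. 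For $i<j$, $\ip{x_i}{\tilde{x}_j}=\ip{\tilde{x}_i}{\tilde{x}_j}+\ip{\proj_A x_i}{\tilde{x}_j}=0$, since $\tilde{x}_j\perp A$. Hence $\ip{w^*}{x_i}=\alpha s_i\|\tilde{x}_i\|^2$ exactly.

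\textbf{Nemirovski term.} We choose $s_i=2/\|\tilde{x}_i\|^2$, which lies in $[8/3,8]\subseteq[1,8]$ on the good event. Then $\ip{w^*}{x_i}=2\alpha$ and each Nemirovski term equals $-i\gamma<0$.

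\textbf{Wall term.} We must show $\max_{z\in\cZ}\ip{w^*}{z}\le 4\alpha$; this is the main step. Take $z\in\cZ$. Then $|\ip{\tilde{x}_j}{z}|\le 6\sqrt{k/d}$ for every $j$, so
\[
\ip{w^*}{z}\le \alpha\sum_j s_j\,6\sqrt{k/d}\le 48\alpha N\sqrt{k/d}=\alpha,
\]
using $N=\frac{1}{48}\sqrt{d/k}$. This gives $\wall(w^*)\le -3\alpha<0$.

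Since $\barr(w^*)=0$ and the other two terms are negative, $F(w^*)=0$. Note that $F\ge\barr\ge 0$ everywhere, so this is exact. The only nontrivial point is the norm concentration, which is routine.
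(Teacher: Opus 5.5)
Your proposal is correct and is essentially the paper's own proof. It uses the same point $w^*=\sum_{j}\frac{2\alpha}{\|\tilde{x}_j\|^2}\tilde{x}_j$, the same random-projection event bounding $\|\tilde{x}_j\|$ below by a constant (so the coefficients are at most $8\alpha$ and $\|w^*\|\le 8\alpha\sqrt{N}\le 1$), and the same wall estimate $\ip{w^*}{z}\le 48\alpha N\sqrt{k/d}=\alpha$ for $z\in\cZ$; your explicit check that $\ip{x_i}{\tilde{x}_j}=0$ for $i\neq j$ in both orderings, and your conditional dimension count giving $\|\tilde{x}_j\|^2\approx\frac{1}{2}$, are more careful than the paper's write-up but do not change the argument.
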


\subsubsection{Applying the OCVG to the Optimization Problem}
In this section, we show how to use the OCVG (Proposition \ref{prop:ocvg-bound}) to prove a lower bound for the optimization problem given by Eqn. \eqref{eq:objective}. To do this, we show that the process defined by Algorithm \ref{alg:opt} cannot find an $\alpha$-suboptimal solution. Up to the replacement of the true gradient oracle with a ``resisting'' oracle, $\tilde{\cO}_j$ (defined below), the process in Algorithm \ref{alg:opt} is the same as the actual optimization process for an optimizer which makes $(N-1)\frac{d}{C_1 \log(d)}$ queries to the first order oracle. 
The main piece in finishing the proof is showing that with high probability the resisting oracle in fact returns the same gradient responses that the true oracle would return. To do this, we first show that the queries to the resisting oracle satisfy an OCVG like property.

\begin{algorithm}[h]
\caption{Optimization Process}
\label{alg:opt}
\begin{algorithmic}[1]
\STATE $T = \frac{d}{C_1 \log(d)}$,  where $C_1$ is as in Proposition \ref{prop:msgh-bound}
\FOR{$j\in[N-1]$}
\FOR{$t \in [T]$}
\STATE $\opt$ sends query $u_{j,t}$ to oracle and receives $\tilde{\cO}_j(\ujt)$,  where $\tilde{\cO}_j$ as defined via  Eqn. \eqref{eqn:resisting-oracle}
\ENDFOR
\ENDFOR
\STATE $\opt$ releases solution $\hat{w}$
\end{algorithmic}
\end{algorithm}

\paragraph{Resisting oracle.} For $j\in[N]$, we let $\tilde{\cO}_j$ be a first order oracle for the function,
\begin{align} \label{eqn:resisting-oracle}
    F_j(w)\! =\! \max\big\{\max_{i\in[j]}\bc{|\ip{w}{x_i} - 2\alpha| - i\gamma},~ d^{6}\|A w\|_\infty,~ \max_{z\in\cZ_j}\bc{\ip{w}{z}}\! -\! 4\alpha \big\}
\end{align}
where $\cZ_j = \big\{z: \|z\|=1 ~~\land~~ z \notin \bigcup_{i\in[j-1]} \cC_{i}\big\}$.
We defer the technical details on exactly how $\cO$ and $\tilde{\cO}_j$ choose an element from the subgradient of $F$ and $F_j$ (respectively) to Appendix \ref{app:resisting-consistent}.
In short, $F_j$ is just $F$ with the pieces that use $x_{j+1:N}$ and $\tilde{x}_{j:N}$ having been ``deleted''.
For notation, we also define $\nem_j(w) = \max_{i\in[j]}\bc{|\ip{w}{x_i} - 2\alpha| - i\gamma}$.%

\paragraph{The inner loop is a play of the OCVG in $\Omega(d)$ dimensions.}
A key fact we will use to show the resisting oracle's consistency is that each inner loop is a play of the OCVG with embedding dimension $d-j+1$. Recall we have defined the OCVG as the modification of Algorithm \ref{alg:MSGH} where $\cP$ receives a vector $x\sim \cS(1)$ in Line \ref{line:player-info} and wins via the conditions specified by Proposition \ref{prop:ocvg-bound}.
In the following, 
for each $j\in[N-1],t\in[T]$, define $\tujt = \proj_{\tilde{x}_{<j}}^\perp \ujt$. %
Further, recall that we have assumed $\opt$ is deterministic by Yao's minimax principle, so the randomness in Algorithm \ref{alg:opt} comes only from sampling the objective.
\begin{lemma}\label{lem:opt-ocvg}
Consider Algorithm \ref{alg:opt}. Fix $j\in[N-1]$. 
It holds that,
\begin{align} \label{eq:opt-ocvg}
\underset{A,x_{\leq j}}{\PP}\Big[\exists t\in[T]: \! ~\|A\tujt\|_\infty \leq \frac{\|\tujt\|}{8d^{5.5}},~  \max\{|\ip{\tujt}{x_j}|,|\ip{\tujt}{\tilde{x}_j}|\} \geq 6\|\tujt\|\sqrt{\frac{k}{d}}\Big] = O\big(\frac{1}{d^3}\big).
\end{align}
\textit{(Note also $\|A\ujt\|_\infty = \|A\tujt\|_\infty$ and $\ip{\ujt}{x_j} = \ip{\tujt}{x_j}$.)} 
\end{lemma}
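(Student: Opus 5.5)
We reduce the $j$-th inner loop of Algorithm \ref{alg:opt} to a single play of the OCVG of Proposition \ref{prop:ocvg-bound}, posed in the $(d-j+1)$-dimensional space $\Span^\perp(\tilde{x}_{<j})$. The OCVG player $\cP$ simulates $\opt$ and translates each oracle query into a game query. If $\opt$ produces a $\tujt$ with the property in \eqref{eq:opt-ocvg}, the corresponding game query will satisfy both modified win conditions. Proposition \ref{prop:ocvg-bound} then bounds the probability of this by $O(1/d^3)$.

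\textbf{Step 1: conditional structure of $A$.} Condition on $x_{<j}$ and write $\tilde{x}_i=\proj_A^\perp x_i$. Decompose each row as $a_i = a_i' + e_i$, where $e_i=\proj_{\tilde{x}_{<j}}a_i$ and $a_i'=\proj^\perp_{\tilde{x}_{<j}}a_i$.
\begin{itemize}
\item Since the $\tilde{x}_i$ are orthogonal to every row of $A$, we have $e_i=0$ exactly. So the relevant part of $A$ lies in $\Span^\perp(\tilde{x}_{<j})$.
\item The information revealed by $x_{<j}$ consists of the projections $\proj_A x_i = x_i - \tilde{x}_i$. I would show that, conditionally on $x_{<j}$ and a small amount of extra information (the components of the rows along $\proj_A x_{<j}$), the rows of $A$ restricted to the complementary subspace are independent and uniform on spheres of norms $\Delta_i$.
\item I would then argue that $\Delta_i\in[1/2,1]$ with high probability, because $j\le N\ll d$ directions were removed.
\item The known components form the matrix $E$.
\end{itemize}
This gives an OCVG instance $A'$ with row-norm parameters $\Delta_i$ in embedding dimension $\approx d-2j$, together with $A=A'+E$ where $E$ is known to $\cP$. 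Under this conditioning, $x_j$ is uniform on the sphere of $\Span^\perp(\tilde{x}_{<j})$ and is independent of $A'$, as the OCVG requires.

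\textbf{Step 2: simulation.} The message $h_1(A')$ is $\opt$'s $m$-bit memory state at the start of loop $j$. The constraint $m_1\le d^2/(64\log d)$ and $k=32m/d$ match Proposition \ref{prop:msgh-bound}. The knowledge of $x_{<j}$ and $E$ is treated as shared randomness or conditioning. The rest of $\opt$'s state is a deterministic function of the past, which has already been absorbed into the conditioning. The hint is empty ($m_2=0$), and $\cP$ receives $x=x_j$.

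To answer a query $\ujt$ with the resisting oracle $\tilde{\cO}_j$, $\cP$ needs:
\begin{itemize}
\item the $x_i$ for $i\le j$, which it knows;
\item the caps $\cC_i$ for $i<j$, which it knows, since it knows $\tilde{x}_{<j}$;
\item the maximizing row of $A$.
\end{itemize}
For the last item, $\cP$ issues the game query $u_t=\tujt$ with offsets $\eta_{t,i}=-\ip{e_i}{\ujt}$, choosing the sign of $\ujt$ or $-\ujt$ to handle the absolute value in $\|\cdot\|_\infty$. This makes $\argmax_i \ip{u_t}{a_i'}-\eta_{t,i}$ equal to the argmax for $A\ujt$, and the game returns $a_i'$, from which $\cP$ recovers $a_i=a_i'+e_i$. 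Queries where the barrier is not the maximizer need no game query; $\cP$ may still spend the one it has. There are at most $2T$ game queries, which is fine after adjusting $C_1$.

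\textbf{Step 3: win conditions.}
\begin{itemize}
\item \emph{Near orthogonality.} We have $\|A'\tujt\|_\infty = \|A\tujt\|_\infty \le \|\tujt\|/(8d^{5.5}) \le \|\tujt\|/(2d^{4.5})$.
\item \emph{Correlation with $x_j$.} The condition $|\ip{\tujt}{x_j}|\ge 6\|\tujt\|\sqrt{k/d}$ exceeds the threshold $3\sqrt{k/d'}$ required in dimension $d-O(j)$.
\item \emph{Correlation with $\tilde{x}_j$.} Here I use $\tilde{x}_j = x_j - \proj_A x_j$, which gives
\[
|\ip{\tujt}{\tilde{x}_j}| \le |\ip{\tujt}{x_j}| + \|\proj_A \tujt\|.
\]
Near orthogonality forces $\|\proj_A\tujt\|$ to be small: a random $d'\times d$ matrix with $d'=d/2$ has smallest singular value $\gtrsim 1/\mathrm{poly}(d)$ w.h.p., giving roughly $\|\proj_A\tujt\|\le d^{-4}\|\tujt\|$. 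This error is negligible against $\sqrt{k/d}$, so correlation with $\tilde{x}_j$ implies correlation with $x_j$ up to a constant.
\end{itemize}

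\textbf{Main obstacle.} The hardest step is Step 1: proving that the conditional law of $A$ given $x_{<j}$ (equivalently $\tilde{x}_{<j}$) factors into a known part $E$ plus independent uniform spherical rows on a fixed subspace. The dependence is subtle because $\tilde{x}_i$ depends on all of $A$. My first attempt is a symmetry argument: rotations fixing $\Span(\tilde{x}_{<j},\proj_A x_{<j})$ preserve the joint law. If that fails, I would reveal slightly more information (the full $\proj_A x_{<j}$) and absorb it into $E$.
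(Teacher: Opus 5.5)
\textbf{Step 1 is false, and so is the paper's version of it.} Your reduction follows the paper's route. The message is $\opt$'s memory state at the start of round $j$. The known part consists of the components $\bar a_i=\proj_{\bar x_{<j}}a_i$, with $\bar x_l=\proj_A x_l$. The offsets $\eta_{t,i}=-\ip{\ujt}{\bar a_i}$ make the game's argmax agree with the barrier's, and singular-value bounds transfer the win conditions. Your two queries per step for the absolute value in $\|A\ujt\|_\infty$ fix a point the paper's proof passes over.

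The false claim is that the rows' $W$-components are conditionally independent and uniform. Conditioning on $x_{<j}$ and $\tilde x_{<j}$ reveals $\bar x_l=x_l-\tilde x_l$, and the fact $\bar x_l\in\Span(A)$ couples the rows. For $j\ge 2$, write $\bar x_1=\sum_i c_i a_i$ with $c\neq 0$, and project onto $W=\Span^\perp(\tilde x_{<j},\bar x_{<j})$. This gives $\sum_i c_i w_i=0$ for $w_i=\proj_W a_i$. So, conditionally, the $w_i$ are linearly dependent: they span at most $d'-j+1$ dimensions. By contrast, $d'=d/2\le\dim W$ independent vectors uniform on spheres in $W$ are linearly independent almost surely.

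Your symmetry argument only gives invariance under a \emph{common} rotation of $W$ applied to all the $w_i$. That is compatible with this coupling and says nothing about independence. Revealing $\proj_A x_{<j}$ adds nothing, since it already equals $x_{<j}-\tilde x_{<j}$.

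The paper's Lemma \ref{lem:conditional-of-rows} asserts exactly your Step 1 claim, and its proof breaks at the corresponding place. It shows $\tilde x_l\perp\Span(A)$ iff each $\proj^\perp_{\bar x_{<j}}a_i$ is orthogonal to $\tilde x_{<j}$. It then concludes $\proj^\perp_A x_l=\tilde x_l$, but that additionally requires $\bar x_l\in\Span(A)$, which is precisely the joint constraint above. So following the paper does not close this step. You need an argument that works with the coupled conditional law.

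\textbf{Step 3's near-orthogonality transfer is also wrong.} Since $\bar x_{<j}\subseteq\Span(A)\subseteq\Span^\perp(\tilde x_{<j})$, projecting out $\tilde x_{<j}$ does not remove the known components. Thus $A\tujt=A'\tujt+E\tujt$, where $(E\tujt)_i=\ip{\bar a_i}{\ujt}$ is generally nonzero; otherwise your offsets would be unnecessary. Your equality $\|A'\tujt\|_\infty=\|A\tujt\|_\infty$ holds only for your first decomposition, where $e_i=\proj_{\tilde x_{<j}}a_i=0$ and $A'=A$. Under that decomposition $A'$ is not an admissible game matrix.

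The paper's repair is Lemma \ref{lem:Aprime-lb-A}. The rows of $A'$ lie in $\Span(A)$ and have norm at most $1$. Hence $\|A'v\|_\infty\le\|\proj_A v\|\le\frac{\sqrt d}{\sigma_{\min}(A)}\|Av\|_\infty\le 2d\|Av\|_\infty$ once $\sigma_{\min}(A)\ge\frac{1}{2\sqrt d}$, which Lemma \ref{lem:singular-vals} gives with probability $1-2^{-\Omega(d)}$. This factor of order $d$ is what the gap between $8d^{5.5}$ in the lemma and $2d^{4.5}$ in the game absorbs.

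Your rough bound is too weak for this. From $\sigma_{\min}\gtrsim 1/\mathrm{poly}(d)$ you get $\|\proj_A\tujt\|\le d^{-4}\|\tujt\|$, and $d^{-4}$ exceeds the $d^{-4.5}$ threshold; you need $\sigma_{\min}=\Omega(1/\sqrt d)$.

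The same estimate, via $\|\proj_{\bar x_{<j}}\tujt\|\le\|\proj_A\tujt\|$, resolves a mismatch your write-up leaves implicit. It lets you pass from $x_j$, which is uniform on the sphere of the $(d-j+1)$-dimensional $\Span^\perp(\tilde x_{<j})$, to a game vector uniform in your $(d-2j+2)$-dimensional space $W$.
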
 
The proof is deferred to Appendix \ref{app:opt-ocvg}. The reduction is more involved than in Lemma \ref{lem:detalg-opt-ocvg}. Once again, the vectors $\tilde{x}_j$ may leak information about the loss. We control this via a careful analysis, which shows that even after receiving some number of projected Nemirovski vectors, the conditional distribution of $A$ is still roughly a collection of uniformly random unit vectors, but over a lower dimensional space. We will also leverage a lower bound on the non-zero singular values of $A$, which will let us show that if $\|A\tujt\|_\infty$ is small, then $\ip{\tujt}{x_j}$ and $\ip{\tujt}{\tilde{x}_j}$ are close. 

\paragraph{The resisting oracle is consistent.} The next step is ensure that the resisting oracle matches the true oracle responses. 
\begin{lemma}\label{lem:resisting-consistent}
For Algorithm \ref{alg:opt}, 
$\pr{A,x_{1:N}}{\forall j\in[N-1], t\in[T]: \tilde{\cO}_j(\ujt) = \cO(\ujt)} \geq 1-O(\frac{1}{d})$.
\end{lemma}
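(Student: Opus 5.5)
We would argue by induction over the rounds $j=1,\dots,N-1$ and queries $t\in[T]$, on the good event $\cE$ where Lemma~\ref{lem:opt-ocvg} holds for every $j$ (a union bound costs $O(N/d^3)$), together with standard concentration events for the Nemirovski vectors. Since the first discrepancy between $\tilde{\cO}_j$ and $\cO$ is the only thing that matters, it suffices to show that for each query $\ujt$, the pieces of $F$ that are deleted in $F_j$ are never active (and never chosen by the tie-breaking rule). These pieces are: the Nemirovski terms for $i>j$, and the parts of the wall that differ, which come from the caps $\cC_j,\dots,\cC_N$. Concretely, we aim to show that for every query $w=\ujt$ the values satisfy $F(w)=F_j(w)$, and that the subgradient selected by $\cO$ coincides with the one selected by $\tilde{\cO}_j$.

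We split into two cases according to Lemma~\ref{lem:opt-ocvg}.

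\textbf{Case (a).} Here $\|A\ujt\|_\infty > \|\tujt\|/(8d^{5.5})$. We first decompose $\ujt=\tujt+\proj_{\tilde{x}_{<j}}\ujt$. The quantity $\|\tujt\|$ controls how much of $w$ can correlate with $x_i$ for $i> j$ and with $\tilde{x}_i$ for $i\ge j$. This is because those vectors are orthogonal to $\tilde{x}_{<j}$ up to the $A$-component, and $\ip{\ujt}{x_i}=\ip{\tujt}{x_i}$ in the relevant sense.

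The barrier then satisfies $d^6\|Aw\|_\infty\ge d^{0.5}\|\tujt\|/8$. On the other hand, a deleted Nemirovski term is at most $|\ip{\tujt}{x_i}|+2\alpha-i\gamma$. The deleted wall cap can only change the wall value by $O(\|\tujt\|)$-type amounts. We would then show that the barrier strictly dominates whenever the deleted terms could be relevant. When $\|\tujt\|$ is tiny, say below $\alpha/d$, the deleted terms are instead dominated by the kept terms through the offsets $2\alpha$ and $\gamma$.

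\textbf{Case (b).} Here $\max\{|\ip{\tujt}{x_j}|,|\ip{\tujt}{\tilde{x}_j}|\}<6\|\tujt\|\sqrt{k/d}$. The condition $|\ip{\tujt}{\tilde{x}_j}|<6\|\tujt\|\sqrt{k/d}$ means that $w$ lies outside the cap $\cC_j$ in the relevant direction. Hence maximizing over $\cZ$ instead of $\cZ_j$ gives the same wall value and gradient. For the caps $\cC_i$ with $i>j$, and for the terms $x_i$ with $i>j$, the vectors are sampled independently of all queries up to round $j$, uniformly on a sphere of dimension $d-O(N)$. Lemma~\ref{lem:random-correlation}, with a union bound over the $NT\le d^2$ queries, therefore gives $|\ip{w}{x_i}|,|\ip{w}{\tilde{x}_i}|\le O(\sqrt{\log d/d})\|\tujt\|$, which is below the $6\sqrt{k/d}$ thresholds since $k\ge 32\log d$.

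For the Nemirovski comparison we then use the offset $\gamma=10\alpha\sqrt{k/d}$. Whenever $\nem$ could be active, the wall forces the component of $w$ outside $\Span(\tilde{x}_{<j})$ to be at most about $4\alpha+O(\alpha)$ unless $w$ correlates with some cap. This gives $|\ip{w}{x_{i}}|\lesssim \alpha\cdot 6\sqrt{k/d}<\gamma$ for $i>j$. So the term $|\ip{w}{x_i}-2\alpha|-i\gamma$ is strictly below the $j$-th term, whose value is $\ge 2\alpha - |\ip{w}{x_j}| - j\gamma$.

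\textbf{Main obstacle.} The hardest step is the interplay between the wall and the Nemirovski terms. We must show quantitatively that large $\|\tujt\|$ is impossible without the wall or barrier becoming active and dominating the deleted terms. We must also show that the wall gradient returned (a maximizing $z$) is identical under $\cZ$ and $\cZ_j$. This requires choosing the wall maximizer so that it avoids all caps $\cC_{\ge j}$ simultaneously, which follows from the low correlation in Case (b). We would treat this carefully by explicitly writing the maximizer $z^*=w/\|w\|$ when $w$ is outside all caps, and arguing its uniqueness. Ties would be resolved through the tie-breaking conventions in Appendix~\ref{app:resisting-consistent}. The failure probability totals $O(N/d^3)+O(1/d)=O(1/d)$.
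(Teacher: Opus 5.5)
Your overall architecture matches the paper's. You condition on the event of Lemma~\ref{lem:opt-ocvg} for every $j$, plus concentration for the later vectors and $\|\tilde{x}_i\|\ge 1/2$, and split by the OCVG dichotomy. In the low-correlation branch you show three things: the wall maximizer is the same over $\cZ$ and $\cZ_j$; $\wall(\ujt)\ge\|\tujt\|-4\alpha$ suppresses the deleted Nemirovski terms when $\|\tujt\|$ is large; and $\gamma$ suppresses them when it is small.

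In that branch, $w/\|w\|$ is not the right candidate maximizer, since $w$ may sit inside caps $\cC_i$ with $i<j$. What you need, as in Lemma~\ref{lem:wall}, is that for every feasible component $g\in\Span(\tilde{x}_{<j})$ the optimal complementary component is $b\,\tujt/\|\tujt\|$, which avoids $\cC_{\ge j}$. The genuine gap is your Case (a).

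\textbf{Why Case (a) fails as written.} In Case (a) nothing controls $\ip{\tujt}{\tilde{x}_j}$, so $\tujt/\|\tujt\|$ may lie deep inside $\cC_j$. For example, $\tujt$ could be parallel to $x_j$, for which $\|A\tujt\|_\infty$ is far above $\|\tujt\|/(8d^{5.5})$.

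Every $z\in\cZ_j$ has $\|\proj_{\tilde{x}_{<j}}z\|\le 12\sqrt{jk/d}$, which is bounded away from $1$. So the maximizer of $\wall_j$ has a component orthogonal to $\tilde{x}_{<j}$ of norm bounded below by a constant, pointing along $\tujt$, and the maximizer therefore does not lie in $\cZ$. Consequently, whenever the wall is the top piece, $\cO$ and $\tilde{\cO}_j$ return different vectors, however small $\wall_j(\ujt)-\wall(\ujt)$ is. Also, since $\cZ\subseteq\cZ_j$, the resisting oracle can select the wall even when the true oracle does not.

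So bounding the change in the wall's \emph{value} by $O(\|\tujt\|)$ is the wrong target. You must show $\wall_j(\ujt)<\max\{\nem_j(\ujt),\barr(\ujt)\}$ throughout Case (a), and your mechanism cannot deliver this. The barrier depends only on $\tujt$, while $\ujt$ may have a large component in $\Span(\tilde{x}_{<j})$. That a priori allows a wall value far above $\sqrt{d}\|\tujt\|/8$. The offsets $2\alpha,4\alpha$ do not help once that component is large.

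\textbf{The missing idea.} It is the argument of Lemma~\ref{lem:barrier-beats-wall}. Let $\nu=\max_{i<j}|\ip{\ujt}{\tilde{x}_i}|$. Using the \emph{kept} caps $\cC_{<j}$, $\|\tilde{x}_i\|\ge1/2$, and $j\le N=\frac{1}{48}\sqrt{d/k}$, one gets $\wall_j(\ujt)\le \nu/2+\|\tujt\|-4\alpha$. Then:
\begin{itemize}
\item If $\nu\le 4\|\tujt\|$, the barrier, which is at least $\sqrt{d}\|\tujt\|/8$, wins.
\item Otherwise a kept Nemirovski term with $i<j$ wins. Since $\ip{\ujt}{x_i}=\ip{\ujt}{\tilde{x}_i}+\ip{\tujt}{\proj_A x_i}$, that term is at least $3\nu/4-3\alpha$.
\end{itemize}
This is exactly where the ratio between $N$ and the cap width $6\sqrt{k/d}$ is used, and nothing in your sketch replaces it.

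A smaller point: switching from the barrier argument to the offset argument at $\|\tujt\|\approx\alpha/d$ leaves a hole. For $\alpha/d\le\|\tujt\|\le 16\alpha/\sqrt{d}$, the barrier bound $\sqrt{d}\|\tujt\|/8$ is still below the roughly $2\alpha$ level of the deleted Nemirovski terms. The offset argument in fact works up to $\|\tujt\|\approx\alpha\sqrt{k/d}$, which is where the paper splits its Cases 1 and 2.
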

The proof works by considering the behavior of the oracle response in several cases for each query $\ujt$, $j\in[N],t\in[T]$. We defer the full proof to Appendix \ref{app:resisting-consistent} and summarize these cases here. 
\textbf{Case~1:} $\|\tujt\| \in [0,\alpha\sqrt{\frac{k}{d}}] ~~\&~~ \|A\tujt\|_\infty \geq \frac{\|\tujt\|}{8d^{5.5}}$: Since $\|\tujt\|$ is small, the scale of $\gamma$ in $\nem$ ensures it can be evaluated without $x_{j+1:N}$. Since $\|A\tujt\|_\infty$ is large, we can also show the wall function is not needed (see Lemma \ref{lem:barrier-beats-wall}).  
\textbf{Case~2:} $\|\tujt\| \in [\alpha\sqrt{\frac{k}{d}},1] ~~\&~~ \|A\tujt\|_\infty \geq \frac{\|\tujt\|}{8d^{5.5}}$: Since $\|A\tujt\|_\infty \geq \frac{\|\tujt\|}{8d^{5.5}}$, the wall function is again not needed. Further, because $\|\tujt\|$ is not too small, the barrier function will be larger than the Nemirovski function.
\textbf{Case~3:} $\|\tujt\| \in [0,3\alpha] ~~\&~~  \|A\tujt\|_\infty \leq \frac{\|\tujt\|}{8d^{5.5}}$: Since $\|A\tujt\|_\infty \leq \frac{\|\tujt\|}{8d^{5.5}}$, Lemma \ref{lem:opt-ocvg} will guarantee that $|\ip{\tujt}{x_j}| \leq 6\|\tujt\|\sqrt{\frac{k}{d}} \leq 6\alpha\sqrt{\frac{k}{d}}$. 
A standard concentration inequality also ensures
$|\ip{\tujt}{x_i}| \leq 3\|\tujt\|\sqrt{\frac{\log(d)}{d}} \leq 3\alpha\sqrt{\frac{k}{d}}$ for any $i > j$.
The scale of $\gamma$ then guarantees $\nem$ can be evaluated without $x_{j+1:N}$. Similar to the preceding, we can also obtain $|\ip{\tujt}{\tilde{x}_i}| \leq  6\|\tujt\|\sqrt{\frac{k}{d}}$ for all $i\geq j$, which can be used to show that $\tilde{x}_{j:N}$ are not need to evaluate $\wall(\ujt)$. This fact is not immediate, but uses a similar analysis to \cite{bubeck19_highlyparallel} (see Lemma \ref{lem:wall}). 
\textbf{Case 4:} $\|\tujt\| \in [3\alpha,1] ~~\&~~ \|A\tujt\|_\infty \leq \frac{\|\tujt\|}{8d^{5.5}}$: This is the case which crucially benefits from the wall function. Because Lemma \ref{lem:opt-ocvg} ensures $\max\{|\ip{\tujt}{\tilde{x}_j}|\} \leq 6\|\tujt\|\sqrt{\frac{k}{d}}$, but $\|\tujt\|$ is not too small, the wall function ensures $\max\{\wall(\ujt),\nem_j(\ujt)\} \geq \nem(\ujt)$. As in Case 3, the wall function can be evaluated without $\tilde{x}_{j:N}$.

\paragraph{Concluding the proof of Theorem \ref{thm:main-lb}.}
Recall that by Yao's minimax principle, it suffices to prove a lower bound for a deterministic algorithm under a distribution over hard loss instances.
Thus let $\opt \in \cM_{\mathsf{det}}(m,n')$ for $n' \leq (N-1)\frac{d}{C_1\log(d)}$.
Lemma \ref{lem:resisting-consistent} shows that for any such $\opt$, with probability at least $1-O(1/d)$ (under the draw of the objective function) that the resisting oracle is consistent with the true gradient oracle responses when the loss is drawn from the distribution specified at the start of Section \ref{sec:proof-of-main-lb}.
This implies that with probability at least $1-O(1/d)$ $\opt$ will output the same solution as it would if run with the true oracle. 
Since we only wish to show that $\opt$ fails to output an $\alpha$-suboptimal solution using the true oracle w.p. at least $1-O(1/d)$, what remains is to show that the solution released by $\opt$ after Algorithm \ref{alg:opt} terminates is not $\alpha$-suboptimal w.p. at least $1-O(1/d)$.
To see this, observe that since $\hat{w}$ is independent of $x_N$, then a standard concentration inequality (Lemma \ref{lem:random-correlation}) implies that with probability at least $1-O(1/d)$ one has $|\max\{\ip{\hat w}{x_N},\ip{\hat w}{\tilde{x}_N}\}| \leq 2\|\proj^\perp_{\tilde{x}_{<N}} \hat w\|\sqrt{\frac{\log(d)}{d}}$. 
If $\|\proj^\perp_{\tilde{x}_{<N}} \hat w\| \leq 6\alpha$, then $\nem(\hat w) \geq 2\alpha - 12\alpha\sqrt{\frac{\log(d)}{d}} - N\gamma > \alpha$, provided $d$ is larger than some constant. 
Alternatively if $\|\proj^\perp_{\tilde{x}_{<N}} \hat w\| \geq 6\alpha$, then $\wall(\hat w) \geq 2\alpha$ by Lemma \ref{lem:wall} (Appendix \ref{app:resist-consist-lemmas}). 
By Lemma \ref{lem:minimizer}, the minimizer, $w^*$, satisfies $F(w^*) \leq 0$ w.p. at least $1-2^{\Omega(d)}$.
This implies that $\hat w$ is not an $\alpha$-suboptimal solution under an event that happens with probability at least $1-O(1/d)$.
Finally, we observe the number of oracle calls is
$T(N-1) = \Omega\Big(\frac{d}{\log(d)} \cdot \sqrt{\frac{d}{k}}~\Big) = \Omega\br{\frac{d^2}{\sqrt{m}\log(d)}}.$ Thus for some constant $C_4=O(1)$, $n = \frac{d}{C_4 \sqrt{m}\log(d)} \leq (N-1)\frac{d}{C_1\log(d)}$ as desired.

\appendix
\section{Supplementary Lemmas} \label{app:lemmas}
We will use the following lemma about random projection frequently. A standard reference is \cite[Lemma 5.3.2]{vershynin_hdp}, although constant factors can be obtained from \cite[
Chapter 7]{concentration-inequalities}.
\begin{lemma}(Random Projection) \label{lem:random-projection}
Let $k\in[d]$, $V\in \Gr(k,d)$ and $x\sim\Unif(\cS(1))$. Then 
\begin{align*}
    \pr{}{(1-\epsilon)\sqrt{\frac{k}{d}} \leq \|\proj_V x\| \leq (1+\epsilon)\sqrt{\frac{k}{d}} } \geq 1-2e^{-k\epsilon^2/2}
\end{align*}
\end{lemma}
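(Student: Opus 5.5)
The plan is to reduce to a standard Gaussian computation, using rotation invariance. Since $x$ is uniform on $\cS(1)$, the law of $\|\proj_V x\|$ depends only on $k=\Dim(V)$. We may therefore take $V=\Span(e_1,\dots,e_k)$.

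First, represent $x=g/\|g\|$ with $g\sim\mathcal{N}(0,I_d)$. Then
\[
\|\proj_V x\|^2=\frac{\sum_{i\le k} g_i^2}{\sum_{i\le d} g_i^2}.
\]
The goal is to show that this ratio lies in $[(1-\epsilon)^2k/d,(1+\epsilon)^2k/d]$ except with probability at most $2e^{-k\epsilon^2/2}$.

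The cleanest route to the stated constant is to avoid separate concentration of numerator and denominator. Instead, use the fact that $Y:=\|\proj_V x\|^2\sim\mathrm{Beta}(k/2,(d-k)/2)$, which has mean $k/d$. I would then invoke the concentration of Lipschitz functions on the sphere (L\'evy's isoperimetric inequality). The map $x\mapsto\|\proj_V x\|$ is $1$-Lipschitz on $\cS(1)$, so
\[
\Pr\big[\,\big|\|\proj_V x\|-M\big|\ge s\,\big]\le 2e^{-ds^2/2}
\]
with $M$ the median; this is the form in \cite[Chapter 7]{concentration-inequalities}. Setting $s=\epsilon\sqrt{k/d}$ gives exactly $2e^{-k\epsilon^2/2}$.

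It then remains to replace the median $M$ by $\sqrt{k/d}$. By Jensen, $\mathbb{E}\|\proj_V x\|\le\sqrt{k/d}$. The concentration bound itself gives $\mathrm{Var}(\|\proj_V x\|)=O(1/d)$, so $\mathbb{E}\|\proj_V x\|\ge\sqrt{k/d}-O(1/\sqrt d)$, and mean and median differ by $O(1/\sqrt d)$.

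The main obstacle is this last step. The shift from $\sqrt{k/d}$ to the median is of order $1/\sqrt d$, while the window has width $\epsilon\sqrt{k/d}$, so for small $k\epsilon^2$ it is not absorbed with the exact constant $1/2$. If precise constants are needed, I would instead do a direct Chernoff computation on the Beta variable. Write
\[
\Pr[Y\ge(1+\epsilon)^2k/d]=\Pr\Big[\textstyle\sum_{i\le k}g_i^2-c\sum_{i\le d}g_i^2\ge0\Big],\qquad c=(1+\epsilon)^2k/d.
\]
Optimize the MGF $\mathbb{E}e^{\lambda(\cdot)}$ over $\lambda$, using $\log(1+u)\le u-u^2/2+u^3/3$-type inequalities to extract $e^{-k\epsilon^2/2}$. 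The lower tail is handled symmetrically.

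For this paper's purposes, the constants only matter up to the uses made of them (e.g.\ $2e^{-k/2}\le d^{-3}$ with $k\ge 32\log d$). So I would accept the statement with constants as quoted from the cited references, and treat any loss in the constant in the exponent as harmless.
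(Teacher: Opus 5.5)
The paper gives no argument for this lemma. It cites \cite[Lemma 5.3.2]{vershynin_hdp}, whose exponent carries an unspecified constant $c$, and points to \cite[Chapter 7]{concentration-inequalities} for constants. So your closing fallback of deferring to the references is exactly what the paper does.

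As a proof of the bound with the stated constant, however, your lead route does not get there and is best dropped. L\'evy concentration is centered at the median, and moving the center to $\sqrt{k/d}$ costs a multiplicative factor in one tail. You locate the problem in the wrong place. The very small $k\epsilon^2$ regime is harmless: once $k\epsilon^2\le 2\ln 2$ the right-hand side $1-2e^{-k\epsilon^2/2}$ is nonpositive and the claim is vacuous. The loss instead persists as $k\epsilon^2$ grows. Also, the exponent $ds^2/2$ you quote is optimistic: standard statements of L\'evy's inequality on $\cS(1)\subset\re^d$ have $d-2$ (or $d-1$) in place of $d$.

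The Chernoff route you sketch is the right one, and it closes without any third-order bounds on the logarithm. Optimizing $\lambda$ in your MGF bound is the computation in Dasgupta and Gupta's elementary proof of the Johnson--Lindenstrauss lemma. For $\beta>1$ it gives
\[
\Pr\Big[Y\ge \beta\tfrac{k}{d}\Big]\le \beta^{k/2}\Big(1+\tfrac{(1-\beta)k}{d-k}\Big)^{(d-k)/2}\le \exp\Big(\tfrac{k}{2}\big(1-\beta+\ln\beta\big)\Big),
\]
and the identical bound holds for $\Pr[Y\le\beta k/d]$ when $\beta<1$.

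For the upper tail, take $\beta=(1+\epsilon)^2$; then $\ln(1+\epsilon)\le\epsilon$ alone gives a tail of at most $e^{-k\epsilon^2/2}$. For the lower tail, take $\beta=(1-\epsilon)^2$ with $\epsilon<1$; then $\ln(1-\epsilon)\le-\epsilon-\epsilon^2/2$ gives a tail of at most $e^{-k\epsilon^2}$. For $\epsilon\ge 1$ the lower event is empty, and $k=d$ is trivial. Summing the two tails proves the lemma with the stated constant.

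This is worth doing rather than declaring constant losses harmless. With an unspecified $c$, steps such as the union bound against $2^{kd'/16}$ subspaces in Lemma~\ref{lem:no-large-ortho-space} would need their numerical constants retuned.
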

\noindent Note the same bound holds for fixed $x$ with $V\sim \Unif(\Gr(k,d))$. The following concentration inequality follows as a corollary for the case when $k=1$.

\begin{lemma}(Random Correlation) \label{lem:random-correlation}
Let $u\in\cB(1)$ and $c>0$ and $x\sim\Unif(\cS(1))$. Then,
\begin{align*}
    \pr{}{|\ipnos{u}{x}| \geq \sqrt{\frac{c\log(d)}{d}}} \leq \frac{2}{(d)^{c/2}}.
\end{align*}
\end{lemma}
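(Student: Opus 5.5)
The plan is to derive Lemma \ref{lem:random-correlation} directly from Lemma \ref{lem:random-projection} with $k=1$. First, handle degenerate cases: if $u=0$ the event is empty, so there is nothing to prove. Otherwise write $u=\|u\|\hat u$ with $\hat u=u/\|u\|$ a unit vector. Since $\|u\|\le 1$, we have $|\ip{u}{x}|\le|\ip{\hat u}{x}|$, so it suffices to bound $\pr{}{|\ip{\hat u}{x}|\ge\sqrt{c\log(d)/d}}$. Take $V=\Span(\hat u)\in\Gr(1,d)$. Then $\|\proj_V x\|=|\ip{\hat u}{x}|$.

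Apply Lemma \ref{lem:random-projection} with $k=1$ and choose $\epsilon$ so that $(1+\epsilon)\sqrt{1/d}=\sqrt{c\log(d)/d}$, i.e.\ $1+\epsilon=\sqrt{c\log d}$. The upper-tail failure probability is then at most $2e^{-\epsilon^2/2}$. To reach the stated form $2d^{-c/2}=2e^{-c\log(d)/2}$, one needs $\epsilon^2\ge c\log d$. The choice $\epsilon=\sqrt{c\log d}-1$ falls slightly short of this.

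This constant slack is the only real obstacle. I would resolve it by using the sharper one-dimensional fact directly: for $x$ uniform on the sphere, $\pr{}{|\ip{\hat u}{x}|\ge s}\le 2e^{-ds^2/2}$. This is the standard spherical-cap bound, and it is the form in which Lemma \ref{lem:random-projection}'s upper tail is proved in the cited references via concentration of Lipschitz functions on $\cS(1)$, since $x\mapsto\ip{\hat u}{x}$ is $1$-Lipschitz with median $0$. Plugging in $s=\sqrt{c\log(d)/d}$ gives $2e^{-c\log(d)/2}=2d^{-c/2}$, with $\log$ taken as natural log.

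If one insists on using only the stated form of Lemma \ref{lem:random-projection}, the conclusion holds up to a constant change in $c$, and I would note this. Since the lemma is used only with slack constants throughout the paper, either route suffices.
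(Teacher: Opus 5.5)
Your proposal is correct and is essentially the paper's argument: the paper simply asserts the lemma as the $k=1$ case of Lemma~\ref{lem:random-projection}. You rightly notice that the stated form of that lemma only gives $2e^{-(\sqrt{c\log d}-1)^2/2}$, and switching to the sharp one-dimensional spherical-cap bound $\Pr[|\ip{\hat u}{x}|\ge s]\le 2e^{-ds^2/2}$ recovers exactly $2d^{-c/2}$. For that bound, cite the elementary cone argument (e.g.\ Ball's cap lemma) rather than general Lipschitz concentration on the sphere, since the latter is usually stated with $d-2$ or an unspecified constant in the exponent.
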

Note this implies a concentration inequality when $x$ is a uniformly random unit vector over a linear subspace of $\re^d$ as well, as the inner product is then determined by the component of $u$ in this subspace.

We will also use the following bounds on the size of packings/covers.
\begin{lemma}(\cite[Chapter 4.2.1]{vershynin_hdp})\label{lem:packing-ball}
Let $\epsilon \in [0,1]$, $\cG\subseteq \cB(1)$ and let $\cV$ be either an $\epsilon$-packing or a minimal $\epsilon$-cover of $\cG$. Then $|\cG| \leq (3/\epsilon)^n$.     
\end{lemma}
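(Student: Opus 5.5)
The claim is the standard volumetric bound. I read the conclusion as bounding $|\cV|$ rather than $|\cG|$, since $\cG$ itself may be infinite. I take the exponent $n$ to be the ambient dimension $d$. I handle the packing case first and then reduce the cover case to it.

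\emph{Packing case.} Let $\cV\subseteq\cG\subseteq\cB(1)$ be an $\epsilon$-packing, so distinct points of $\cV$ are at distance at least $\epsilon$. By the triangle inequality, the open balls of radius $\epsilon/2$ centered at the points of $\cV$ are pairwise disjoint. Each of them lies inside $\cB(1+\epsilon/2)$. Comparing Lebesgue volumes, and using that the volume of a ball of radius $r$ in $\re^d$ is $r^d$ times the volume of the unit ball, gives
\begin{align*}
|\cV|\,(\epsilon/2)^d \leq (1+\epsilon/2)^d ,
\qquad\text{so}\qquad
|\cV| \leq \Big(1+\frac{2}{\epsilon}\Big)^d \leq \Big(\frac{3}{\epsilon}\Big)^d .
\end{align*}
The last inequality uses $\epsilon\le 1$, which gives $1\le 1/\epsilon$.

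\emph{Cover case.} Here I interpret ``minimal'' as minimum cardinality. Take a maximal $\epsilon$-packing $\cP$ of $\cG$. Such a packing exists and is finite, because every $\epsilon$-packing has size at most $(3/\epsilon)^d$ by the previous step, so a packing of largest size is maximal. By maximality, every $g\in\cG$ lies within distance $\epsilon$ of some point of $\cP$; otherwise $g$ could be added to $\cP$. Hence $\cP$ is an $\epsilon$-cover of $\cG$. A minimum-cardinality $\epsilon$-cover $\cV$ therefore satisfies $|\cV|\le|\cP|\le(3/\epsilon)^d$.

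This argument works whether or not cover points are required to lie in $\cG$, because the packing we build does lie in $\cG$. The only point needing care is the meaning of ``minimal'': if it meant merely inclusion-minimal, the comparison with a packing would not be immediate. So I would state explicitly that minimal means minimum size, which is how the lemma is used for $\cV_{j,l}$. No step here is a genuine obstacle.
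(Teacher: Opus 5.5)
Your proof is correct. It is the standard argument from Vershynin's Section 4.2 that the paper cites in place of a proof: a volumetric bound on packings, plus the observation that a maximal packing is a cover, which bounds a minimum-size cover. One small note: the paper applies the lemma with the exponent equal to the dimension of a subspace containing $\cG$ (e.g.\ $3kl$ for $\cK_{j,l}\cap\cB(1)$, and $d'$ in Lemma~\ref{lem:packnum-lb-linear-dim}) rather than $d$, and your argument gives this verbatim if you compare volumes inside that subspace instead of in $\re^d$.
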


The following lemma quantifies the extent to which a set of vectors behaves like a basis, in some sense.
\begin{lemma}(\cite{RV09}, \cite[Lemma 1.11]{sing-vals-of-random-mats}) \label{lem:singular-vals}
If the rows of $U\in \re^{d' \times d}$, $\{u_1,\dots,u_{d'}\}$ satisfy 
$\|\proj_{u_{\neq j}} u_j\| \geq \tau$ for every $j\in [d']$, the minimum non-zero singular value of $U$ satisfies $\sigma_{min} \geq \frac{\tau}{\sqrt{d}}$.
\end{lemma}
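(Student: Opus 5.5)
We read the hypothesis as the usual ``distance to the span of the other rows'' condition, i.e.\ $\|\proj^\perp_{u_{\neq j}} u_j\| \geq \tau$ for every $j$. The literal projection onto $\Span(u_{\neq j})$ would not give a basis-like property. The plan is to show that $U^\top$ is injective and then lower bound $\|U^\top c\|$ for unit $c$ directly, by isolating one large coordinate of $c$.

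\emph{Step 1 (rank).} If $\tau>0$, no row lies in the span of the others, so the rows $u_1,\dots,u_{d'}$ are linearly independent. Hence $d'\le d$ and $\mathrm{rank}(U)=d'$. The nonzero singular values of $U$ are then exactly the $d'$ singular values of the injective map $U^\top:\re^{d'}\to\re^d$. Therefore
\[
\sigma_{\min}=\min_{c\in\re^{d'},\,\|c\|=1}\Big\|\sum_{i=1}^{d'} c_i u_i\Big\|.
\]

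\emph{Step 2 (isolate a heavy coordinate).} Fix a unit vector $c$. Since $\sum_i c_i^2=1$, some index $j$ satisfies $|c_j|\ge 1/\sqrt{d'}\ge 1/\sqrt d$. The projection $\proj^\perp_{u_{\neq j}}$ has operator norm at most $1$ and annihilates every $u_i$ with $i\neq j$. Hence
\[
\Big\|\sum_i c_iu_i\Big\| \;\ge\; \Big\|\proj^\perp_{u_{\neq j}}\sum_i c_iu_i\Big\| \;=\; |c_j|\,\|\proj^\perp_{u_{\neq j}}u_j\| \;\ge\; \frac{\tau}{\sqrt d}.
\]
Minimizing over $c$ gives $\sigma_{\min}\ge \tau/\sqrt d$. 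In fact this gives the slightly stronger bound $\tau/\sqrt{d'}$.

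\emph{Main point of care.} There is no real computational obstacle here. The only subtle step is Step 1: correctly identifying the nonzero singular values of $U$ with the singular values of $U^\top$ on all of $\re^{d'}$. This identification is what lets the variational minimum range over every unit $c$, rather than over a subspace that would have to be characterized separately.
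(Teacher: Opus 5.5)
Your proof is correct. It is essentially the standard Rudelson--Vershynin ``row bound'' (invertibility via distance) argument that the paper cites without reproducing: pick a coordinate with $|c_j|\ge 1/\sqrt{d'}$ and project off the other rows. Your reading of the hypothesis as $\|\proj^\perp_{u_{\neq j}} u_j\|\ge\tau$ is the one the paper actually uses when it applies the lemma (in the proofs of Lemmas~\ref{lem:slice-dim-yields-ortho-space} and~\ref{lem:Aprime-lb-A}), and your slightly stronger $\tau/\sqrt{d'}$ implies the stated $\tau/\sqrt{d}$.
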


This is particularly helpful in conjunction with the following lemma.
\begin{lemma}\label{lem:l1-coeff-norm}
Let $d' \le d$, and let $u_1,...,u_{d'} \in \re^d$ be linearly independent vectors and $U \in \re^{d \times d'}$ be the corresponding matrix. Assume the minimum singular value of $U$ is $\sigma_{min}$. Then for any $v \in \Span(u_{1:d'})$, the unique $c \in \re^{d'}$ such that $v = Uc$ satisfies $\|c\|_1 \leq \frac{\sqrt{d} \|v\|}{\sigma_{min}}$.
\end{lemma}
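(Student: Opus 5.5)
Since the $u_i$ are linearly independent and $v\in\Span(u_{1:d'})$, the coefficient vector $c$ with $v=Uc$ is unique. I will bound its Euclidean norm by the singular values of $U$, and then convert to an $\ell_1$ bound.

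\textbf{Step 1 (bound $\|c\|_2$).} Take the thin SVD $U = P\Sigma Q^\top$, where $P\in\re^{d\times d'}$ has orthonormal columns, $\Sigma$ is diagonal with entries $\sigma_1\ge\dots\ge\sigma_{d'}=\sigma_{min}>0$ (positive by linear independence), and $Q\in\re^{d'\times d'}$ is orthogonal. Then
\[
\|v\| = \|P\Sigma Q^\top c\| = \|\Sigma Q^\top c\| \geq \sigma_{min}\|Q^\top c\| = \sigma_{min}\|c\|,
\]
so $\|c\|_2 \le \|v\|/\sigma_{min}$. Equivalently, $c = U^{+}v$ and $\|U^+\|_{op}=1/\sigma_{min}$.

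\textbf{Step 2 (pass to $\ell_1$).} By Cauchy--Schwarz, $\|c\|_1 \le \sqrt{d'}\,\|c\|_2$. Since $d'\le d$, this gives
\[
\|c\|_1 \le \sqrt{d'}\,\frac{\|v\|}{\sigma_{min}} \le \frac{\sqrt{d}\,\|v\|}{\sigma_{min}},
\]
which is the claim.

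Nothing here is a real obstacle. The only point needing care is Step 1: $\sigma_{min}$ must be the smallest of the $d'$ singular values of the tall matrix $U$, which is nonzero precisely because the columns are independent. This also matches how Lemma~\ref{lem:singular-vals} is used, where "minimum non-zero singular value" coincides with $\sigma_{min}$ once independence is assumed. No earlier results from the paper are needed beyond basic linear algebra.
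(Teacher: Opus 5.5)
Your proof is correct and follows the same route as the paper: bound $\|c\|_2 \le \|v\|/\sigma_{min}$ via the smallest singular value, then apply $\|c\|_1 \le \sqrt{d'}\,\|c\|_2 \le \sqrt{d}\,\|c\|_2$. Your SVD computation just spells out the inequality $\|v\| \ge \sigma_{min}\|c\|$, which the paper states without proof.
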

\begin{proof}
Since $v \in \Span(U)$, %
there exists a unique ${c}\in\re^{d'}$ such that ${v} = {U}{c}$. We then have $\|v\| \ge \sigma_{min} \|c\|$ and $\|c\|_1 \le \sqrt{d'}\|c\|$, from which the lemma follows.\end{proof}

\section{Supplement to Section \ref{sec:game}} \label{app:game}

\subsection{Proof of Lemma \ref{lem:no-large-ortho-space}} \label{app:no-large-ortho-space}
Let $\cA = \{A' \in \Supp(A): \Dim(g(A')) \geq k\}$. Now consider any fixed $A' \in \cA$. Letting $a_j$ denote the $j$'th row of $A$, we have,
\begin{align*}
    \pr{A}{\max_{u\in g(A') \cap \cS(1)} \|A u\|_\infty \leq \sqrt{\frac{k}{4d}}
    } 
    &\leq \pr{A}{\forall j\in[d']: \max_{u \in g(A')\cap \cS(1)}\{|\ip{u}{a_j}|\} \leq \sqrt{\frac{k}{16d}}} \\
    &\leq \Pi_{j=1}^{d'} \pr{x\sim\Unif(\cS(0.5))}{\max_{u \in g(A')\cap \cS(1)}\{|\ip{u}{x}|\} \leq \sqrt{\frac{k}{16d}}} \\
    &\leq 2e^{-d'k/8}.
\end{align*}
The last inequality follows from the concentration results for random projection (see e.g. \cite[Lemma 5.3.2]{vershynin_hdp}) and the fact that each $a_j$ has $\|a_j\| \geq 0.5$.
Since $|\Range(g)| \leq 2^{kd'/16}$, 
we have $|\{g(A'): A'\in\cA\}| \leq 2^{kd'/16}$.
Taking a union bound over all $A' \in \Supp(A)$ and recalling $\log(|\Range(g)|) \leq kd'/16$,
\begin{align*}
    &\pr{A}{\exists A'\in\cA: \max_{u\in g(A')\cap \cS(1)} \|Au\|_\infty \leq \sqrt{\frac{k}{4d}}
    } \leq e^{\frac{kd'}{16}-\frac{d'k}{8}} \leq e^{-d'k/16}. 
\end{align*}

Since the event $\bc{\Dim(g(A)) \geq k \land \max\limits_{u\in g(A)\cap \cS(1)}\{\|A u\|_\infty\} \leq \sqrt{\frac{k}{4d}}}$ can only happen if both $A \in \cA$ and $\exists A'\in\cA: \max\limits_{u\in g(A')\cap \cS(1)}\{\|A u\|_\infty\} \leq \sqrt{\frac{k}{4d}}$, the lemma is proven.

\subsection{Proof of Lemma \ref{lem:slice-dim-yields-ortho-space}}
\label{app:slice-dim-yields-ortho-space}

\begin{lemma*}
Let $\gamma > 0$, $\hat{d}\in[d]$ and $\cG\subseteq \re^d$.
Then there exists a function $f_{\cG}:\re^{d'\times d} \mapsto \Gr(\hat{d},d)$ such that $\log(|\Range(f_{\cG})|) \leq \hat{d}\log(|\cG|)$ and
whenever there exists $\cY\subseteq \cG$ such that 
$\forall v\in\cY: \|Av\|_\infty \leq \gamma$, and
$\Gamma(\cY,\tau)\geq\hat{d}$, then
$\max_{v\in f_\cG(A)}\{\|Av\|_\infty\} \leq \|v\|\frac{\gamma d}{\tau}.$
\end{lemma*}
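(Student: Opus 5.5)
The plan is to define $f_\cG$ by a volume-maximization rule. Given $A$, let $\cY_A = \{v\in\cG: \|Av\|_\infty\le\gamma\}$. Among all $\hat d$-tuples $(v_1,\dots,v_{\hat d})$ of elements of $\cY_A$, choose one maximizing the $\hat d$-dimensional volume of the parallelepiped they span, i.e.\ $\sqrt{\det(V^\top V)}$ with $V=[v_1,\dots,v_{\hat d}]$; ties are broken by a fixed order on $\cG$. If $\cY_A$ has fewer than $\hat d$ independent elements, output an arbitrary fixed subspace. Set $f_\cG(A)=\Span(v_1,\dots,v_{\hat d})$. Since $f_\cG(A)$ is determined by a $\hat d$-subset of $\cG$ plus one default value, $\log|\Range(f_\cG)|\le \hat d\log|\cG|$, up to the harmless default. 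This gives the range bound. If $\cG$ is infinite the bound is vacuous; otherwise the maximum exists.

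For the approximation guarantee, assume some $\cY\subseteq\cG$ has near-orthogonal elements and $\Gamma(\cY,\tau)\ge \hat d$. Then $\cY\subseteq\cY_A$. First I show the chosen vectors form a good basis. Let $W_i=\Span(v_{\neq i})$, which has dimension $\hat d-1$. Since $\Gamma(\cY,\tau)\ge\hat d>\hat d-1$, some $y\in\cY$ satisfies $\|\proj^\perp_{W_i} y\|>\tau$. Replacing $v_i$ by $y$ multiplies the volume by $\|\proj^\perp_{W_i}y\|/\|\proj^\perp_{W_i}v_i\|$. Maximality therefore forces $\|\proj^\perp_{W_i}v_i\|\ge \|\proj^\perp_{W_i}y\|>\tau$ for every $i$. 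In particular the volume is positive, so the $v_i$ are independent and $\dim f_\cG(A)=\hat d$.

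Next I turn the good basis into the $\ell_\infty$ bound. Take $v\in f_\cG(A)$ and write $v=\sum_i c_iv_i$. Then $\|Av\|_\infty\le \sum_i|c_i|\,\|Av_i\|_\infty\le\gamma\|c\|_1$. The coefficients can be controlled directly: with $\pi_i=\proj^\perp_{W_i}$, we have $\pi_i v = c_i\pi_i v_i$. Hence $|c_i|\le \|v\|/\|\pi_iv_i\|<\|v\|/\tau$, and so $\|c\|_1\le \hat d\|v\|/\tau\le d\|v\|/\tau$. This gives $\|Av\|_\infty\le \|v\|\gamma d/\tau$ as claimed. Alternatively, one could combine Lemmas \ref{lem:singular-vals} and \ref{lem:l1-coeff-norm} to get $\|c\|_1\le d\|v\|/\tau$ via $\sigma_{\min}\ge\tau/\sqrt d$; either route works.

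The main step is the exchange argument showing that volume maximality plus large slice dimension yields $\|\proj^\perp_{W_i}v_i\|>\tau$. Its one subtle point is that it uses the definition of $\Gamma$ with the subspace $W_i$ of dimension $\hat d-1<\Gamma(\cY,\tau)$. It also relies on $\cY\subseteq\cY_A$, so that $y$ was an admissible candidate in the maximization.
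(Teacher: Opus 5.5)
Your proposal is correct and follows the paper's proof. Both choose $\hat d$ near-orthogonal vectors of maximal parallelepiped volume, use the exchange argument with $\Gamma(\cY,\tau)\ge\hat d$ to get $\|\proj^\perp_{W_i}v_i\|>\tau$, and then bound the $\ell_1$ norm of the coefficients. The differences are small improvements:
\begin{itemize}
\item Maximizing over $\cY_A=\{v\in\cG:\|Av\|_\infty\le\gamma\}$ makes $f_\cG$ an honest function of $A$, whereas the paper maximizes over the existentially given $\cY$.
\item Your direct bound $|c_i|\le\|v\|/\|\proj^\perp_{W_i}v_i\|$ replaces the paper's appeal to Lemmas \ref{lem:singular-vals} and \ref{lem:l1-coeff-norm}.
\item Two details to tidy. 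First, state up front that the maximal volume is positive; this holds because $\Gamma(\cY,\tau)\ge\hat d$ forces $\dim\Span(\cY)\ge\hat d$. As written, positivity is deduced from an exchange step that already assumes it. Second, take the default subspace to be one of the $\binom{|\cG|}{\hat d}$ candidate spans, as the paper does. Otherwise the extra default value breaks the range count when $\hat d=1$.
\end{itemize}
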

\begin{proof}
    Let $u_1, \ldots, u_{\hat{d}} \in \cY$ maximize the $\hat{d}$-dimensional volume of the parallelepiped $P(u_1, \ldots, u_{\hat{d}}) = \left\{\sum_{j=1}^{\hat{d}} c_j u_j: c_1, \ldots, c_{\hat{d}} \in [0,1]\right\}$ over all choices of $\hat{d}$ vectors from $\cY$. By the ``base times height'' formula for volume, we have that, for each $j \in [\hat{d}]$,
    \[
    \mathsf{Vol}_{\hat{d}}(P(u_1, \ldots, u_{\hat{d}}))
    = 
    \|\proj_{u_{\neq j}}^\perp u_j\|\, \mathsf{Vol}_{\hat{d}-1} (P(u_{\neq j})),
    \]
    where $P(u_{\neq j})$ is the parallelepiped spanned by all the chosen vectors except $u_j$. Since the volume of $P(u_1, \ldots, u_{\hat{d}})$ is maximal, it cannot be increased by replacing $u_j$ with any other vector in $\cY$, so we have,
    \begin{equation}
        \|\proj_{u_{\neq j}}^\perp u_j\| \ge \|\proj_{u_{\neq j}}^\perp v\| \quad \forall v \in \cY.
    \end{equation}
    Therefore, $\cY$ is contained in $\Span(u_{\neq j}) \msum \cB(\|\proj_{u_{\neq j}}^\perp u_j\|)$. Since $\hat{d} = \Gamma(\cY,\tau)$, and $\Span(u_{\neq j})$ is $\hat{d}-1$ dimensional, this implies that $\|\proj_{u_{\neq j}}^\perp u_j\| > \tau.$

    We can now apply Lemmas~\ref{lem:singular-vals}~and~\ref{lem:l1-coeff-norm} to the matrix $U$ whose columns are $u_1, \ldots, u_{\hat{d}}$, and get that any vector $v$ in $\Span(u_{1:\hat{d}})$ can be written as $v = \sum_{j=1}^{\hat{d}} c_j u_j$ with $\|c\|_1 \le \frac{d\|v\|}{\tau}$. Therefore, for any $v \in \Span(u_{1:\hat{d}})$ we can write,
    \[
    \|Av\|_{\infty} \le \sum_{j = 1}^{\hat{d}} |c_j| \|Au_j\|_{\infty} \le \frac{d\|v\|\gamma}{\tau}.
    \]
    To complete the proof, we choose $f_{\cG}(A) = \Span(u_{1:\hat{d}})$ whenever the subset $\cY$ exists. Since there are at most ${\cG \choose \hat{d}}\le |\cG|^{\hat{d}}$ choices of $\{u_1, \ldots, u_{\hat{d}}\}$, the requirement on the size of the range of $f_{\cG}$ is satisfied. When the subset $\cY$ does not exist, we take $f_{\cG}$ to map to the span of the first $\hat{d}$ elements of $\cG$ (the choice of this subspace does not matter).
\end{proof}

\subsection{Proof of Lemma \ref{lem:ortho-set-small-width}} \label{app:ortho-set-small-width}
For convenience, we recall the definition of $\cG_b$,
\begin{align*}
 \cG_b = \Big\{u\in\cS(1): \bigpr{A|B=b}{\|\tilde{A} u\|_\infty \leq \frac{1}{d^{4.5}}} \geq 2^{-d/c_1} \Big\},   \quad \forall b\in\bit^{m_1}.
\end{align*}

Before proving the lemma, we introduce some auxiliary lemma and additional notation.
For any $b\in \bit^m$, we define packing-cover, 
$\cG'_b \subseteq \cG_b$ which is a  maximal $\epsilon'$-packing (and $2\epsilon'$-cover) of $\cG_b$, with $\epsilon' = \frac{1}{d^{3}}$. The set $\cG_b'$ can be arbitrarily chosen, but is assumed to be a deterministic function of $\cG_b$. 
Before proving Lemma \ref{lem:ortho-set-small-width}, we will show $\cG_B'$ is usually not too big. 
The following lemma will be useful.
\begin{lemma} \label{lem:packnum-lb-linear-dim}
Let $\cY$ be a $\tau$-packing of $\cS(1)$ of size at least $2^{r}$, $r>0$.
Then $\Gamma(\cY, \tau/2) \geq \frac{r}{\log(6/\tau)}$. 
\end{lemma}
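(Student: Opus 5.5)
We argue by contradiction with a covering bound. Suppose $\Gamma(\cY,\tau/2) < \hat d := \frac{r}{\log(6/\tau)}$. Then there is a linear subspace $\cL$ with $\Dim(\cL)\le \hat d - 1$ (in particular $\Dim(\cL) < \hat d$) such that $\cY\subseteq \cL\msum\cB(\tau/2)$. We will show that such a subspace cannot host a $\tau$-packing of size $2^r$.

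The plan is to map each $y\in\cY$ to its projection $\proj_{\cL} y$.

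First, since $\|y-\proj_{\cL}y\|\le \tau/2$ for every $y\in\cY$, any two distinct $y,y'\in\cY$ satisfy
\[
\|\proj_\cL y-\proj_\cL y'\| \ge \|y-y'\| - \tau > 0 .
\]
This only shows the map is injective, which is not quantitative enough.

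To get a quantitative separation, I would work with a slightly enlarged radius. Strictly, $\Gamma<\hat d$ only gives containment in the $\tau/2$-neighborhood. I would therefore use the packing with strict separation $\|y-y'\|>\tau$, or else redo the argument with the constants absorbed into the $6$ inside the logarithm. The robust route is the following. The points $\proj_\cL y$ lie in $\cL\cap\cB(1)$, a ball of dimension $\hat d-1$. Cover $\cL\cap\cB(1)$ by balls of radius $\tau/2$ using Lemma \ref{lem:packing-ball} applied inside $\cL$; this requires at most $(6/\tau)^{\hat d-1}$ balls. Each such ball, enlarged by the tube width $\tau/2$, yields a set of diameter at most $2\tau$ in $\re^d$. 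It therefore contains only $O(1)^{\,\cdot}$ points of $\cY$ in the worst case, so this bound is too crude as stated.

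Instead I would choose the net radius so that each cell has diameter below $\tau$, making each cell contain at most one packing point. Concretely, each $y$ lies within $\tau/2$ of its projection. If two packing points had projections within distance $\epsilon$ of each other, then $\|y-y'\|\le \tau+\epsilon$. To make this contradict the packing, I would use the packing with the margin implicit in the statement, taking $\epsilon$ of order $\tau$ and absorbing the resulting constant into the $6/\tau$ via $\log(6/\tau)$.

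Comparing the two counts then gives
\[
2^r \le |\cY| \le (6/\tau)^{\hat d-1} < 2^{\hat d\log(6/\tau)} = 2^r,
\]
which is a contradiction.

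The main obstacle is this constant-factor bookkeeping: the tube radius $\tau/2$ plus the net radius must stay below the packing separation $\tau$, while the net size must remain at most $(6/\tau)^{\dim\cL}$. I would resolve this by using strict packing separation, with $\cL$ taken as a minimizer so that the containment is in the closed tube. Alternatively, a volumetric argument handles it directly: disjoint balls of radius $\tau/2$ around the points of $\cY$, intersected with the tube, are compared against the volume of the $(\tau)$-neighborhood of $\cL\cap\cB(1)$. In that approach the factor of $\tau^{-(d-\dim\cL)}$ contributed by the tube directions cancels, giving a count of order $(c/\tau)^{\dim\cL}$.
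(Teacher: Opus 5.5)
The proposal never bounds how many points of $\cY$ can project into a single cell, so the chain $2^r\le|\cY|\le(6/\tau)^{\hat d-1}$ is unsupported.

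Shrinking the net until ``each cell has diameter below $\tau$'' is impossible at tube width $\tau/2$. Take $p\in\cL$ with $\|p\|^2=1-\tau^2/4$ and a unit vector $e\perp\cL$. The points $p\pm\frac{\tau}{2}e$ are unit vectors within $\tau/2$ of $\cL$, at distance exactly $\tau$, with the same projection. So every cell has diameter at least $\tau$ (in fact $\sqrt{\tau^2+4\epsilon^2}$ for net radius $\epsilon$). There is also no margin in the statement to absorb the extra $\epsilon$: under strict or non-strict separation, two packing points per cell are allowed.

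The volumetric route does not cancel the tube directions either. Packing the disjoint balls $\cB(y,\tau/2)$ into $\{x:\|\proj_\cL x\|\le 1+\tau/2,\ \|\proj_\cL^\perp x\|\le\tau\}$ gives
\[
|\cY|\le 2^{d-\dim\cL}\,\frac{\omega_{\dim\cL}\,\omega_{d-\dim\cL}}{\omega_d}\,\Big(1+\frac{2}{\tau}\Big)^{\dim\cL},
\]
where $\omega_n$ is the volume of the unit ball of $\re^n$, so the middle ratio is at least $1$. The powers of $\tau$ cancel, but a factor $2^{d-\dim\cL}$ survives. That is fatal when $r=\Theta(d)$, as in Lemma \ref{lem:prob-in-bgood}. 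Intersecting the balls with the tube does not help: for points near the tube's boundary, the intersection keeps only an exponentially small fraction of the ball when $d-\dim\cL=\Omega(d)$.

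Your observation that projection only yields $\|\proj_\cL y-\proj_\cL y'\|\ge\|y-y'\|-\tau$ is correct, and it is exactly where the paper's own proof slips. The paper claims the projections form a $\tau\sqrt{3}/2$-packing by bounding $\|\proj_\cL^\perp v-\proj_\cL^\perp v'\|^2\le\tau^2/4$. The example above shows this difference can have norm $\tau$, so that argument as written only works for tube width $\tau/4$. That weaker version would still suffice downstream after adjusting constants.

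For the statement as written, the missing idea is a multiplicity bound in place of injectivity. Suppose $v_1,\dots,v_n\in\cY$ have $p_i:=\proj_\cL v_i$ in a common ball $\cB(c,\tau/4)$ with $c\in\cL$. Let $q_i:=\proj_\cL^\perp v_i$, so $\|q_i\|\le\tau/2$. Using $\sum_{i,j}\|x_i-x_j\|^2=2n\sum_i\|x_i-\bar x\|^2\le 2n\sum_i\|x_i-z\|^2$ for any point $z$, we get
\[
n(n-1)\tau^2\le\sum_{i,j}\|v_i-v_j\|^2=\sum_{i,j}\|p_i-p_j\|^2+\sum_{i,j}\|q_i-q_j\|^2\le 2n\sum_i\|p_i-c\|^2+2n\sum_i\|q_i\|^2\le\frac{5}{8}n^2\tau^2,
\]
so $n\le 2$. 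Cover $\cL\cap\cB(1)$ by at most $(12/\tau)^{\dim\cL}$ balls of radius $\tau/4$ (Lemma \ref{lem:packing-ball}). Then $2^r\le|\cY|\le 2(12/\tau)^{\dim\cL}$, i.e.\ $\Gamma(\cY,\tau/2)\ge (r-1)/\log(12/\tau)$. This is the lemma up to constants, which is all its use in Lemma \ref{lem:prob-in-bgood} requires.
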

\begin{proof}
Let $\cL$ be some linear subspace.
We start by establishing that if points $v,v' \in \cY$ are within distance $\tau/2$ to $\cL$, then
$\|\proj_{\cL} v - \proj_{\cL} v'\| \geq \tau \sqrt{3/4}$, so the projection of $\cY$ onto $\cL$ is a $(\tau\sqrt{3/4})$-packing of $\proj_{\cL} \cS(1)$.
To see this, observe
\begin{align*}
    \tau^2 \leq &\|v-v'\|^2 \leq \|\proj_{\cL} v - \proj_{\cL} v'\|^2 + \|\proj_{\cL}^\perp v - \proj_{\cL}^\perp v'\|^2 \leq \|\proj_{\cL} v - \proj_{\cL} v'\|^2 + \frac{\tau^2}{4} \\
    &\implies \|\proj_{\cL} v - \proj_{\cL} v'\|^2 \geq \frac{3\tau^2}{4}
\end{align*}

Now observe that, for any linear subspace, $\cL$, of dimension $d'$ we have that $\proj_{\cL}\cS(1)$ is a $d'$-dimensional ball of radius $1$.
Thus its $(\tau\sqrt{3/4})$-packing number is at most $(6/\tau)^{d'}$ by Lemma~\ref{lem:packing-ball}.
This implies
$(6/\tau)^{d'} \geq 2^{r}$ and thus $d' \geq \frac{r}{\log(6/\tau)}$ as claimed. 
\end{proof}

We now can prove that the packing-cover is not too big with high probability.
\begin{lemma} \label{lem:prob-in-bgood}
It holds that 
$\pr{A}{|\cG_B'| \leq 2^{d/100}} \geq 1-2^{-\Omega(d^2/\log(d))}$. 
\end{lemma}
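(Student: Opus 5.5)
We argue by contradiction through a union bound over messages $b$, using Lemmas \ref{lem:packnum-lb-linear-dim}, \ref{lem:slice-dim-yields-ortho-space} and \ref{lem:no-large-ortho-space}. Fix any $b$ with $|\cG'_b| > 2^{d/100}$; set $\cG'_b$ is a deterministic function of $b$, so there are at most $2^{m_1}$ such sets. The plan is to show that for each such $b$, the event $\{B=b\}$ forces, with non-negligible conditional probability, the existence of a large-slice-dimension set of nearly orthogonal points. We then encode a whole subspace orthogonal to $A$ with few bits, which Lemma \ref{lem:no-large-ortho-space} forbids.

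\textbf{Step 1 (many good points are simultaneously orthogonal).} By definition, every $u\in\cG'_b\subseteq\cG_b$ has $\PP_{A|B=b}[\|\tilde A u\|_\infty\le d^{-4.5}]\ge 2^{-d/c_1}$. Summing, the expected number of $u\in\cG'_b$ that are nearly orthogonal (conditioned on $B=b$) is at least $|\cG'_b|2^{-d/c_1}$. Markov's inequality, applied to the complement count, then gives an event of conditional probability at least $2^{-d/c_1-1}$ on which at least $|\cG'_b|2^{-d/c_1-1}\ge 2^{d/200}$ points of $\cG'_b$ are nearly orthogonal, provided $c_1$ is large.

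Since $\tilde A$ only omits at most $T$ rows, I handle this by restricting to the $d'-T\ge d/3$ rows not returned. Alternatively, I absorb the choice of returned indices into the encoding at cost $T\log d'$ bits.

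\textbf{Step 2 (large slice dimension).} Call this set of orthogonal points $\cY$. It is a $d^{-3}$-packing of size at least $2^{d/200}$, so Lemma \ref{lem:packnum-lb-linear-dim} gives
\[
\Gamma\big(\cY,\tfrac{1}{2d^3}\big)\ge \frac{d}{200\log(12d^3)}=:\hat d=\Omega\big(d/\log d\big).
\]

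\textbf{Step 3 (encode a subspace).} Apply Lemma \ref{lem:slice-dim-yields-ortho-space} with $\cG=\cG'_b$, $\gamma=d^{-4.5}$ and $\tau=\frac{1}{2d^3}$. This yields a $\hat d$-dimensional subspace on which $\|\tilde A v\|_\infty\le 2d^{-0.5}\|v\|$. The function $g(A)=f_{\cG'_B}(A)$ is specified by $B$ plus the selection of $\hat d$ points, so its range has size at most
\[
2^{m_1}\cdot|\cG'_b|^{\hat d}\le 2^{m_1+\hat d\log|\cG'_b|}.
\]
Lemma \ref{lem:packing-ball} bounds $\log|\cG'_b|\le 3d\log(3d)$, for a total of $m_1+O(d^2)$ bits.

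This is the main obstacle: it is too many bits for the $2^{\hat d d'/16}$ budget of Lemma \ref{lem:no-large-ortho-space}. To repair it, I will choose only $\hat d' = \hat d/ C$ of the points, or use a coarser sub-packing, so that the per-point cost is only $O(\log d)$ relative to the dimension gained. Concretely, I will apply Lemma \ref{lem:no-large-ortho-space} with the threshold dimension $k'=\hat d$ and check that
\[
m_1+\hat d\log|\cG'_b|\le \hat d d'/16.
\]
If $\log|\cG'_b|$ is too large, I will instead restrict to a subset of $\cG'_b$ of size exactly $2^{d/100}$ upfront; this is allowed since the maximal packing can be truncated deterministically. Then $\log|\cG'_b|=d/100$, and the cost is at most $m_1+\hat d d/100<\hat d d'/16$ using $m_1\le d^2/(64\log d)$. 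The orthogonality threshold $2d^{-1/2}\le\sqrt{\hat d/(16d)}$ also holds for large $d$.

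\textbf{Step 4 (conclude).} Lemma \ref{lem:no-large-ortho-space}, applied to the rows of $\tilde A$, gives that this event has unconditional probability at most $2e^{-\Omega(d\hat d)}=2^{-\Omega(d^2/\log d)}$. On the other hand, the event has probability at least
\[
\PP[B\in\mathcal{B}_{\text{bad}}]\cdot 2^{-d/c_1-1},
\]
where $\mathcal{B}_{\text{bad}}$ is the set of messages with large $\cG'_b$. Therefore
\[
\PP[|\cG'_B|>2^{d/100}]\le 2^{d/c_1+1}\,2^{-\Omega(d^2/\log d)}=2^{-\Omega(d^2/\log d)}.
\]
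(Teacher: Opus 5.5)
Your proposal follows the paper's proof essentially step by step. You truncate to a deterministic subset $\cG''_b\subseteq\cG'_b$ of size $2^{d/100}$ and use reverse Markov to get $2^{d/200}$ points nearly orthogonal to $\tilde A$ with conditional probability $2^{-O(d)}$. You then turn this packing into slice dimension $\hat d=\Theta(d/\log d)$ via Lemma~\ref{lem:packnum-lb-linear-dim}, and encode a $\hat d$-dimensional, $2d^{-1/2}$-orthogonal subspace with $m_1+\hat d d/100$ bits via Lemma~\ref{lem:slice-dim-yields-ortho-space}. Finally you contradict Lemma~\ref{lem:no-large-ortho-space} after a union bound over the possible sets of unreturned rows. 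Your reverse-Markov bound $2^{-d/c_1-1}$ is slightly sharper than the paper's $2^{-d/200}$, but this is immaterial.

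One caveat applies to your proposal and equally to the paper's proof, which asserts the same inequality. The budget check $m_1+\hat d d/100<\hat d d'/16$ does not follow from $m_1\le d^2/(64\log d)$. Since $\hat d\approx d/(600\log d)$, the right-hand side is only about $d^2/(19200\log d)$, and even less with $d'-T$ rows.

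This is a constant-level slip and can be repaired in two ways. You can assume $m_1\le d^2/(C\log d)$ for a suitably large $C$. Alternatively, keeping the stated constants, do not invoke Lemma~\ref{lem:no-large-ortho-space} at its threshold $\sqrt{\hat d/(16d)}$. Instead use a small-ball bound at the much smaller level $2d^{-1/2}$ that your subspace actually achieves. For a fixed $\hat d$-dimensional subspace $\cL$ and a uniform row $a$ of norm at least $1/2$, the probability that $\|\proj_{\cL}a\|\le 2d^{-1/2}$ is at most $(16e/\hat d)^{\hat d/2}$, by the Beta tail of $\|\proj_{\cL}a\|^2/\|a\|^2$. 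Over at least $d/3$ independent rows this gives $2^{-\Omega(d\hat d\log\hat d)}=2^{-\Omega(d^2)}$. That comfortably beats a range of size $2^{m_1+\hat d d/100+T\log d}=2^{O(d^2/\log d)}$.
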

\label{app:prob-in-bgood}
\begin{proof}
We will show for some $\hat{d}\in[d]$ that when $\cG_B'$ is large, it is possible to map $A$ to a $\hat{d}$ dimensional subspace nearly orthogonal to $\tilde{A}$ with much less than $d' \hat{d}$ bits, which by Lemma \ref{lem:no-large-ortho-space} is a contradiction unless $\pr{B}{|\cG_B'| \geq 2^{d/100}}$ is small. Recall $B$ is a deterministic function of $A$, and also that $\tilde{A}$ is the matrix of unobserved rows of $A$.

In the following, consider the event $B=b$ for any $b$ such that $|\cG_b'| \geq 2^{d/100}$. 
Let $\cG_b'' \subseteq \cG_b'$ be a deterministically chosen subset of $\cG_b'$ of size $2^{d/100}$. 
Let 
$\cY_{b,\tilde{A}}'' =|\{v\in\cG_b'': \|\tilde{A} v\|_\infty \leq \frac{1}{d^{4.5}} \}|$.  
Since $v\in\cG_b'' \subseteq \cG_b' \subseteq \cG_b$, by the definition of $\cG_b$ we have
$\forall v\in \cG_b'', \pr{A|b}{v\in \cY_{b,\tilde{A}}''} \geq 2^{-\frac{d}{c_1}}$.
This implies
$\ex{A|b}{|\cY_{b,\tilde{A}}''|} = \sum_{v \in \cG_b''} \pr{A|b}{v\in \cY_b''} \geq 2^{\frac{d}{100}-\frac{d}{c_1}} \geq 2^{\frac{d}{200}}$ (for $c_1$ large enough).
Since $|\cG''_b| = 2^{d/100}$ with probability $1$, we always have $|\cY''_{b,\tilde{A}}| \leq 2^{d/100}$.
We now have for any $b$ such that $|\cG'_b| \geq 2^{d/100}$, (the following expectations are taken w.r.t. $A$ given $B=b$), 
\begin{align}\label{eq:yba-bound}
   &\pr{A|B=b}{|\cY_{B,\tilde{A}}''| \geq \frac{1}{2}\ex{A|B=b}{|\cY_{B,\tilde{A}}''|}}2^{d/100} + \frac{1}{2}\ex{A|B=b}{|\cY_{B,\tilde{A}}''|} \geq \ex{A|B=b}{|\cY_{B,\tilde{A}}''|} \nonumber \\
   &\implies \pr{A|B=b}{|\cY_{B,\tilde{A}}''|\geq 2^{d/200-1}} \geq 2^{-d/200}. 
\end{align}

Above, we use the fact that for any random variable $X$,  we can construct another random variable $Y$ satisying $\E[X] \le \E[Y]$ via \(Y=\ind\{X\geq \E[X]/2\}\max\limits_{x\in\Supp(X)}\{x\} + \ind\{X < \E[X]/2\}\ex{}{X}/2.\)

Under the event $|\cG_B'| > 2^{d/100}$ and $|\cY_{B,\tilde{A}}''|\geq 2^{d/200-1}$, by Lemmas \ref{lem:packnum-lb-linear-dim} and \ref{lem:slice-dim-yields-ortho-space}, there exists a linear subspace, $\cL$, of dimension at least $\hat{d}:=\frac{d-200}{200\log(6/\epsilon')}$ such that $\max_{v\in\cL}\{\|\tilde{A}v\|_\infty\} \leq (\frac{1}{d^{4.5}})\frac{d}{\epsilon'} \leq \frac{1}{\sqrt{d}}$.
Further, $\cL$ can be identified with $\hat{d} \log_2(|\cG_{B}''|)$ bits. Since $\cG_{B}''$ is of size at most $2^{d/100}$ and itself is specified with at most $m_1$ bits, $\cL$ can be specified with $\frac{d \hat{d}}{100} + m_1$ bits. 
Consider some fixed $\sigma \subseteq [d']$, $|\sigma| \geq d'-T$,
and let $A_\sigma$ denote the restriction of $A$ to the rows indexed by $\sigma$.
Lemma \ref{lem:no-large-ortho-space} implies that no function of at most $m' :=\frac{(d'-T)\hat{d}}{16} \leq \frac{d'\hat{d}}{32}$ bits that depend on $A_\sigma$ can generate a $\hat{d}$ dimensional subspace satisfying $\max_{u\in \cL\cap\cS(1)}\{\|A_\sigma u\|_\infty\} \leq \frac{1}{d} \leq \sqrt{\frac{k}{16d}}$ with probability more than $2^{-\Omega((d'-T) \hat{d})}=2^{-\Omega(d^2/\log(d))}$. 
We indeed have $\frac{d\hat{d}}{100}+m_1 \leq \frac{d\hat{d}}{100}+\frac{d^2}{64\log(d)} \leq m'$.
By a union bound, the probability that a function of $m'$ bits can generate a subspace orthogonal to \textit{any} $A_{\sigma}$ for $\sigma \subset [d']$ and $|\sigma|\geq d'-T$, is at most $\sum\limits_{p=d'-T}^{d'}{{d'}\choose{p}} 2^{-\Omega(d^2/\log(d))} = 2^{-\Omega(d^2/\log(d))}$ provided $d$ is larger than some constant. 
Since there exists $\tilde{\sigma} \subseteq [d']$ such that $\tilde{A} = A_{\tilde{\sigma}}$, 
using Eqn. \eqref{eq:yba-bound}
it must be that
$2^{-\frac{d}{200}} \pr{}{|\cG_B'| \geq 2^{d/100}} \leq 2^{-\Omega(d^2/\log(d))}$ and so
$\pr{A}{|\cG_B'| \geq 2^{d/100}} \leq 2^{-\Omega(d^2/\log(d))}$.
\end{proof}

We now prove Lemma \ref{lem:ortho-set-small-width}.
\begin{lemma*}
For any $A'\in\Supp(A)$ and $b\in \bit^m$, let 
$\cY_{b,A'} := \{v\in\cG_b: \|A'v\|_\infty \leq \frac{1}{d^{4.5}}\}$. It holds that
$\pr{A}{\Gamma(\cY_{B,A}, \frac{1}{2d^{1.5}}) \leq k} \geq 1-2^{-\Omega(dk)}$.
\end{lemma*}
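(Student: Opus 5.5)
The argument is by contradiction through an encoding bound: if the slice dimension of $\cY_{B,A}$ exceeds $k$, then $A$ can be mapped, using few bits, to a $k$-dimensional subspace nearly orthogonal to $A$, which Lemma \ref{lem:no-large-ortho-space} forbids except with probability $2^{-\Omega(dk)}$.

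\textbf{Step 1 (pass to the packing-cover).} Work with $\cG_B'$ instead of $\cG_B$. By Lemma \ref{lem:prob-in-bgood}, except with probability $2^{-\Omega(d^2/\log d)}\le 2^{-\Omega(dk)}$ (as $k\lesssim d/\log d$), we have $|\cG_B'|\le 2^{d/100}$. Define $\cY'_{B,A}=\{v\in\cG_B': \|Av\|_\infty\le 2/d^{4.5}\}$.

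Every $v\in\cY_{B,A}$ has some $v'\in\cG_B'$ with $\|v-v'\|\le 2\epsilon'=2d^{-3}$. However, $\|Av'\|_\infty\le \|Av\|_\infty+\|A(v-v')\|_\infty$, and the second term may be as large as $2d^{-3}$, which is far bigger than $d^{-4.5}$. So the cover point need not be nearly orthogonal, and this is the main obstacle.

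I would try two fixes, in this order. First, refine the cover: take $\epsilon'$ of order $d^{-5.5}$. This costs only $\log(1/\epsilon')=O(\log d)$ in $\log|\cG_B'|$, and the bound $|\cG_B'|\le 2^{d/100}$ still holds since Lemma \ref{lem:prob-in-bgood} only used $\log(6/\epsilon')=O(\log d)$. Second, if that fails, apply Lemma \ref{lem:slice-dim-yields-ortho-space} directly to $\cY_{B,A}$. Its max-volume basis can be snapped to cover points, and the good-basis property survives a $d^{-5}$ perturbation when $\tau=\Theta(d^{-1.5})$. In either case, each $v\in\cY_{B,A}$ has a cover point $v'$ with $\|Av'\|_\infty\le 2d^{-4.5}$. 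Also, if $\cY_{B,A}\not\subseteq\cL\msum\cB(\frac{1}{2d^{1.5}})$ for every $k$-dimensional $\cL$, then $\cY'_{B,A}\not\subseteq\cL\msum\cB(\frac{1}{2d^{1.5}}-2\epsilon')$. Hence $\Gamma(\cY'_{B,A},\frac{1}{4d^{1.5}})\ge k+1$ on the bad event.

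\textbf{Step 2 (encode an orthogonal subspace).} On the bad event, apply Lemma \ref{lem:slice-dim-yields-ortho-space} with $\cG=\cG_B'$, $\cY=\cY'_{B,A}$, $\hat d=k$, $\gamma=2d^{-4.5}$ and $\tau=\frac{1}{4d^{1.5}}$. This produces a $k$-dimensional subspace $f_{\cG_B'}(A)$ on which $\|Au\|_\infty\le \gamma d/\tau=O(d^{-2})\le\sqrt{k/(16d)}$ for unit $u$. The full map $g(A)=f_{\cG_{h_1(A)}'}(A)$ is determined by $B$, which costs $m_1$ bits, together with a choice of $k$ points of $\cG_B'$, which costs $k\cdot d/100$ bits. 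So $\log|\Range(g)|\le m_1+kd/100$. Using $m_1\le kd/32$ (from $k\ge 32m_1/d$) and $d'=d/2$, this is at most $kd'/16+kd'/50$. To make it fit under the budget of Lemma \ref{lem:no-large-ortho-space}, I would apply that lemma with dimension $\hat d=ck$ for a constant $c<1$, or adjust constants in the hypothesis $k\ge 32 m_1/d$. I expect only constant bookkeeping to be needed here: the range-size requirement scales linearly in $\hat d$, while the slice dimension $k+1$ gives freedom to choose $\hat d\le k$.

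\textbf{Step 3 (conclude).} By Lemma \ref{lem:no-large-ortho-space}, the probability that $g(A)$ has dimension at least $\hat d$ and is nearly orthogonal to $A$ is at most $2e^{-\Omega(d'k)}$. The bad event implies that event, up to the $2^{-\Omega(d^2/\log d)}$ failure probability from Step 1. A union bound then gives $\pr{A}{\Gamma(\cY_{B,A},\frac{1}{2d^{1.5}})\le k}\ge 1-2^{-\Omega(dk)}$.
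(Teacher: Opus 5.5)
Your overall outline is the paper's: pass to the packing-cover $\cG'_B$, compare slice dimensions through the $2\epsilon'$-covering property, encode a max-volume basis with $m_1+\hat k\log|\cG'_B|$ bits, and contradict Lemma~\ref{lem:no-large-ortho-space}. However, Step~1 has a genuine gap. The obstacle you flag is created by defining $\cY'_{B,A}$ with the threshold $2/d^{4.5}$, and neither of your fixes removes it.

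Refining the cover to $\epsilon'\approx d^{-5.5}$ breaks Lemma~\ref{lem:prob-in-bgood}. That proof does not use $\epsilon'$ only through $\log(6/\epsilon')$. The packing separation $\epsilon'$ is also the slice-dimension scale $\tau=\epsilon'/2$ fed into Lemma~\ref{lem:slice-dim-yields-ortho-space} with $\gamma=d^{-4.5}$. This gives orthogonality $\gamma d/\tau=2d^{-3.5}/\epsilon'$, which is $O(d^{-1/2})$ for $\epsilon'=d^{-3}$ but equals $2d^{2}$ (vacuous) for $\epsilon'=d^{-5.5}$. So you lose $|\cG'_B|\le 2^{d/100}$ at the finer scale. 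It also cannot be recovered from the coarse bound, since refining a cover from scale $d^{-3}$ to $d^{-5.5}$ can multiply its size by $d^{\Theta(d)}$.

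Your snapping alternative is stated for a $d^{-5}$ perturbation, which presupposes the same fine cover. With the actual $2d^{-3}$-cover, the snapped points satisfy only $\|Av'\|_\infty\le d^{-4.5}+2d^{-3}$. So your claim that ``in either case'' every $v\in\cY_{B,A}$ has a cover point with $\|Av'\|_\infty\le 2d^{-4.5}$ is unsupported, and Step~2's choice $\gamma=2d^{-4.5}$ rests on it.

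The missing observation is that no $d^{-4.5}$-scale orthogonality is needed for the cover points. Lemma~\ref{lem:no-large-ortho-space} only asks for $\sqrt{k/(16d)}$, which is vastly weaker. So, as the paper does, set $\epsilon'=d^{-3}$ and define
$\cY'_{B,A}=\{v\in\cG'_B:\|Av\|_\infty\le 3\epsilon'\}$.
Since the rows of $A$ have norm at most $1$, the map $v\mapsto\|Av\|_\infty$ is $1$-Lipschitz. Hence every $v\in\cY_{B,A}$ has a cover point in $\cY'_{B,A}$ within $2\epsilon'$, which gives $\Gamma(\cY_{B,A},\frac{1}{2d^{1.5}})\le\Gamma(\cY'_{B,A},\frac{1}{3d^{1.5}})$. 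Applying Lemma~\ref{lem:slice-dim-yields-ortho-space} with $\gamma=3d^{-3}$ and $\tau=\frac{1}{3d^{1.5}}$ yields $\gamma d/\tau=9d^{-1/2}$. This is below $\sqrt{k/(16d)}$ for large $d$ because $k\ge 32\log d$. With this relaxation no refinement or basis-perturbation argument is needed, and the rest of your outline goes through.

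One smaller point on Step~2: applying the lemma with $\hat d=ck$, $c<1$, goes the wrong way. The roughly $kd/32$ bits spent on $B$ do not shrink with $\hat d$, while the budget $\hat d d'/16$ does. The right move is to use the full slice dimension and adjust constants. The paper's own bookkeeping is loose here, effectively taking $d'=d$, and there is slack in the union bound behind Lemma~\ref{lem:no-large-ortho-space}.
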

It is worth briefly emphasizing that $\cY_{B,A}$ pertains to points orthogonal to $A$ (not $\tilde{A}$), in contrast to the set $\cY''_{B,\tilde{A}}$ used in the proof of Lemma \ref{lem:prob-in-bgood}.
\begin{proof}[Proof of Lemma \ref{lem:ortho-set-small-width}]
Define $\cY_{b,A'}' = \{v \in \cG_b': \|A' v\|_\infty \leq 3\epsilon'\}$ for any $b,A'$ in their range. This set is an analog of $\cY_{b,A}$ for the packing-cover, $\cG_b'$, with a weaker orthogonality condition. We will start by analyzing the slice dimension of $\cY_{B,A}'$.

Condition on $|\cG'_B| \leq 2^{d/100}$, which happens with probability at least $1-2^{-\Omega(d^2/\log(d))
}$ by Lemma \ref{lem:prob-in-bgood}.  
Let $\hat{k}=\Gamma(\cY_{b,A}', \frac{1}{3d^{1.5}})$.
By Lemma \ref{lem:slice-dim-yields-ortho-space}, there then exists a mapping which uses an additional $\log(|\cG'_B|)\hat{k} \leq \frac{d\hat{k}}{100}$ 
bits about $A$ that produces a $\hat{k}$-dimensional linear subspace, $\cL$, satisfying 
\[\max_{v\in\cL}\{\|Av\|_\infty\} \leq 2\epsilon'd(3d^{1.5}) \leq \frac{1}{\sqrt{d}},\] (recall $\epsilon' = \frac{1}{d^{3}})$. 
Since $\cG_b$ takes at most $m_1$ bits to specify, this whole process uses at most $m_1 + \frac{d\hat{k}}{100} \leq \frac{dk}{32} + \frac{d\hat{k}}{100}$ bits. 

Now for $\hat{k} > k$, Lemma \ref{lem:no-large-ortho-space} 
implies that
generating the previously described subspace $\cL$
with $\frac{dk}{32} + \frac{d\hat{k}}{100} < \frac{d\hat{k}}{16}$ bits cannot succeed with probability better than $2^{-\Omega(d\hat{k})}$.
The process succeeds whenever $\hat{k} \geq k$ and $|\cG_B| \leq 2^{d/100}$, thus we have,
$\pr{A}{\Gamma(\cY_{b,A}', \frac{1}{3d^{1.5}}) > k,~ |\cG'_B| \leq 2^{d/100}} \leq 2^{-\Omega(d\hat{k})} \leq 2^{-\Omega(dk)}$.
Since $\pr{A}{\Gamma(\cY_{b,A}', \frac{1}{3d^{1.5}}) > k,~ |\cG'_B| > 2^{d/100}} \leq \pr{A}{|\cG'_B| > 2^{d/100}} \leq 2^{-\Omega(d^2/\log(d))}$ by Lemma \ref{lem:prob-in-bgood},
this implies $\Gamma(\cY_{b,A}', \frac{1}{3d^{1.5}}) = \hat{k} \leq k$ with probability at least $1-2^{-\Omega(d k)}$.

We now show that if $v\in \cY_{B,A}$, then there exists $v'\in\cY_{B,A}'$ such that $\|v'-v\|\leq 2\epsilon$, which will establish $\Gamma(\cY_{B,A},\frac{1}{2d^{1.5}}) \leq \Gamma(\cY_{B,A},\frac{1}{3d^{1.5}} + 2\epsilon') \leq \Gamma(\cY_{B,A}',\frac{1}{2d^{1.5}})$ and complete the proof.
Indeed, for any $v\in\cY_{B,A}$, there is some $v\in\cB(2\epsilon',v)\cap \cG_B'$ since $\cG_B'$ is a $2\epsilon'$ cover of $\cG_B$.  
The $1$-Lipschitzness of $v\mapsto \|Av\|_\infty$ then implies $\|A v'\|_\infty \leq \frac{1}{d^{4.5}} + 2\epsilon' \leq 3\epsilon'$ and thus $v'\in\cY_{B,A}'$ as well.
\end{proof}

\subsection{Proof of Lemma \ref{lem:bits-bound-posterior}} \label{app:bits-bound-posterior}
\begin{lemma*}
Let $A,B$ be random variables where $\Supp(B)=\bit^s$, $s\in\cZ^+$. Then for any $\delta \in [0,1]$,
$\pr{B}{\exists \cE \subseteq \Supp(A): \pr{}{A\in\cE | B} \leq 2^{s+\log(1/\delta)}\pr{}{A\in\cE}} \leq \delta.$
\end{lemma*}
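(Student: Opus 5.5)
The plan is to prove the intended form of the statement, in which the bad event is that some $\cE$ has its conditional probability \emph{inflated} by more than the factor $2^{s+\log(1/\delta)} = 2^s/\delta$. This is how the lemma is used in the proof of Proposition~\ref{prop:msgh-bound}. As literally typeset, the inner inequality points the other way, and that version is trivially violated by $\cE=\Supp(A)$ whenever $2^{s}/\delta\ge 1$, which always holds since $s\ge 1$ and $\delta\le 1$. So the claim I would prove is
\[
\pr{B}{\exists \cE \subseteq \Supp(A): \pr{}{A\in\cE \mid B} > \tfrac{2^{s}}{\delta}\,\pr{}{A\in\cE}} \leq \delta .
\]
The key observation is that a \emph{single} condition on the realized value $b$ controls every event $\cE$ at once, so no union bound over $\cE$ is needed.

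\textbf{Step 1: reduce to a condition on $b$.} For any $b$ with $\pr{}{B=b}>0$ and any $\cE$, I would write
\[
\pr{}{A\in\cE\mid B=b}=\frac{\pr{}{A\in\cE,\,B=b}}{\pr{}{B=b}}\le \frac{\pr{}{A\in\cE}}{\pr{}{B=b}}.
\]
Hence if $\pr{}{B=b}\ge \delta 2^{-s}$, then simultaneously for all $\cE$ we get $\pr{}{A\in\cE\mid B=b}\le \frac{2^s}{\delta}\pr{}{A\in\cE}$. In general $A$ need not be discrete, and then $\cE$ should be read as a measurable set. The inequality $\pr{}{A\in\cE,B=b}\le\pr{}{A\in\cE}$ is just monotonicity of probability, and this is fine because $B$ takes only finitely many values.

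\textbf{Step 2: bound the mass of the bad values.} The bad values are those $b\in\bit^s$ with $0<\pr{}{B=b}<\delta 2^{-s}$. There are at most $2^s$ of them, so their total probability is less than $2^s\cdot \delta 2^{-s}=\delta$. Values of $b$ with probability zero occur with probability zero and can be ignored. Combining the two steps gives the claim.

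I expect no real obstacle beyond fixing the direction of the inequality and handling conditioning on events of probability zero, both of which are addressed above.
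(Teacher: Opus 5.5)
Your argument is correct and is essentially the paper's proof. The paper likewise calls $b$ bad when some $\cE$ has $\Pr[A\in\cE\mid B=b]/\Pr[A\in\cE]\ge 2^s/\delta$, uses Bayes' rule (equivalent to your bound $\Pr[A\in\cE,B=b]\le\Pr[A\in\cE]$) to conclude that every bad $b$ has $\Pr[B=b]\le\delta 2^{-s}$, and sums over the at most $2^s$ bad values; your reading of the inner inequality in the inflation direction is also the one the paper's proof establishes and the one used in the proof of Proposition~\ref{prop:msgh-bound}.
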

The proof uses similar ideas to \cite[Theorem 17]{dwork2015generalization}.
\begin{proof}
Let $\mathsf{Bad} \subset \Supp(B)$ be the set such that $b \in \mathsf{Bad}$ if there exists $\cE \in \Supp(A)$ such that 
$\frac{\pr{}{B=b|A\in \cE}}{\pr{}{B=b}} \geq 2^s / \delta$. 
Since $\pr{}{B=b|A \in \cE} \leq 1$, we have that for any $b \in \mathsf{Bad}$ that $\pr{}{B=b} \leq \delta/2^s$.
Now since 
$\frac{\pr{}{B=b|A\in\cE}}{\pr{}{B=b}} = \frac{\pr{}{B=b,A\in\cE}}{\pr{}{B=b}\pr{}{A\in\cE}} = \frac{\pr{}{A\in\cE|B=b}}{\pr{}{A\in\cE}}$ 
and $|\mathsf{Bad}| \leq 2^s$, the claim follows.
\end{proof}

\subsection{Proof of Lemma \ref{lem:G-updates}} \label{app:G-updates}
\begin{lemma*}
Let $\cG(\beta, t) = \{u\in\cS(1): \PP[\|\tilde{A} u\|_\infty \leq \frac{1}{d^{4.5}}] \geq \beta \}$, where the probability is taken with respect to the conditional distribution of $A$ given the realizations of $B$, $q$, $u_{\leq t}$, and $\hat{a}_{\leq t}$. Then for any $\beta_0\in[0,1]$, 
with probability at least $1-\frac{1}{d^3}$ (w.r.t. $A$) it holds $\forall t\in [T]$ that
$\cG(2^{4t\log(d)}\beta_0, t) \subseteq \cG(\beta_0, 0).$
\end{lemma*}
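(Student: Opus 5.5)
We argue by induction on $t$, showing that each round of row queries multiplies the posterior probability of any event about the unseen rows by at most $d^4=2^{4\log(d)}$, outside an event of probability $O(d^{-4})$ per round. A union bound over $t\in[T]$ then gives failure probability at most $T\cdot d^{-4}\le d^{-3}$. Chaining the per-round bounds, we get for every $u\in\cS(1)$
\[
\PP\Big[\|\tilde{A}u\|_\infty\le d^{-4.5}\,\Big|\,B,q,u_{\le t},\hat{a}_{\le t}\Big]\le 2^{4t\log(d)}\,\PP\Big[\|\tilde{A}u\|_\infty\le d^{-4.5}\,\Big|\,B,q\Big].
\]
This is exactly the inclusion $\cG(2^{4t\log(d)}\beta_0,t)\subseteq\cG(\beta_0,0)$. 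It holds uniformly in $\beta_0$, so we may take $\beta_0$ arbitrary.

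For a single round we condition on the history $H_{t-1}=(B,q,u_{<t},\hat{a}_{<t})$. Since the player is deterministic (Yao), $u_t$ and $\eta_t$ are functions of $H_{t-1}$. The only new information is the response $\hat{a}_t=a_{i_t}$, which is determined by two things. The first is the index $i_t\in[d']$, which is a discrete quantity with at most $d'\le d$ values, i.e.\ $\log d$ bits. The second is the value of the revealed row itself. That row is removed from $\tilde{A}$, so it never enters the event $\|\tilde{A}u\|_\infty\le d^{-4.5}$, which concerns only unrevealed rows. We then apply Lemma~\ref{lem:bits-bound-posterior} conditionally on $H_{t-1}$, with $s=\log d$ and $\delta=d^{-4}$. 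This gives that, with probability at least $1-d^{-4}$ over $i_t$, every event $\cE$ about $A$ has its conditional probability inflated by at most $2^{\log d+4\log d}$. Choosing $\delta=d^{-3}$ instead and absorbing constants gives the target factor $2^{4\log d}$ per round.

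The main obstacle is the revealed row's value. Conditioning on $\hat{a}_t$ is not a finite-bit message: the argmax condition $\ip{u_t}{a_j}-\eta_{t,j}\le\ip{u_t}{a_{i_t}}-\eta_{t,i_t}$ depends on the actual value of $a_{i_t}$, and it constrains the remaining rows. To handle this, we use independence of the rows. Conditioned on $i_t$ and on the full history, the unrevealed rows are independent of the revealed rows' values except through these argmax inequalities. We then discretize $a_{i_t}$ (or the scalar threshold $\ip{u_t}{a_{i_t}}-\eta_{t,i_t}$) to precision $\mathrm{poly}(d)^{-1}$. This makes the threshold an $O(\log d)$-bit quantity, and the posterior for $\tilde{A}$ then depends only on $(i_t,\text{threshold})$. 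The event $\|\tilde{A}u\|_\infty\le d^{-4.5}$ is robust to this rounding at the $d^{-4.5}$ scale only up to a boundary event, so we will need a short continuity argument to show that event has negligible probability.

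A second subtlety is that $\tilde{A}$ shrinks over time. Since $\tilde{A}$ is defined as the rows unreturned by the end of the game, the event is fixed, and only the conditioning changes. This is what makes the chaining in the first paragraph valid. If the index bit count together with the discretized threshold yields more than $4\log d$ bits, we adjust the constant and let $C_1$ absorb it, since Proposition~\ref{prop:msgh-bound} only needs $2^{O(t\log d)-d/(3c_1)}=2^{-\Omega(d)}$.
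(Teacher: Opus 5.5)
Your route differs from the paper's, and the step you postpone (the ``short continuity argument'') is where the whole argument lives.

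You turn the round-$t$ response into a finite message (the index $i_t$ plus a rounded threshold) and invoke Lemma~\ref{lem:bits-bound-posterior}. That lemma controls $\PP[\cE \mid H_{t-1}, i_t, \mathrm{round}(\tau_t)]$. The player, however, conditions on the exact $a_{i_t}$, and hence on the exact $\tau_t=\langle u_t,a_{i_t}\rangle-\eta_{t,i_t}$. A bound under a coarser conditioning does not transfer to a finer one.

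Under your own independence assumption, the exact posterior is $\PP[\cE\cap\cA(\tau_t)]/\PP[\cA(\tau_t)]$. Here $\cA(\tau)$ is the event that every unrevealed score $\langle u_t,a_j\rangle-\eta_{t,j}$ is at most $\tau$. Within a bin of width $1/\mathrm{poly}(d)$ this denominator can drop by an unbounded factor. For example, the posterior given $B$ may pin one unrevealed row's score to a window far narrower than the bin. So the difficulty is not robustness of $\{\|\tilde{A}u\|_\infty\le d^{-4.5}\}$ to rounding, since that event does not involve $\tau_t$ at all. The difficulty is showing that $\PP[\cA(\tau_t)\mid H_{t-1}]$ is rarely small.

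That is the paper's key idea, and it needs no bit counting:
\begin{itemize}
\item $\cA(\tau)$ is increasing in $\tau$.
\item Take the quantile $\tau^*$ with $\PP[\cA(\tau^*)\mid H_{t-1}]=d^{-4}$. Since $\{\tau_t\le\tau^*\}\subseteq\{\tilde{A}\in\cA(\tau^*)\}$, with probability $1-d^{-4}$ one has $\PP[\cA(\tau_t)\mid H_{t-1}]\ge d^{-4}$.
\item Bayes then gives inflation by at most $d^4$ for every event. The index $i_t$ needs no separate accounting.
\item A union bound over $T<d$ rounds gives exactly the factor $2^{4t\log d}$ and probability $1-d^{-3}$.
\end{itemize}

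Your bookkeeping cannot reach those constants. With $s=\log d$ and $\delta=d^{-4}$ you get a factor $d^5$ per round. With $\delta=d^{-3}$ you get $d^4$, but the total failure probability is $Td^{-3}$, not $d^{-3}$. The rounded threshold adds still more bits. So even a repaired version of your argument proves a variant with a worse exponent, not the stated inclusion.

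Separately, you claim that, given the history and $i_t$, the unrevealed rows depend on $a_{i_t}$ only through the argmax inequalities. This is false in general, because $B=h_1(A)$ can couple rows. Here is a construction:
\begin{itemize}
\item Let $B$ store $R(a_1)a_2$ and $R(a_1)a_3$ to precision $d^{-10}$. Here $R$ is a measurable map from $\cS(1)$ to the orthogonal group that pushes the uniform measure to Haar measure.
\item Then $B$ is independent of $a_1$. Given $B$, for any fixed unit $u$, the probability that $|\langle a_2,u\rangle|,|\langle a_3,u\rangle|\le d^{-4.5}$ is of order $d^{-8}$.
\item Choose $\eta_1$ to force $i_1=1$, so the argmax constraint is vacuous. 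This reveals $a_1$, and with it $a_2$ and $a_3$.
\item For $u\perp a_2,a_3$ that factor becomes essentially $1$. The factor from rows $4,\dots,d'$ is unchanged, so the posterior grows by order $d^8>d^4$. The inclusion therefore fails for a suitable $\beta_0$.
\end{itemize}

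The paper's proof takes the same step: it asserts that ``$\cE$ is conditionally independent of $\hat{a}_t$ given $\tau_t$ and $\cH_t$''. So its quantile argument does not resolve this issue. Your proposal inherits the same gap, and any repair has to handle row correlations induced by $B$ and $q$.
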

\begin{proof}
We will prove the result by induction. Specifically, we will show that if $\cG(2^{4(t-1)\log(d)}\beta_0, t-1) \subseteq \cG(\beta, 0)$, then 
$\pr{}{\cG(2^{4t\log(d)}\beta_0, t) \subseteq \cG(\beta, 0)} \geq 1-\frac{1}{d^4}$. The base case for $\cG(\beta,0)$ clearly holds.

For any $t\in[T]$, denote $\tau_t=\ip{\hat{a}_t}{u_t} - \hat{\eta}_{t,i_t}$ (recall $i_t$ is the index of the row returned from the query). %
Let $\tilde{\eta}_t$ denote the values of $\eta_t$ associated with the rows of $\tilde{A}$ (in the corresponding order).
Let $\cH_t$ be the $\sigma$-algebra generated by the randomness in Algorithm \ref{alg:MSGH} up until the query is made at round $t$. That is, the randomness in $\{B,q,u_{1:t},\eta_{1:t},\tau_{1:t-1}, \hat{a}_{1:t-1}\}$. %

Now let $\cE$ be some event on $\tilde{A}$.
Let $\cA_t(\tau)$ denote the set of all possible realizations of $\tilde{A}$ consistent with $\cH_t$ that result in $\tau_t = \tau$.
Observe that since $\hat{a}_t$ is never a row of $\tilde{A}$, any realization of $\tilde{A}$ satisfying $\|\tilde{A} u_t - \tilde{\eta}_t\|_\infty \leq \tau_t$ leads to the same values of $\hat{a}_t$ and $\tau_t$. 
By Bayes rule, the posterior distribution of a random variable conditioned on the event that the random variable is in some set is just a renormalization over that set. Thus, the posterior distribution of $\tilde{A}$ after receiving $\hat{a}_t$ and $\tau_t$ is obtained by a renormalization over the set $\cA_t(\tau_t)$, which does not depend on $\hat{a}_t$; consequently the event $\cE$ is conditionally independent of $\hat{a}_t$ given $\tau_t$ and $\cH_t$.
This also implies the event is conditionally independent of $u_{t+1}$ and $\eta_{t+1}$, as these are post processing. 
Thus $\PP[\cE| \cH_{t+1}] = \PP[\cE| \tau_t, \cH_{t}] 
    = \PP[\cE | \tilde{A} \in \cA_t(\tau_t), \cH_t]$.
We then have,
\begin{align}\label{eq:event-increase}
    \PP[\cE| \cH_{t+1}] = \PP[\cE | \tilde{A} \in \cA_t(\tau_t), \cH_t] = \frac{\PP[\tilde{A} \in \cA_t(\tau_t)~|~ \cE,\cH_t]\PP[\cE ~|~ \cH_t]}{\PP[\tilde{A} \in \cA_t(\tau_t) | \cH_t]} 
    \leq \frac{\PP[\cE ~|~ \cH_t]}{\PP[\tilde{A} \in \cA_t(\tau_t) ~|~ \cH_t]}.
\end{align}%
We would thus like to lower bound $\PP[\tilde{A} \in \cA_t(\tau_t) ~|~ \cH_t]$.
Now consider $\tau_t^*\in\re$ as the largest value such that $\prnos{}{\tilde{A}\in\cA_t(\tau_t^*)} = \frac{1}{d^4}$.
For any $\tau_1 < \tau_2$, $\cA_t(\tau_1) \subseteq \cA_t(\tau_2)$. 
Thus so long as $\tau_t \geq \tau_t^*$, we have
$\PP[\tilde{A} \in \cA_t(\tau_t) | \cH_t] \geq \frac{1}{d^4}$.
Since $\tau \leq \tau^*$ implies $\tilde{A} \in \cA_t(\tau^*)$, $\PP[\tau_t \leq \tau^*] \leq \PP[\tilde{A}\in\cA_t(\tau^*)] \leq \frac{1}{d^4}$.
Thus we have $\PP[\tau_t \geq \tau^* ~|~ \cH_t] \geq 1-\frac{1}{d^4}$, and then under this event, by Eqn. \eqref{eq:event-increase}
no event on $\tilde{A}$ increases in probability by more than a factor of $d^4$.
Thus with probability at least $1-\frac{1}{d^4}$, for every $v \notin \cG_B$,
\begin{align*}
    \pr{}{\|\tilde{A}v\|_\infty \leq \frac{1}{d^{4.5}} ~|~ \cH_{t+1}} \leq d^4\cdot \pr{}{\|\tilde{A}v\|_\infty \leq \frac{1}{d^{4.5}} ~|~ \cH_{t}} \leq 2^{4(t+1)\log(d)}\beta_0. %
\end{align*}
This establishes the induction hypothesis. 
Now since $T \leq \frac{d}{C_1\log(d)} < d$, by the induction argument and a union bound the lemma statement follows.
\end{proof}

\section{Supplement to Section \ref{sec:deterministic}}
Some of the proofs in this section will use the following fact.
\begin{lemma}\label{lem:marginal-unif-vec}
Let $j\in[N],l\in[L]$. Fix any realization of $\cL_{j,l}$. The conditional distribution of $\hat{x}_{j,l}$ is uniform over $\cL_{j,l}^\perp \cap \cS(1)$. 
\end{lemma}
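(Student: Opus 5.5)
The plan is to exploit the rotational invariance of the construction. We fix a realization of $\cL_{j,l}$ and compute the conditional law of $\hat{x}_{j,l}$ by first conditioning on $\cK_{j,l}$ and then averaging over it.

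First, note what $\cL_{j,l}$ depends on. It is built from $\cL_{j,l-1}$ and $\cL'_{j,l}$, which are determined by $A$, the earlier Nemirovski vectors, and the optimizer's state. None of these involve $\cK_{j,l}$: the subspace $\cK_{j,l}$ is sampled independently at the start of Algorithm~\ref{alg:detalg-opt} and is only used afterwards, through $\hat{x}_{j,l}$ and $\cV_{j,l}$. So $\cK_{j,l}$ is independent of $\cL_{j,l}$, and conditionally on $\cL_{j,l}$ it remains $\Unif(\Gr(3kl,d))$. This independence is the point that needs care; I would verify it by inspecting the order of sampling in the algorithm.

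Second, by construction $\hat{x}_{j,l}$ is uniform on $\cK_{j,l}\cap\cL_{j,l}^\perp\cap\cS(1)$ given both subspaces. Since $\Dim(\cL_{j,l})\le 2kl < 3kl$, a uniformly random $3kl$-dimensional $\cK_{j,l}$ almost surely intersects $\cL_{j,l}^\perp$ in a subspace $W$ of dimension exactly $3kl-\Dim(\cL_{j,l})\ge kl\ge 1$. In particular this intersection is nonempty, so the sample is well defined.

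Third, we use symmetry. Let $O$ be any orthogonal map fixing $\cL_{j,l}$, equivalently any orthogonal map of $\cL_{j,l}^\perp$ extended by the identity on $\cL_{j,l}$. The law of $\cK_{j,l}$ is invariant under $O$, and $O(\cK\cap\cL^\perp)=O\cK\cap\cL^\perp$. Hence the law of $W$ is invariant under the orthogonal group of $\cL_{j,l}^\perp$. Given $W$, the vector $\hat{x}$ is uniform on $W\cap\cS(1)$, and this conditional law is equivariant under $O$. It follows that the marginal law of $\hat{x}_{j,l}$ is a probability measure on the unit sphere of $\cL_{j,l}^\perp$ that is invariant under that sphere's full orthogonal group. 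The only such measure is the uniform measure, which proves the lemma.
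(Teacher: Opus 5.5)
Your proposal is correct and follows essentially the same route as the paper. The paper argues that $\cK_{j,l}\cap\cL_{j,l}^\perp$ is a uniformly random subspace of $\cL_{j,l}^\perp$ of dimension at least one, and then marginalizes over $\cK_{j,l}$; you make explicit the two points the paper leaves implicit, namely that $\cK_{j,l}$ is independent of $\cL_{j,l}$ (the marked subspace is chosen before the hint $x_{j,l}$ is revealed) and the invariance under orthogonal maps of $\cL_{j,l}^\perp$ that turns ``uniformly random subspace'' into a uniform marginal for $\hat{x}_{j,l}$.
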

\begin{proof}
This follows from the fact that the orthogonal complement of $\cL_{j,l}$ in $\cK_{j,l}$, $\cK_{j,l}\cap\cL_{j,l}^\perp$, is a uniformly random linear subspace over $\cL_{j,l}^\perp$ of dimension at least $1$. 
Thus, sampling a unit vector uniformly at random in $\cK_{j,l}\cap\cL_{j,l}^\perp$ results in a uniformly random unit vector over $\cL_{j,l}^\perp$ after marginalizing over $\cK_{j,l}$.
\end{proof}

\subsection{Proof of Lemma \ref{lem:detalg-opt-ocvg}} \label{app:detalg-opt-ocvg}
\begin{lemma*}
For any $j\in[N-1], l\in[L]$. 
There exists a strategy for choosing $\cL_{j,l}' \in \Gr([2k],d)$ such that,
\begin{align*}
\underset{}{\PP}\Big[\forall t\in[T]: ~\|A\ujlt\|_\infty \geq \frac{\|\ujlt\|}{2d^{4.5}} ~\text{ or }~  \frac{\ujlt}{\|\ujlt\|} \in \cL_{j,l}' \msum\cB\Big(\frac{1}{d^{1.5}}\Big)\Big] = 1-O(1/d^3).
\end{align*}
The probability is taken with respect to all randomness in Algorithm \ref{alg:detalg-opt}.
\end{lemma*}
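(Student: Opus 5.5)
The plan is to reduce round $(j,l)$ of Algorithm \ref{alg:detalg-opt} to a play of the MSGH on the same matrix $A$. The player simulates $\opt$ during the $T$ queries of the round. We then take $\cL_{j,l}'$ to be the subspace the MSGH adversary marks, of dimension at most $2k$, and invoke Proposition \ref{prop:msgh-bound} with threshold $2k$ in place of $k$.

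\textbf{Step 1 (the message $h_1$).} The message must contain everything, depending on $A$, that the player needs to reproduce $\opt$'s behaviour in this round, except the current vector $x_{j,l}$. We fix all randomness independent of $A$ (the subspaces $\cK_{j',l'}$, the covers $\cV_{j',l'}$, and the seeds used to draw the $\hat x$'s), which is allowed since we can average over it. Given this, the message consists of:
\begin{itemize}
\item the $m$-bit memory state of $\opt$ at the start of round $(j,l)$;
\item the indices of the previously sampled Nemirovski vectors $x_{j',l'}$ for $(j',l')$ preceding $(j,l)$, each an element of the fixed cover $\cV_{j',l'}$.
\end{itemize}
Since $\cK_{j',l'}$ has dimension $3kl'$, each such index costs $O(kl'\log d)$ bits by Lemma \ref{lem:packing-ball}. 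Summing over the at most $NL$ earlier vectors gives
\[
O\bigl(\textstyle\sum_{j'\le j}\sum_{l'\le L} kl'\log d\bigr)=O(NkL^2\log d).
\]
With $L=\frac{d}{32C_2k\log d}$ and $N=k^{1/3}/32$, this is $O\bigl(k^{1/3}d^2/(k\log d)\bigr)$. The point to check is that this, plus $m=dk/32$, stays within $m_1\le 2k\,d/32$ and $m_1\le d^2/(64\log d)$. The assumption $k\ge 32d^{3/5}$ and the constants in $N$ and $L$ are chosen to make this hold; roughly, it needs $k^{1/3}d/\log d\lesssim k^2$. With these bits the player can recompute the resisting-oracle responses for $F_{j,l-1}$ restricted to earlier vectors. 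The barrier responses are rows of $A$, and these will come from MSGH row queries.

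\textbf{Step 2 (the hint and the row queries).} The only additional $A$-dependent quantity in round $(j,l)$ is $x_{j,l}$. It depends on $\cL_{j,l}$, which depends on the marked subspace and hence on $A$ and $h_1$. Since $x_{j,l}\in\cV_{j,l}$, it can be specified with $O(kl\log d)\le O(kL\log d)\le d/C_2$ bits; the value of $L$ is chosen exactly so that this fits the hint budget $m_2$. So we set $h_2(\cL,A)$ to be this index. The marked subspace $\cL$ is determined by $A$ and $h_1$, consistent with the MSGH ordering.

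During the round, for each query $\ujlt$ the player must supply $\tilde\cO_{j,l}(\ujlt)$. It computes the Nemirovski part itself, since it knows all of $x_{\le(j,l)}$. For the barrier part it issues the MSGH query $u_t=\ujlt$ with $\eta_t=0$, which returns the maximizing row. Two points need care here:
\begin{itemize}
\item Absolute values: $\|Aw\|_\infty$ involves $\max_i|\langle a_i,w\rangle|$. We handle this with two row queries, $\pm\ujlt$, doubling $T$. Equivalently, we enlarge $A$ by symmetrization; this changes only constants, since $T<d/(C_1\log d)$ leaves slack.
\item Tie-breaking: the true oracle breaks ties by lowest index. Ties occur with probability zero, so this is not an issue.
\end{itemize}
With the row, the player knows $\barr(\ujlt)$ and hence the full oracle response. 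The simulated $\opt$ therefore follows exactly its real trajectory.

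\textbf{Step 3 (conclusion).} By Proposition \ref{prop:msgh-bound}, applied with $2k\ge 32\max\{m_1/d,\log d\}$, the player loses with probability $1-O(1/d^3)$. Losing means every query either violates near orthogonality or has normalized direction inside $\cL\msum\cB(d^{-1.5})$. This is exactly the event in \eqref{eq:detalg-opt-ocvg}, with $\cL_{j,l}'=\cL$. The main obstacle is the bit accounting in Step 1: the information about $A$ leaked through past Nemirovski vectors must fit into $m_1$ given the cover sizes. This forces the relations among $N$, $L$ and $k$, and it relies on marginalizing the non-$A$ randomness so that the covers are fixed. A secondary subtlety is that $\cL_{j,l}$ depends on earlier marked subspaces, which are functions of $A$ and the recorded state. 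These can be recomputed from the message, so no further bits are needed.
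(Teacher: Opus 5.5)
Your proposal is correct and is essentially the paper's reduction. The message $h_1$ is the memory state $B_{j,l}$ plus the cover indices of the earlier Nemirovski vectors, which fits in $m_1=2m$; the hint is $h_2=x_{j,l}$; $\cL'_{j,l}$ is the adversary's marked subspace; offsets are zero; and Proposition \ref{prop:msgh-bound} is applied with threshold $2k$. Your handling of the absolute value in $\|Aw\|_\infty$ by querying $\pm\ujlt$ addresses a point the paper glosses over, but two remarks are slightly inaccurate. First, Algorithm \ref{alg:detalg-opt} sets the round length exactly at the MSGH query limit, so there is no slack: doubling the queries costs a factor of $2$ in $T$, which is harmless for Theorem \ref{thm:detalg-lb}. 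Second, the earlier marked subspaces cannot be recomputed from the message alone; they are recomputed by $h_1,h_2$ from $A$ itself, which is all the simulation needs.
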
 
\begin{proof}
Define $B_{j,l}$ as the latest $m$-bit memory state of $\opt$ before the first oracle query of round $(j,l)$ is made. We denote the set of all previous round indices as $\cI = ([j]\times [L]) \cup (\{j\} \times [l-1])$. 

We will show that if Eqn. \eqref{eq:detalg-opt-ocvg} were not satisfied, then there exists a player strategy which wins the MSGH with non-negligible probability. 
Specifically, we consider a player, $\cP,$ playing the MSGH with embedding dimension $d$, matrix size $d'=d/2$, query limit $\frac{d}{C_1\log(d)}$, and message size $m_1 = 2m$, hint size $m_2 = \frac{d}{C_2}$, 
dimension threshold $2k$, and norm parameters $\Delta_1,...,\Delta_{d'}=1$.

The player will implement a strategy for the MSGH by using $\opt$ and Algorithm \ref{alg:detalg-opt}. %
Consider running Algorithm \ref{alg:detalg-opt} with $\opt$ until round $(j,l)$, where the matrix $A$ from Algorithm \ref{alg:detalg-opt} is set to be the MSGH matrix, which we note has the same distribution. 
For the purposes of proving the result of the lemma for round $(j,l)$, the process for choosing the subspaces $\{\cL'_{j',l'}\}_{(j',l')\in\cI}$ does not matter, so long as they each satisfy the dimension bound.
Note that running $\opt$ up to round $(j,l)$ also samples the Nemirovksi vectors $\{x_{j',l'}\}_{(j',l')\in\cI}$.

The player chooses the function $h_1$ to be set such that $h_1(A)$ outputs $B_{j,l}$ (computed by running $\opt$ as previously described) concatenated with the bit representations of the Nemirovski vectors, $\{x_{j',l'}\}_{(j',l')\in\cI}$, as determined by their corresponding covers. 
Note we can assume $\cP$ retains access to the quantities $\{\cK_{j,l}\}_{j\in[N],l\in[L]}$ and their associated covers, since they are independent of $A$ (i.e. they can be sampled before $h_1$ is computed). 
To show this implementation of $h_1$ is possible under the setting $m_1 = 2m$, we consider the number of bits needed to the store the vectors.
Let $(j',l')\in\cI$. Since $\cV_{j',l'}$ is a minimal $(d^{-1.5})$-cover of a $3kl$ dimensional ball, we have $\log(|\cV_{j',l'}|) \leq 3kl\log(3d^{1.5})$ (Lemma \ref{lem:packing-ball}). Further, since $\cV_{j',l'}$ is independent of $A$, we can store $x_{j',l'}$ using at most $3kl'\log(3d^{1.5})$ bits that depend on $A$.
Thus we can store up to $NL$ such vectors with $3kNL^2\log(3d^{1.5}) \leq 3k(k^{1/3})(\frac{d}{20k\log(d)})^2\log(3d^{1.5}) \leq (\frac{dk}{32}) \frac{d}{k^{5/3}} \leq \frac{dk}{32} = m$ bits (recall we have assumed $k > d^{3/5}$).
Thus the overall message size is at most $2m = m_1$ bits as required. 

We now set $\cL_{j,l}'$ from Algorithm \ref{alg:detalg-opt} to be the 
subspace $\cL$ that is marked by the adversary in the MSGH, where we assume the adversary uses any strategy that realizes Proposition \ref{prop:msgh-bound}.

For the hint, $\cP$ takes $h_2$ to output the discretized Nemirovski vector of the current round, $h_2(\cL,A)=x_{j,l}$. Again we note the vector can be specified using at most $2kl'\log(3d^{1.5}) \leq \frac{d}{C_2}$ bits that depend on $A$ (recall $L=\frac{d}{32C_2k\log(d)}$).
At this point, the only information $\cP$ has about $A$ is $h_1(A)$ and $h_2(\cL',A)$. Note $\cP$ still has access to the subspaces $\cK_{j,l}$ and their respective covers, as these are sampled independently of $A$ and $\cL_{j,l}'$.

Now the $\cP$ continues running the optimization process through the inner loop from $t=1...T$ as per Algorithm \ref{alg:detalg-opt}, which $\cP$ can do because it has stored the memory state of $\opt$ right before the start of round $(j,l)$ and, as we will show subsequently, $\cP$ can compute the answers to the resisting oracle queries. 
Each time $\opt$ makes an oracle query $\ujlt$, $\cP$ makes queries the MSGH with teh same vector, $\ujlt$, and evaluates $\tilde{\cO}_{j,l}(\ujlt)$ using the Nemirovski vectors and the row of $A$ it receives from the MSGH; $\cP$ then gives $\tilde{\cO}_{j,l}(\ujlt)$ to $\opt$. We assume here the player always takes all the query offsets $\eta_{1:T,1:d'}$ to be zero so that the row received from $A$ matches the row which would be the subgradient of the barrier function, $\barr$. Since the resisting oracle responses only depends on $\{x_{j',l'}\}_{(j',l')\in\cI} \cup \{x_{j,l}\}$ and the barrier function, the player is always able to evaluate the oracle correctly.

Now since the optimization problem uses the same matrix $A$, as the MSGH, and $\cL_{j,l}$ is the marked subspace, it is clear that if $\opt$ makes a query $\ujlt$ such that
$\|A\ujlt\|_\infty \geq \frac{\|\ujlt\|}{2d^{4.5}} ~\text{ or }~  \frac{\ujlt}{\|\ujlt\|} \in \cL_{j,l}' \msum\cB(1/d^{1.5})$,
then the player wins the MSGH. Thus Proposition \ref{prop:msgh-bound} guarantees there exists a strategy for choosing $\cL_{j,l}'$ such that with probability at least $1-O(1/d^3)$ every $t\in[T]$ satisfies 
$\|A\ujlt\|_\infty \geq \frac{\|\ujlt\|}{2d^{4.5}} ~\text{ or }~  \frac{\ujlt}{\|\ujlt\|} \in \cL_{j,l}' \msum\cB(1/d^{1.5})$.
\end{proof}

\subsection{Proof of Lemma \ref{lem:detalg-resisting-consistent}}\label{app:detalg-resisting-consistent}
\begin{lemma*}
For Algorithm \ref{alg:detalg-opt}, $\pr{}{\forall (j,l,t)\in[N]\times[L]\times[T]: \tilde{\cO}_{j,l}(\ujt) = \cO(\ujt)} \geq 1-O(\frac{1}{d})$, where the probability is taken w.r.t. the randomness in Algorithm \ref{alg:detalg-opt}.
\end{lemma*}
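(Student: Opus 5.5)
We show that on a high-probability event, for every query $u:=\ujlt$ in round $(j,l)$, $F$ and $F_{j,l}$ agree near $u$, and the tie-breaking rule picks the same subgradient element. The two functions differ only through the Nemirovski pieces with indices $(j,i)$, $i>l$, and $(j',l')$, $j'>j$. So it suffices to show that at $u$ each such piece is strictly smaller than $F_{j,l}(u)$. Then the active set, and hence the lowest-index subgradient returned, coincides.

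We intersect the following events and union bound.
\begin{itemize}
\item[(E1)] The conclusion of Lemma \ref{lem:detalg-opt-ocvg} holds for every round. There are $NL\le d$ rounds, each failing with probability $O(d^{-3})$, so (E1) fails with probability $O(d^{-2})$.
\item[(E2)] For every query $u$ and every $j'>j$, $l'\in[L]$, we have $|\ip{u}{x_{j',l'}}|\le 3\sqrt{\log(d)/d}+d^{-1.5}$.
\end{itemize}
For (E2), the vector $\hat x_{j',l'}$ is sampled after $u$ is fixed. By Lemma \ref{lem:marginal-unif-vec} it is marginally uniform on $\cL_{j',l'}^\perp\cap\cS(1)$, so Lemma \ref{lem:random-correlation} (with $c=9$) together with the $d^{-1.5}$ cover rounding gives the bound. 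For $j'=N$ the vectors are uniform on the sphere, so the same bound holds. A union bound over at most $n\cdot NL\le d^3$ pairs, each failing with probability $O(d^{-4.5})$, gives total failure probability $O(d^{-1})$.

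\textbf{Case A: $\|Au\|_\infty\ge \|u\|/(2d^{4.5})$.} The barrier gives $\barr(u)\ge 2d^{1.5}\|u\|-\rho$. Every Nemirovski piece is at most $\|u\|-\gamma_1-\gamma_2$. We have $\rho=1/\sqrt{NL}$ and $\gamma_1\ge\rho$ for the given parameters; this needs checking and may require comparing constants. Granting it, $\barr(u)\ge \|u\|-\gamma_1$ exceeds every Nemirovski piece, including the deleted ones, whenever $\|u\|$ is not tiny. If $\|u\|$ is tiny, $\max_{(j',l')\in\cI}\{\ip{u}{x_{j',l'}}-j'\gamma_1-l'\gamma_2\}$ already dominates any deleted piece, because deleted pieces carry offset at least $(j+1)\gamma_1$ or $j\gamma_1+(l+1)\gamma_2$ with only $O(\|u\|)$ slack. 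To handle both regimes uniformly, we compare a deleted piece against the retained piece $(1,1)$, or against $(j,l)$ itself: the difference in offsets is at least $\gamma_2$ (resp.\ $\gamma_1$), while the inner-product difference is at most $2\|u\|$. The barrier then covers the remaining range of $\|u\|$. The threshold $d^{6}$ Lipschitz scaling should make these regimes overlap.

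\textbf{Case B: $u/\|u\|\in\cL'_{j,l}\msum\cB(d^{-1.5})$.} For $i>l$, $x_{j,i}$ is within $d^{-1.5}$ of a vector orthogonal to $\cL_{j,i}\supseteq\cL'_{j,l}$. Hence $|\ip{u}{x_{j,i}}|\le 2d^{-1.5}\|u\|$, and the piece $(j,i)$ is below the retained piece $(j,l)$: the offsets differ by at least $\gamma_2=5d^{-1.5}$, while the inner products differ by at most $\ip{u}{x_{j,l}}$ plus $2d^{-1.5}$. Here we lower bound $\ip{u}{x_{j,l}}\ge -\|u\|$; this does not suffice. Instead we compare against the best retained piece $(j',l')\in\cI\cup\{(j,l)\}$ and choose the one making the gap argument work. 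If a clean comparison fails, we fall back on using $F_{j,l}(u)\ge\barr(u)\ge-\rho$, which needs the deleted pieces to be below $-\rho$. For $j'>j$, (E2) bounds the inner product by $4\sqrt{\log d/d}$, while the offset gap is $\gamma_1-L\gamma_2\ge 10\sqrt{\log d/d}$, so the piece $(j+1,l')$ is strictly below the retained piece $(j,1)$.

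\textbf{Main obstacle.} The delicate step is Case B for same-block deleted pieces $(j,i)$, $i>l$. Only an $O(d^{-1.5})$ correlation bound is available there, so the margin $\gamma_2$ must beat the correlation with the current vector. I expect to argue via piece $(j,l)$ after checking whether $x_{j,i}$ is also nearly orthogonal to $u$ relative to $x_{j,l}$. If not, I would use the barrier lower bound, which I anticipate requires showing $F_{j,l}(u)\ge -\rho+\Omega(\cdot)$.
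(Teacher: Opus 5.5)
The genuine gap is in Case B, which is the heart of the lemma, and your proposal leaves it unresolved. You treat $\langle u,x_{j,l}\rangle$ as uncontrolled, but the argument you already use for $i>l$ applies equally to $i=l$. The vector $\hat x_{j,l}$ is drawn from $\cK_{j,l}\cap\cL_{j,l}^\perp$, where $\cL_{j,l}=\Span(\cL_{j,l-1},\cL'_{j,l})\supseteq\cL'_{j,l}$; the subspace is marked \emph{before} $\hat x_{j,l}$ is sampled. So whenever $u/\|u\|\in\cL'_{j,l}\msum\cB(d^{-1.5})$, you get $|\langle u,x_{j,l}\rangle|\le 2d^{-1.5}$, and the retained piece $(j,l)$ is at least $-j\gamma_1-l\gamma_2-2d^{-1.5}$. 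This piece beats every $(j,i)$ with $i>l$ by at least $\gamma_2-4d^{-1.5}=d^{-1.5}$. It also beats every $(j',l')$ with $j'>j$, because the offset gap $\gamma_1-L\gamma_2$ dominates $2d^{-1.5}$ plus the (E2) correlation bound. This is the paper's Case 3, and it is exactly why the Nemirovski vectors are sampled orthogonally to the freshly marked subspace.

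Your proposed alternatives would fail.
\begin{itemize}
\item The barrier fallback $F_{j,l}(u)\ge-\rho$ cannot work. We have $\rho=1/\sqrt{NL}=32\sqrt{C_2}\,k^{1/3}\sqrt{\log(d)/d}$, which exceeds every offset $j\gamma_1+l\gamma_2\le N\gamma_1+L\gamma_2$. The deleted same-block pieces have inner product only $O(d^{-1.5})$, so they lie strictly above $-\rho$.
\item Comparing the $j'>j$ pieces against $(j,1)$ breaks when $l>1$. The vector $x_{j,1}$ is orthogonal only to $\cL_{j,1}$, not to $\cL'_{j,l}$, so $\langle u,x_{j,1}\rangle$ can be of order $-\|u\|$. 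The comparison has to be with $(j,l)$, using the near-orthogonality above.
\end{itemize}

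Your Case A is salvageable as you sketch it, but the auxiliary claim $\gamma_1\ge\rho$ is false, by the formula for $\rho$ above. It is also unnecessary. For $\|u\|\ge d^{-1.5}$ one has $\barr(u)\ge 2-\rho>1\ge\nem(u)$. For $\|u\|<2.5d^{-1.5}$, piece $(1,1)$ beats all others, because offsets differ by at least $\gamma_2=5d^{-1.5}$ while inner products differ by at most $2\|u\|$. The two regimes overlap; these are the paper's Cases 1 and 2.

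A minor point on (E2): $\hat x_{j',l'}$ is uniform on a subspace of dimension at least $d/2$, not on the full sphere. The constant $3$ should therefore be enlarged, e.g.\ to $4$ as in the paper. This is harmless given $\gamma_1=11\sqrt{\log(d)/d}$.
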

\begin{proof}
In the following we condition on two events happening.
First, we assume the event specified by Lemma \ref{lem:detalg-opt-ocvg} holds for every round, which happens with probability at least $1-O(NL/d^3)$.
Second, we condition on the event that 
for any round $(j,l)$ and any $j' > j$ and $l'\in [L]$ that
$|\ip{\ujlt}{\hat{x}_{j',l'}}| \leq 5\sqrt{\frac{\log(d)}{d}}$. 
For the latter event, observe that the marginal distribution of $\hat{x}_{j',l'}$ is uniformly random over $\cL_{j',l'}^\perp$ (Lemma \ref{lem:marginal-unif-vec}).
Since $\Dim(\cL_{j',l'}^\perp) \geq d-3Lk \geq d/2$, we have 
$\pr{}{|\ip{\ujlt}{\hat{x}_{j',l'}}| \geq 4\sqrt{\frac{\log(d)}{d}}} \leq \frac{1}{d^4}$ by a standard concentration inequality.
The probability that this holds for all $j'>j$ and $l'\in[L]$ for every query at every round is at least $1-(NL)^2d/d^4 \geq 1-1/d$. Thus this event and the event of specified by Lemma \ref{lem:detalg-opt-ocvg} both happen with probability at least $1-O(1/d)$ by a union bound.

What remains is to show consistency of the resisting oracle under the above event. We show this by considering 3 different cases. 
\paragraph{Case 1: $\|\ujlt\| \in [0,\frac{1}{d^{1.5}}]$.}
In this case, the gradient can only be either the gradient of the barrier function or the first Nemirovski vector, $x_{1,1}$.

\paragraph{Case 2: $\|\ujlt\| \in [\frac{1}{d^{1.5}},1] ~~\&~~ \|A\ujlt\|_\infty \geq \frac{\|\ujlt\|}{2d^{4.5}}$.}
By the definition of $\barr$ and the fact that $\|\ujlt\| \geq d^{-1.5}$ we have $\barr(\ujlt) = 4d^6\|Aw\|_\infty-\rho \geq 2 - \rho \geq 1.5$, but $\nem(\ujlt) \leq 1$ for any $\ujlt \in \cB(1)$. Thus the gradient is returned by the barrier function which is the same for both $F_{j,l}$ and $F$. 

\paragraph{Case 3: $\|\ujlt\| \in [\frac{1}{d^{1.5}},1] ~~\&~~ \|A\ujlt\|_\infty \leq \frac{\|\ujlt\|}{2d^{4.5}}$. } 
Recall we have conditioned on the case where the event specified by Lemma \ref{lem:detalg-opt-ocvg} holds for every round and query; see specifically Eqn \eqref{eq:detalg-opt-ocvg}. That is, because $\|A\ujlt\|_\infty \leq \frac{\|\ujlt\|}{2d^{4.5}}$, it must be that
$\frac{\ujlt}{\|\ujlt\|} \in \cL_{j,l}' \msum \cB(1/d^{1.5})$.
Since each $\hat{x}_{j,i}$, $i \geq l$, is sampled orthogonal to $\cL_{j,l}'$ (recall $\cL_{j,l}' \subseteq \cL_{j,i})$, we have that 
$|\ip{\ujlt}{x_{j,i}}| \leq \|\proj_{\cL_{j,i}}^\perp  \ujlt\| + d^{-1.5} \leq \frac{2}{d^{1.5}}$. The scale of $\gamma_2=5d^{-1.5}$ then implies the subgradient of $\nem(\ujlt)$ does not contain $x_{j,i}$ for any $i>l$. 

We now want to show that the gradient is not $x_{j',l'}$ for any $j' > j$ and $l'\in[L]$. 
Recall we have conditioned on the event where 
$|\ip{\ujlt}{\hat{x}_{j',l'}}| \leq 4\sqrt{\frac{\log(d)}{d}}$.
Observe $|\ip{\ujlt}{x_{j',l'}}|$ can only be $d^{-1.5}$ larger since $\cV_{j',l'}$ is a $(d^{-1.5})$-cover.
Finally, as noted in the previous paragraph, $|\ip{\ujlt}{x_{j,l}}| \leq 2d^{-1.5}$. Thus, recalling $\gamma_1 = 11\sqrt{\frac{\log(d)}{d}}$ and $\gamma_2 = \frac{5}{d^{1.5}}$,
\begin{align*}
    \ip{\ujlt}{x_{j,l}} - j\gamma_1 - l\gamma_2 
    &\geq -\frac{2}{d^{1.5}} - j\gamma_1 - \frac{5L}{d^{1.5}} \\
    &> -j\gamma_1 -  \frac{1}{4\sqrt{d}}\\
    &\geq -(j+1)\gamma_1 -l'\gamma_2 - \ip{\ujlt}{x_{j',l'}}. 
\end{align*}
This implies no $x_{j',l'}$ is in the subgradient of $\nem(\ujlt)$, and we are done.
\end{proof}

\subsection{Proof of Lemma \ref{lem:detalg-minimizer}} \label{app:detalg-minimizer}

\begin{lemma*}
With probability at least $1-2^{-\Omega(d)}$ (with respect to all randomness in Algorithm \ref{alg:detalg-opt}) there exists $w^*\in\cB(1)$ such that 
$F(w^*) \leq -\frac{\sqrt{\log(d)}k^{1/3}}{\sqrt{d}}$.
\end{lemma*}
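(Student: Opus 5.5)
We will exhibit an explicit point $w^*$ that is nearly orthogonal to every Nemirovski vector in the sense required to make the Nemirovski term very negative, while lying in the null space of $A$ so that the barrier term equals $-\rho$. The natural candidate is
\[
w^* = -c\,\proj_A^\perp \Big(\sum_{j\in[N],l\in[L]} x_{j,l}\Big),
\]
with $c>0$ chosen so that $\|w^*\|\le 1$. Then $\barr(w^*) = -\rho = -\frac{1}{\sqrt{NL}}$, and it remains to show that $\ip{w^*}{x_{j,l}}$ is about $-c\,\|\proj_A^\perp x_{j,l}\|^2 \approx -c/2$ for every $(j,l)$, with cross terms negligible. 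Since $\gamma_1,\gamma_2>0$ only help, we need $\max_{j,l}\ip{w^*}{x_{j,l}} \le -\frac{\sqrt{\log d}\,k^{1/3}}{\sqrt{d}}$. We also need $\rho = 1/\sqrt{NL}$ to be at least this quantity. This holds because $NL = \frac{k^{1/3}}{32}\cdot\frac{d}{32C_2k\log d}$, which gives $\rho \asymp \sqrt{k^{2/3}\log d/d} \ge \sqrt{\log d}\,k^{1/3}/\sqrt{d}$, up to adjusting constants.

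The key steps, in order, are as follows.

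First, we control the norms. Let $P=\proj_A^\perp$, the projection onto a uniformly random subspace of dimension $d/2$, independent of the $\cK_{j,l}$. For each Nemirovski vector, $\|P x_{j,l}\|^2\in[1/4,3/4]$ with probability $1-2^{-\Omega(d)}$ by Lemma~\ref{lem:random-projection}. This requires care, since $x_{j,l}$ depends on $A$ through $\cL_{j,l}$. However, $x_{j,l}$ lies in the cover of $\cK_{j,l}\cap\cB(1)$, and $\cK_{j,l}$ is a uniformly random $3kl$-dimensional subspace independent of $A$. We therefore apply random projection uniformly over $\cK_{j,l}$: with probability $1-2^{-\Omega(d)}$, $P$ restricted to $\cK_{j,l}$ has all singular values in $[\frac12\cdot\frac{1}{\sqrt2},\,2\cdot\frac{1}{\sqrt2}]$, since $3kl\ll d$. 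This handles every vector of $\cK_{j,l}$ at once, regardless of how $x_{j,l}$ was selected.

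Second, we bound the cross terms $\ip{Px_{j,l}}{Px_{j',l'}} = \ip{x_{j,l}}{Px_{j',l'}}$. The same uniform argument applies to the pair of independent random subspaces $\cK_{j,l}$ and $\cK_{j',l'}$. The operator $\proj_{\cK_{j,l}}P\proj_{\cK_{j',l'}}$ has norm $O(\sqrt{kL/d}+\sqrt{\log d/d})$ with probability $1-2^{-\Omega(d)}$, because two independent random subspaces of dimension $O(kL)$ are nearly orthogonal. Summing over $NL$ terms, each cross contribution to $\ip{w^*}{x_{j,l}}$ is at most $c\,NL\cdot O(\sqrt{kL/d})$. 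We then choose $c$ so that $\|w^*\|^2 \approx c^2\big(NL/2 + \text{cross terms}\big)\le 1$, i.e.\ $c\approx 1/\sqrt{NL}$. This makes each diagonal term about $-\frac{1}{2\sqrt{NL}}$, which again matches the required magnitude.

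The main obstacle is the cross-term accumulation. The bound $NL\sqrt{kL/d}$ need not be small compared with $1$. The alternative I would try first is to bound the sum of cross terms as a single vector. Let $S=\sum x_{j,l}$, a sum of $NL$ nearly orthogonal unit vectors. The vector $PS - \sum (Px_{j,l})$ needs no separate bound, since it is identically zero; what must be controlled is $\max_{j,l}\ip{x_{j,l}}{P(S-x_{j,l})}$. The sum $S-x_{j,l}$ lies in a subspace spanned by independent random subspaces and has norm $O(\sqrt{NL})$. Its correlation with the fixed random subspace $\cK_{j,l}$, after applying $P$, is therefore $O(\sqrt{NL}\cdot\sqrt{kL/d})$. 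This is a $\sqrt{NL}$ improvement, and with $N=k^{1/3}/32$ and $L\lesssim d/(k\log d)$ it is $o(1)$ times the diagonal term. Obtaining this improvement rigorously is where I expect the difficulty, because $\cL_{j,l}$ (and hence $x_{j,l}$) is adaptive. The plan is to use that each $x_{j,l}$ lies in $\cK_{j,l}$ and to union bound over the finite covers $\cV_{j,l}$ together with the joint spans. If all of these deviations hold simultaneously with probability $1-2^{-\Omega(d)}$, then $F(w^*)\le -\frac{1}{3\sqrt{NL}}$, which after constant adjustment is at most $-\frac{\sqrt{\log d}\,k^{1/3}}{\sqrt d}$.
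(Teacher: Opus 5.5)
\textbf{The cross-term step has a genuine gap.} Your candidate $w^*=-c\,\proj_A^\perp\sum x_{j,l}$ is reasonable. Your diagonal bound, uniform over $\cK_{j,l}$ (which is independent of $A$), is also sound. The problem is the cross-term step you flag as the crux, and it fails by a polynomial factor, not by constants. With $N=k^{1/3}/32$ and $L=\frac{d}{32C_2k\log d}$ one has $kL/d=\frac{1}{32C_2\log d}$ and $NL=\frac{d}{1024C_2k^{2/3}\log d}$. Your ``refined'' bound is therefore
\[
\sqrt{NL}\cdot\sqrt{kL/d}=\Theta\Big(\frac{\sqrt d}{k^{1/3}\log d}\Big)\ \geq\ \Omega\Big(\frac{d^{1/6}}{\log d}\Big)
\]
(using $k\le d/\log d$). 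This is polynomially larger than the diagonal term $\|Px_{j,l}\|^2\approx 1/2$, not $o(1)$ times it.

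This is intrinsic to arguing uniformly over $\cK_{j,l}$. Its dimension can be $3kL=\Theta(d/\log d)$, so for $l$ near $L$, $\|\proj_{\cK_{j,l}}Pv\|$ is genuinely of order $\|Pv\|/\sqrt{\log d}$. A union bound over $\cV_{j,l}$, whose log-size is $\Theta(d)$ for such $l$, cannot be paid for either. At the scale $1/\sqrt{NL}$ you need, tail bounds give per-point failure probability only $e^{-\Omega(d/(NL))}=e^{-\Omega(k^{2/3}\log d)}$.

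\textbf{The missing idea is the randomness of the individual direction.} Conditional on everything before round $(j,l)$, $\hat x_{j,l}$ is uniform on $\cL_{j,l}^\perp\cap\cS(1)$, a sphere of dimension at least $d/2$ (Lemma \ref{lem:marginal-unif-vec}). So for any $v$ determined before that round with $\|v\|\le\sqrt{NL}$, the typical size of $\langle v,\hat x_{j,l}\rangle$ is $\sqrt{NL/d}=O\big(k^{-1/3}/\sqrt{\log d}\big)$. It stays below a small constant except with probability $e^{-\Omega(k^{2/3}\log d)}$.

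With your $w^*$ this handles the backward terms. The forward terms $\langle Px_{j,l},x_{j',l'}\rangle$ with $(j',l')$ later are different: they depend adaptively on $x_{j,l}$, because later marked subspaces depend on how the optimizer reacted to it. You would additionally need a martingale argument over the later rounds, and the same issue arises when bounding $\|w^*\|\le 1$.

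\textbf{How the paper avoids the forward terms.} It uses Gram--Schmidt: $\tilde x_{j,l}=\proj^\perp_{A,\cX_{j,l}}\hat x_{j,l}$. Since $\hat x_{j,l}\in\Span(A,\cX_{j',l'})$, every later $\tilde x_{j',l'}$ is exactly orthogonal to $\hat x_{j,l}$. Orthogonality gives $\|w^*\|\le1$ for free, and $\|\tilde x_{j,l}\|\ge 1/4$ follows from a dimension count. The only cross terms left are the backward ones, $\sum_{(j',l')\text{ earlier}}\langle\tilde x_{j',l'},x_{j,l}\rangle$. The paper's write-up passes over these, but they are exactly what the fresh-randomness bound above controls. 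Either way, the honest failure probability is $e^{-\Omega(k^{2/3}\log d)}$ rather than $2^{-\Omega(d)}$, which still suffices for Theorem \ref{thm:detalg-lb}.
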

\begin{proof}
We will iteratively construct a series of orthogonal vectors, $\tilde{x}_{1:N,1:L}$.
For $(j,l)\in[N]\times[L]$, let $\tilde{x}_{j,l} = \proj_{A,\cX_{j,l}}^\perp \hat{x}_{j,l}$, where $\cX_{j,l}=\Span(\{\tilde{x}_{j',l'}: (j',l')\in ([j-1]\times[L]) \cup (\{j\}\times[l-1])\}$ (i.e. the span of the previously constructed vectors). 
In the following, condition on any realization of $A$, $\bar{A}$, and $\cL_{1:N,1:L}$, $\bar{\cL}_{1:N,1:L}$, in their support. 

Consider some $j\in[N],l\in[L]$. Since we have conditioned $\cL_{j,l}=\bar{\cL}_{j,l}$, the vector $\hat{x}_{j,l}$ is a uniformly random unit vector over $\bar{\cL}_{j,l}^\perp$ (Lemma \ref{lem:marginal-unif-vec}).
Let $y\sim\Unif(\cS(1))$ and consider the coupling where $y = a\hat{x} + v$ for $a\in[0,1]$ and $v\in\bar{\cL}_{j,l}$ with appropriate distributions. 
Now,
\begin{align*}
\|\tilde{x}_{j,l}\|=\|\proj_{\bar{A},\cX_{j,l}}^\perp \hat{x}_{j,l}\| \geq \|\proj_{\bar{A},\cX_{j,l},\bar{\cL}_{i,j}}^\perp \hat{x}_{j,l}\| \geq \|\proj_{\bar{A},\cX_{j,l},\bar{\cL}_{i,j}}^\perp a\hat{x}_{j,l}\| =\|\proj_{\bar{A},\cX_{j,l},\bar{\cL}_{i,j}}^\perp y\|.
\end{align*}
At the same time, since $\Dim(\Span(\bar{A},\cX_{j,l},\bar{\cL}_{i,j})) \leq \frac{d}{2} + \frac{d}{8}  + \frac{d}{8} \leq \frac{3d}{4}$, we have,
\begin{align*}
    \pr{\cK_{j,l},\hat{x}_{j,l}}{\|\proj_{\bar{A},\cX_{j,l}}^\perp \hat{x}_{j,l}\| \geq \frac{1}{4}} \geq \pr{y_{j,l}}{\|\proj_{\bar{A},\cX_{j,l},\bar{\cL}_{j,l}}^\perp y_{j,l}\| \geq \frac{1}{4}} \geq 1 - 2^{-\Omega(d)}. 
\end{align*}

Under this event we have, $ \ip{\tilde{x}_{j,l}}{\hat{x}_{j,l}} = \|\tilde{x}\|^2 \geq \frac{1}{16}$.
Since $\|\hat{x}_{j,l} - x_{j,l}\| \leq d^{-1.5}$, we have for $d$ sufficiently larger than some constant that, 
$\ip{\tilde{x}_{j,l}}{x_{j,l}} \geq \frac{1}{32}$. 
Via a simpler analog of the above it is also straightforward to verify that 
$\pr{}{\ip{\tilde{x}_{N,l}}{x_{N,l}} \geq \frac{1}{32}} \geq 1-2^{-\Omega(d)}$ for any $l\in[L]$.

Thus, this inner product lower bound holds for all $(i,j)\in[N]\times [L]$ with probability at least $1-NL 2^{-\Omega(d)} \geq 1-2^{-cd}$, for some constant $c$ and $d$ larger than some constant.
We have lower bounded the conditional probability of this event for any realization of $A$ and $\cL_{1:N,1:L}$, thus the same bound holds for the unconditional probability.

Now we take $w^* = -\frac{1}{\sqrt{NL}}\sum_{j=1}^N \sum_{l=1}^L \tilde{x}_{j,l}$. Since each $\tilde{x}_{j,l}$ is orthogonal to $A$ and $\tilde{x}_{1:N,1:L}$ is an orthogonal set of vectors, this yields
$F(w^*) \leq \max\{-\frac{\sqrt{\log(d)}k^{1/3}}{\sqrt{d}}, -\rho\} = -\frac{\sqrt{\log(d)}k^{1/3}}{\sqrt{d}}$ provided the previously described high probability events hold for every $j,l$. 
It is straightforward to verify that $w^*\in\cB(1)$.
\end{proof}

\section{Supplement to Section \ref{sec:randomized}}

\subsection{Proof of Lemma \ref{lem:minimizer}} \label{app:minimizer}
\begin{lemma*} 
With probability at least $1-2^{-\Omega(d)}$ (under the draw of $F$) there exists $w^*\in\cB(1)$ such that $F(w^*) = 0$.   
\end{lemma*}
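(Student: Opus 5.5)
We take $w^* = \alpha\sum_{j=1}^N s_j\tilde{x}_j$ and choose $s\in[1,8]^N$ so that every term of $F$ is at most $0$. The three facts we use are that the $\tilde{x}_j$ are mutually orthogonal, that each lies in $\Span^\perp(A)$, and that $\ip{\tilde{x}_j}{x_j} = \|\tilde{x}_j\|^2$, since $\tilde{x}_j = \proj_A^\perp x_j$.

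\textbf{Norm lower bound.} The first step is to show that, with probability $1-2^{-\Omega(d)}$, every $\|\tilde{x}_j\|^2$ is at least a constant, say $\|\tilde{x}_j\|^2\geq 1/4$. Conditioned on $A$ and $\tilde{x}_{<j}$, the vector $x_j$ is uniform on the sphere of $\Span^\perp(\tilde{x}_{<j})$, which has dimension at least $d-N$. We have $\|\tilde{x}_j\| = \|\proj^\perp_{A,\tilde{x}_{<j}}x_j\|$, and the projection is onto a subspace of codimension about $d/2+N$ inside this sphere's ambient space. Lemma \ref{lem:random-projection} then gives $\|\tilde{x}_j\|^2 \geq 1/2 - O(N/d) - o(1)$ with probability $1-2^{-\Omega(d)}$, and a union bound over $j\in[N]$ handles all indices.

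\textbf{Choice of $s$ and the three terms.} Set $s_j = 2/\|\tilde{x}_j\|^2$, which lies in $[2,8]\subseteq[1,8]$ on the event above.
\begin{itemize}
\item Barrier: since $w^*\in\Span^\perp(A)$, we have $\barr(w^*)=0$.
\item Nemirovski: $\ip{w^*}{x_j} = \alpha\sum_i s_i\ip{\tilde{x}_i}{x_j}$. For $i>j$, $x_i\perp\tilde{x}_j$, but the inner products $\ip{\tilde{x}_i}{x_j}$ need not vanish for $i\neq j$. To handle this, note that $x_j\perp\tilde{x}_{<j}$ by construction, so $\ip{\tilde{x}_i}{x_j}=0$ for $i<j$. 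For $i>j$, write $\ip{\tilde{x}_i}{x_j} = \ip{\tilde{x}_i}{\tilde{x}_j + \proj_A x_j} = \ip{\tilde{x}_i}{\tilde{x}_j}=0$, using $\tilde{x}_i\perp A$. Hence $\ip{w^*}{x_j} = \alpha s_j\|\tilde{x}_j\|^2 = 2\alpha$, so each Nemirovski term equals $-j\gamma<0$.
\item Wall: $\|w^*\|\leq 8\alpha\sqrt{N}\leq 1$, so $w^*\in\cB(1)$. For $z\in\cZ$, decompose $z$ along the orthonormal directions $e_j=\tilde{x}_j/\|\tilde{x}_j\|$. Since $|\ip{\tilde{x}_j}{z}|\leq 6\sqrt{k/d}$, Cauchy--Schwarz gives
\[
\ip{w^*}{z} = \alpha\sum_j s_j\ip{\tilde{x}_j}{z} \leq 8\alpha N\cdot 6\sqrt{k/d} = \alpha,
\]
using $N=\frac{1}{48}\sqrt{d/k}$. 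Thus $\wall(w^*)\leq \alpha-4\alpha<0$.
\end{itemize}

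\textbf{Conclusion.} Every term of $F(w^*)$ is at most $0$ and the barrier term equals $0$, so $F(w^*)=0$ on an event of probability $1-2^{-\Omega(d)}$.

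\textbf{Main obstacle.} The delicate step is the norm lower bound. It requires the conditional-uniformity argument, since the randomness of $x_j$ is coupled with $A$ through the earlier $\tilde{x}$'s. I expect to handle this by conditioning on $A$ and $\tilde{x}_{<j}$ and invoking rotation invariance within $\Span^\perp(\tilde{x}_{<j})$.
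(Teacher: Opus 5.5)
Your proposal is correct and matches the paper's proof: the same $w^*=\sum_j (2\alpha/\|\tilde x_j\|^2)\,\tilde x_j$, the same computation $\ip{w^*}{x_j}=2\alpha$ (using $\tilde x_i\perp A$ and the mutual orthogonality of the $\tilde x_i$), the same random-projection lower bound on $\|\tilde x_j\|$, and the same bound $\ip{w^*}{z}\le 48\alpha N\sqrt{k/d}=\alpha$ on the wall term. Two small remarks: that wall bound follows from the triangle inequality, not Cauchy--Schwarz. Also, the conditional-uniformity ``obstacle'' you flag is immediate, since given the entire history (including $A$) $x_j$ is by construction uniform on $\Span^\perp(\tilde x_{<j})\cap\cS(1)$.
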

\begin{proof}
Consider $w^*= \sum_{j=1}^{N}s_j\tilde{x}_j$ for some $s\in\re^N$. 
Observe,
\begin{align*}
    \ip{w^*}{x_j} &= \ip{s_j \tilde{x}_j}{x_j} + \sum_{i\in [N]\setminus j} \ip{s_i \tilde{x}_i}{x_j} \\
    &= s_j\ip{\tilde{x}_j}{\proj_A x_j + \tilde{x}_j} + \sum_{i\in [N]\setminus j} \ip{s_i \tilde{x}_i}{\proj_A x_j + \tilde{x}_j} \\
    &= s_j\|\tilde{x}_j\|^2.
\end{align*}
Thus setting $s_j = \frac{2\alpha}{\|\tilde{x}_j\|^2}$ for each $j\in[N]$ ensures that $\max\{\nem(w^*),\barr(w^*)\}=0$.

Before handling the wall function, to verify that $w^* \in \cB(1)$, observe that with probability at least $1-2^{-\Omega(d)}$, each $\|\tilde{x}_j\| \geq \frac{1}{2}\|x_j\| \geq 1/2$, by the properties of random projection (Lemma \ref{lem:random-projection}). Thus under this event we have $\|s\|_\infty \leq 8\alpha$ and $\|w^*\| \leq 8\alpha\sqrt{N} \leq 1$, where the last inequality holds because $\alpha \leq \frac{1}{\sqrt{N}} = \frac{k^{1/4}}{\sqrt{8}d^{1/4}}$.

Finally, for the wall function component, observe that under the event that $\|s\|_\infty \leq 8\alpha$, we have for any $z\in\cZ$,
\begin{align*}
    |\ip{w^*}{z}| \leq 8\alpha \sum_{j=1}^N |\ip{\tilde{x}_j}{z}| \leq 48\alpha N \sqrt{\frac{k}{d}} \leq \alpha. 
\end{align*}
Thus at $w^*$, the wall function has value $\wall(w^*)\leq-3\alpha$ and thus $F(w^*) = 0$.
\end{proof}

\subsection{Proof of Lemma \ref{lem:opt-ocvg}} \label{app:opt-ocvg}
In the following, recall that for any $j\in[N]$, $\tujt := \proj_{\tilde{x}_{<j}}\ujt$.
\begin{lemma*}
Consider Algorithm \ref{alg:opt}. Fix $j\in[N-1]$. 
It holds that,
\begin{align} \label{eq:opt-ocvg}
\underset{A,x_{\leq j}}{\PP}\Big[\exists t\in[T]: \! ~\|A\tujt\|_\infty \leq \frac{\|\tujt\|}{8d^{5.5}} ~\text{ and }~  \max\{|\ip{\tujt}{x_j}|,|\ip{\tujt}{\tilde{x}_j}|\} \geq 6\|\tujt\|\sqrt{\frac{k}{d}}\Big] = O\big(\frac{1}{d^3}\big).
\end{align}
\textit{(note also $\|A\ujt\|_\infty = \|A\tujt\|_\infty$ and $\ip{\ujt}{x_j} = \ip{\tujt}{x_j}$.)} 
\end{lemma*}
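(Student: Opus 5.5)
We reduce to the OCVG (Proposition~\ref{prop:ocvg-bound}) by showing that an algorithm violating the bound in round $j$ yields an OCVG player that wins with probability $\omega(1/d^3)$. The reduction works in the ambient space $\Span^\perp(\tilde{x}_{<j})$, of dimension $d-j+1\ge d/2$. The first task is to describe the conditional law of $A$ given $\tilde{x}_{<j}$.

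Since $\tilde{x}_i = \proj_A^\perp x_i$ and the $x_i$ are sampled inside $\Span^\perp(\tilde{x}_{<i})$, each row has an orthogonal decomposition
\[
a_r = \proj_{\tilde{x}_{<j}} a_r + \proj^\perp_{\tilde{x}_{<j}} a_r .
\]
The plan is to show, by induction on $j$ using rotational invariance, the following conditional picture. Given $\tilde{x}_{<j}$ and the norms/components $E := \proj_{\tilde{x}_{<j}} A$, the projected rows $a'_r := \proj^\perp_{\tilde{x}_{<j}} a_r$ are independent and uniform on spheres of radii $\Delta_r=\sqrt{1-\|\proj_{\tilde{x}_{<j}}a_r\|^2}$ inside $\Span^\perp(\tilde{x}_{<j})$. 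Concentration (Lemma~\ref{lem:random-projection}, using $j\le N\ll d$) gives $\Delta_r\in[0.5,1]$ with probability $1-2^{-\Omega(d)}$.

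Moreover, $x_j$ restricted to this subspace is a fresh uniform unit vector independent of $A'$. I expect this conditional-independence bookkeeping to be the main obstacle. The difficulty is that $\tilde{x}_i$ depends on $A$ through $\proj_A$, so the argument must use the fact that $\tilde{x}_i$ depends on $A$ only through its row space, together with symmetry of that row space under rotations fixing $\tilde{x}_{<i}$.

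The player is built as follows. Condition on $E$ and $\tilde{x}_{<j}$; these, together with $x_{<j}$, are independent of $A'$ and can be treated as public randomness. The player sets $h_1(A')$ to be $\opt$'s $m$-bit memory at the start of round $j$. This is computable, since earlier rounds only use the resisting oracles $\tilde{\cO}_{i}$, $i<j$, which depend on $x_{<j},\tilde{x}_{<j-1}$ and $A=A'+E$. The memory fits because $m_1=m\le dk/32$, and we use no hint.

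On receiving $x_j$, the player simulates round $j$. For an optimizer query $\ujt$ it plays the game query $u_t=\tujt$ with offsets $\eta_{t,r}=-\ip{\ujt}{e_r}$, where $e_r$ is the $r$-th row of $E$. Then $\ip{u_t}{a'_r}-\eta_{t,r}=\ip{\ujt}{a_r}$, so the returned row index matches the barrier subgradient and the player learns the full row $a_r=a'_r+e_r$. The sign/absolute value can be handled by doubling rows with $\pm$. Using also $x_{\le j}$ and $\tilde{x}_{<j}$, this lets the player evaluate $\tilde{\cO}_j$ exactly, as the wall in $F_j$ uses only $\cZ_j$, i.e. $\tilde{x}_{<j}$.

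It remains to convert the event in Eq.~\eqref{eq:opt-ocvg} into an OCVG win. Since $\|A\tujt\|_\infty=\|A'\tujt\|_\infty$ for $\tujt\perp\tilde{x}_{<j}$, near orthogonality holds with room to spare: the threshold $1/(8d^{5.5})$ is below $1/(2d^{4.5})$. For correlation, the case $|\ip{\tujt}{x_j}|\ge 6\|\tujt\|\sqrt{k/d}$ directly beats the threshold $3\sqrt{k/d}$ of Proposition~\ref{prop:ocvg-bound}, after rescaling to the $(d-j+1)$-dimensional sphere, which costs a constant factor absorbed by $d-j+1\ge d/2$.

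For the $\tilde{x}_j$ case, write $\ip{\tujt}{\tilde{x}_j}=\ip{\tujt}{x_j}-\ip{\tujt}{\proj_A x_j}$. To bound the second term, express $\proj_A x_j=\sum_r c_r a_r$. Lemmas~\ref{lem:singular-vals} and~\ref{lem:l1-coeff-norm} give $\|c\|_1\le d\cdot\mathrm{poly}$, once we establish with high probability that each row has a component of norm $\ge 1/\mathrm{poly}(d)$ orthogonal to the others; this is true for random rows. Hence
\[
|\ip{\tujt}{\proj_A x_j}|\le \|c\|_1\,\|A\tujt\|_\infty \le \|\tujt\|/d,
\]
so $|\ip{\tujt}{x_j}|\ge 5\|\tujt\|\sqrt{k/d}$, again a win. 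Applying Proposition~\ref{prop:ocvg-bound} and adding the $2^{-\Omega(d)}$ failure probabilities gives $O(1/d^3)$.
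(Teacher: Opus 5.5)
There is a genuine gap in your ``conditional picture'': the claim that $x_{<j}$ (together with $E$ and $\tilde{x}_{<j}$) is independent of $A'$ and can serve as public randomness is false.

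First, the decomposition is vacuous. Every $\tilde{x}_i=\proj_A^\perp x_i$ is orthogonal to every row of $A$, so $E=\proj_{\tilde{x}_{<j}}A=0$, $\Delta_r=1$, $A'=A$, and your offsets are all zero. Your reduction therefore hands the OCVG the matrix $A$ itself and treats $(x_{<j},\tilde{x}_{<j})$ as independent of it. But $\bar{x}_l:=x_l-\tilde{x}_l=\proj_A x_l$ is a nonzero vector in the row space of $A$. So the pair $(x_l,\tilde{x}_l)$ reveals a direction of $\Span(A)$: conditioning on it forces $\bar{x}_l\in\Span(A)$, which an independently drawn game matrix violates almost surely.

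Consequently the instance your player simulates has $\proj_A^\perp x_l\neq\tilde{x}_l$. It is not distributed as in Algorithm~\ref{alg:opt}, and Proposition~\ref{prop:ocvg-bound} applied to it says nothing about the event in the lemma. Nor can the player generate $x_{<j}$ from $A'$ inside $h_1$ instead. $\opt$ needs $x_{<j}$ throughout round $j$ (they are the Nemirovski subgradients of $F_j$), and recording $\bar{x}_{<j}$ in the message takes on the order of $jd$ bits, which exceeds $m$ when $k$ is small.

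There is also a separate issue. Rotational invariance only makes $\Span(A)$ uniform among admissible subspaces. Rows conditioned on their span carry a Gram-determinant weight that depends on the ambient dimension, so even given $\tilde{x}_{<j}$ alone they are not exactly i.i.d.\ uniform on the sphere of $\Span^\perp(\tilde{x}_{<j})$.

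The missing idea is that the leakage sits in $\bar{x}_{<j}$, inside the row space. The paper's construction is:
\begin{itemize}
\item decompose each row along $\Span(\bar{x}_{<j})$, with $\bar{a}_i=\proj_{\bar{x}_{<j}}a_i$;
\item sample $\tilde{x}_{<j}$, $x_{<j}$, $\bar{a}_{1:d'}$ as the player's own randomness, and take $x_j$ from the game's vector;
\item give the OCVG the residual rows, with norms $\Delta_i=\sqrt{1-\|\bar{a}_i\|^2}$ (bounded below via Lemma~\ref{lem:proj-norms});
\item use offsets $\eta_{t,i}=-\ip{\ujt}{\bar{a}_i}$.
\end{itemize}
This is exactly why the game allows non-unit row norms and offsets.

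Your offset mechanics and your $\tilde{x}_j$-versus-$x_j$ step via a singular-value bound are in line with the paper. However, once the game matrix is a genuine residual of $A$, the identity $\|A\tujt\|_\infty=\|A'\tujt\|_\infty$ you rely on fails. The paper instead needs the first part of Lemma~\ref{lem:Aprime-lb-A}, losing a factor $4d$. That factor is precisely where your ``room to spare'' between $1/(8d^{5.5})$ and $1/(2d^{4.5})$ goes.

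One caution even about this route. After conditioning on $\bar{a}$, the residual rows are still not free: $\bar{x}_l\in\Span(A)$ forces a linear dependence among them, with coefficients tied to $\bar{a}$. The ``uniform over $\cA$'' description in Lemma~\ref{lem:conditional-of-rows} does not record this constraint, so that conditional-law step needs more care than the argument given there.

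A minor point: you cannot append rows $-a_r$ to the game matrix, since its rows must be independent. Instead, issue both $u_t$ and $-u_t$, at a constant-factor cost in $T$.
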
 
\begin{proof}
In the following, for any $l\in[N]$ define $\bar{x}_l = \proj_A x_l$. Note $x_l = \bar{x}_l + \tilde{x}_l$. 
Further define $\bar{a}_i = \proj_{\bar{x}_{<j}} a_i, ~\forall i\in[d']$ . 

We will show this by showing that if Eqn. \eqref{eq:opt-ocvg} were not satisfied, then there exists a strategy which wins an OCVG (see Proposition \ref{prop:ocvg-bound}) with non-negligible probability. Several sublemmas used in the following are deferred to the next section, Appendix \ref{app:opt-ocvg-sublemmas}.

\paragraph{Sampling a loss instance using the OCVG.} First, the player $\cP$ samples $\tilde{x}_{< j}$, $x_{\leq j}$, and $\bar{a}_{1:n}$ from their joint  distribution (note $\bar{x}_{<j}$ is determined by $\tilde{x}_{<j}$ and $x_{<j}$), as induced by the sampling process specified at the start of Section \ref{sec:proof-of-main-lb}. 
$\cP$ then receives an OCVG instance with embedding dimension $d-j+1$, matrix size $d'$, query limit $T=\frac{d}{C_1\log(d)}$, message sizes $m_1=m$ and $m_2=0$, and dimension threshold $k=32\frac{m}{d}$ (although note that in the OCVG this parameter acts more as a correlation threshold). 
The norm parameters, $\Delta_1,...,\Delta_{d'}$ in the OCVG are set such that $\Delta_{i} = \sqrt{1- \|\bar{a}_i\|^2}$.
Lemma \ref{lem:proj-norms}, deferred to Section \ref{app:opt-ocvg-sublemmas} below, shows that w.p. at least $1-2^{-\Omega(d)}$, $\Delta_i \geq 1/2$ for all $i\in[d']$, as required.
For clarity, we call the uniformly random unit vector from the OCVG $x'$ and the matrix from the OCVG $A'$ (whose $i$'th row we denote as $a'_i$).

$\cP$ will implement its strategy by using $\opt$ to induce some $h_1$ and compute the informative bits $h_1(A')$, and then continue using $\opt$ to make the OCVG queries, but
we first show how $\cP$ can sample an optimization problem from the distribution defined at the start of Section \ref{sec:proof-of-main-lb} using the OCVG quantities $A'$ and $x'$. 
Let $M$ be any $d\times (d-j+1)$ orthonormal matrix whose column span is $\Span^\perp(\tilde{x}_{<j})$. 
Each row of $A$ is set as $a_i = \bar{a}_i + Ma'_i$. 
Due to the sampling distribution of the OCVG matrix $A'$, each $M a_i'$ is uniformly random over $\Span^\perp(\tilde{x}_{<j}) \cap \cS(\Delta_i)$, and thus setting $A$ this way induces the same distribution over $A$ that would result from the sampling defined at the start of Section \ref{sec:proof-of-main-lb}; this is shown by Lemma \ref{lem:conditional-of-rows}, deferred to Section \ref{app:opt-ocvg-sublemmas} below. $\cP$ takes $x_j = Mx'$, and we thus have $x_j \sim \Unif(\{\Span^\perp(\tilde{x}_{<j})\cap \cS(1)\}$ since $x'$ is a uniformly random $d-j+1$ dimensional unit vector.
The other loss function parameters $k$ and $\alpha$ are set as described at the start of Section \ref{sec:proof-of-main-lb}. Note that because we will only consider running the optimization process (Algorithm \ref{alg:opt}) until the end of round $j$, and we use a resisting oracle, the values of $x_{j+1:N}$ and $\tilde{x}_{j:N}$ will not be needed by the $\cP$ or $\opt$.

\paragraph{Using $\opt$ to implement the player strategy.} 
We can now describe the player strategy.
$\cP$ takes $h_1(A')$ to be the (randomized) function which first samples the objective function according the previously described procedure, then runs $\opt$ until the start of the $j$'th round and outputs the last $m$-bit memory state of $\opt$ before the first oracle query of round $j$ is made.
Note that while $h_1(A')$ is the only information $\cP$ has about $A'$, $\cP$ still has access to $\tilde{x}_{<j}$, $x_{\leq j}$, and $\bar{a}_{1:n}$, as these were sampled independently of $A'$.

$\cP$ now implements the ``querying'' phase of the OCVG by continuing to run $\opt$ until the end of the $j$'th iteration and making OCVG queries that are transformations of the oracle queries made by $\opt$.
This is possible because $\cP$ has access to memory state of $\opt$ right before the start of round $j$ and, as we will show subsequently, is able to answer the resisting oracle responses.
Specifically, at each resisting oracle query $\ujt$, $\cP$ makes the OCVG query $M^{-1}\ujt = M^{-1}\tujt$, and for all $i\in[d']$ sets $\eta_{t,i} = -\ipnos{\ujt}{\bar{a}_i}$ (recall $\cP$ has access to $\bar{a}_{1:d'}$). 
From the OCVG query $\cP$ receives the vector $a'_{i^*}$ where,
\begin{align*}
    i^{*} = \argmax\limits_{i\in[d']}\{\ip{M^{-1}\ujt}{a'_i} - \eta_i\} =  \argmax\limits_{i\in[d']}\{\ip{\ujt}{M a'_i} + \ipnos{\ujt }{\bar{a}_i}\} = \argmax\limits_{i\in[d']}\{\ip{\ujt}{a_i}\}.
\end{align*}
Since $\cP$ also has access to $x_{\leq j}$ and $\tilde{x}_{<j}$, this ensures $\cP$ has all the information needed to evaluate the response of the resisting oracle, which it can then send to $\opt$.
Specifically, if $\nabla F_j(\ujt) = \nabla \barr(\ujt)$, $\cP$ sends $\opt$ the vector $d^{6} (Ma'_{i^*}+\bar{a}_{i^*})=d^{6}a_{i^*} = \nabla\cR(\ujt)$. Otherwise it implements the resisting oracle using $\tilde{x}_{<j}$ and $x_{\leq j}$.

\paragraph{Showing an informative oracle query wins the OCVG.} What remains is to show that if, 
\begin{align}\label{eq:opt-ocvg-win-event}
    \exists t\in[T]: \|A\tujt\|_\infty \leq \frac{\|\tujt\|}{8d^{5.5}} \text{ and } \max\{|\ip{\tujt}{x_j}|,|\ip{\tujt}{\tilde{x}_j}|\} \geq 6\|\tujt\|\sqrt{\frac{k}{d}}.
\end{align}
then $\cP$ wins the OCVG.

For any $i\in[d']$, the $i$'th row of $MA'$ is equal to $\proj_{\bar{x}_{<j}}^\perp a_i$ which we note is in $\Span(A)$ since $\Span(\bar{x}_{<j}) \subseteq \Span(A)$. Consequently, by Lemma \ref{lem:Aprime-lb-A}, deferred to Section \ref{app:opt-ocvg-sublemmas} below, we have with probability at least $2^{-\Omega(d)}$ under the draw of $A$, 
\begin{align} \label{eq:nice-embedding}
    \forall v\in\cB(1), \quad \|M A' v\|_\infty \leq 4d\|A v\|_\infty \quad \text{ and }\quad |\ip{v}{x_j} - \ip{v}{\tilde{x}_j}| \leq 2d\|Av\|_\infty.
\end{align}
Using this, we see that if Eqn. \eqref{eq:opt-ocvg-win-event} is satisfied for some $t\in[T]$, then
$\|A' M^{-1} \tujt\|_\infty = \|M A' \tujt\|_\infty \leq  (4d)\frac{\|\tujt\|}{8d^{5.5}} = \frac{\|\tujt\|}{2d^{4.5}} = \frac{\|M^{-1}\tujt\|}{2d^{4.5}}$, which ensures the OCVG query satisfies near orthogonality.
For the sufficient correlation condition, first consider the case where 
$|\ip{\tujt}{x_j}| \geq 6\|\tujt\|\sqrt{\frac{k}{d}}$.
Then since $M$ is orthonormal and $\tujt$ lies in its column span, the correlation condition is satisfied as,
\begin{align*}
    |\ip{M^{-1}\tujt}{x'}| = |\ip{\tujt}{x_j}| \geq 6\|\tujt\|\sqrt{\frac{k}{d}} = 6\|M^{-1}\tujt\|\sqrt{\frac{k}{d}}. %
\end{align*}
Alternatively, in the case where $|\ip{\tujt}{\tilde{x}_j}| \geq 6\|\tujt\|\sqrt{\frac{k}{d}}$, then by Eqn. \eqref{eq:nice-embedding} we still have 
$|\ip{\tujt}{x_j}| \geq 6\|\tujt\|\sqrt{\frac{k}{d}} - 2d\|A\tujt\|_\infty \geq 6\|\tujt\|\sqrt{\frac{k}{d}} - \frac{\|\tujt\|}{4d^{5.5}} \geq 4\|\tujt\|\sqrt{\frac{k}{d}}$, and we conclude as in the first case.
Finally, $m_1 = m \leq \frac{d^2}{256\log(d)} \leq \frac{(d-j+1)^2}{64\log(d-j+1)}$ since $j < N < \sqrt{d}/48$.
Thus the probability of $\cP$ winning is at most $O(1/d^3)$ by  Proposition \ref{prop:ocvg-bound}. Since we have shown that $\cP$ wins the OCVG when Eqn. \eqref{eq:opt-ocvg-win-event} is satisfied, this yields the claim of the lemma.
\end{proof}

\subsubsection{Sublemmas for the proof of Lemma \ref{lem:opt-ocvg}}\label{app:opt-ocvg-sublemmas}
The following lemmas are used in the proof of Lemma \ref{lem:opt-ocvg}.
\begin{lemma} \label{lem:proj-norms}
Let $j\in[N]$ and $d$ be larger than some constant. Then $\pr{}{\forall i\in[d']: \|\proj_{\bar{x}_{<j}} a_i\| \geq 1/2} \geq 1-2^{-\Omega(d)}$.

\end{lemma}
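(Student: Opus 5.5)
We read the claim as the bound needed for the norm parameters, namely that for every $i\in[d']$ the component of $a_i$ orthogonal to $\Span(\bar{x}_{<j})$ has norm at least $1/2$, i.e. $\|\bar{a}_i\|$ is bounded away from $1$. This is what makes $\Delta_i=\sqrt{1-\|\bar{a}_i\|^2}\ge 1/2$ in the proof of Lemma \ref{lem:opt-ocvg}. Since $\|\bar{a}_i\|^2+\|\proj^\perp_{\bar{x}_{<j}}a_i\|^2=1$, it suffices to show $\|\proj_{\bar{x}_{<j}}a_i\|\le 1/5$ for all $i$, with probability $1-2^{-\Omega(d)}$. The plan is to compare $\Span(\bar{x}_{<j})$ with $X:=\Span(x_{<j})$, whose dimension is at most $N\le\sqrt{d}/48\ll d$.

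\textbf{Step 1 (reduction to $X$).} Because $\bar{x}_l=\proj_A x_l$, we have $\Span(\bar{x}_{<j})\subseteq\proj_A(X)$. Every unit vector there has the form $\proj_A w/\|\proj_A w\|$ with $w\in X$. Since $a_i\in\Span(A)$, we get $\ip{a_i}{\proj_A w}=\ip{a_i}{w}$. Therefore
\begin{align*}
\|\proj_{\bar{x}_{<j}}a_i\|\le \sup_{w\in X\cap\cS(1)}\frac{|\ip{a_i}{w}|}{\|\proj_A w\|}\le \frac{\|\proj_X a_i\|}{\sigma},\qquad \sigma:=\min_{w\in X\cap \cS(1)}\|\proj_A w\|.
\end{align*}
It then suffices to show that, with probability $1-2^{-\Omega(d)}$, both $\|\proj_X a_i\|\le 1/10$ for all $i$ and $\sigma\ge 1/2$. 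A union bound over the $d'$ rows costs only a factor $d$.

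\textbf{Step 2 (handling the adaptive sampling).} The vectors $x_l$ are not independent of $A$: $x_l$ is uniform on the sphere of $W_l:=\Span^\perp(\tilde{x}_{<l})$, and $\tilde{x}_{<l}$ depends on $A$. The saving fact is that every $\tilde{x}_{l'}$ is orthogonal to $\Span(A)$, so $\Span(A)\subseteq W_l$ for every $l$, and in particular $a_i\in W_l$. We condition on $A$ and on $x_{<l}$, which also fixes $\tilde{x}_{<l}$. Under this conditioning, $x_l$ is uniform on $\cS(1)\cap W_l$, where $\dim W_l\ge d-N$, and both $a_i$ and $\Span(A)$ are fixed subspaces of $W_l$.

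We run Gram--Schmidt sequentially on $x_1,x_2,\dots$. Each new direction, after removing its components along the earlier $x$'s (a fixed subspace of dimension less than $N$ given the past), is a nearly uniform direction in a space of dimension at least $d-2N$. The random projection lemma (Lemma \ref{lem:random-projection}), applied with constant $\epsilon$, then gives three facts, each with failure probability $2^{-\Omega(d)}$:
\begin{itemize}
\item[(a)] the new orthonormal direction has squared inner product $O(1/d)$ with $a_i$, and summing over at most $N$ directions gives $\|\proj_X a_i\|^2=O(N/d)$, which is far below $1/100$;
\item[(b)] its squared projection onto $\Span(A)$, a subspace of dimension $d/2$, concentrates around $\tfrac{d/2}{\dim W_l}\ge 1/2-o(1)$;
\item[(c)] its projection onto $\Span(A)$ is nearly orthogonal to the projections of the earlier directions.
\end{itemize}
For (c), the inner products between projected directions have typical size $O(1/\sqrt d)$, but we need the resulting $N\times N$ matrix to be controlled in operator norm, not entrywise, with $2^{-\Omega(d)}$ failure probability. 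The plan is to bound the restricted singular values of $\proj_A$ on $X$ using a $1/4$-net of $X\cap\cS(1)$, of size $12^{N}=2^{O(\sqrt d)}$. For each fixed net point, the projection norm concentrates within constant accuracy with probability $1-2^{-\Omega(d)}$, and the net size is negligible against this tail. This yields $\sigma\ge 1/2$.

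\textbf{Main obstacle.} The hard part is making the net argument legitimate when $X$ is built adaptively. The plan is to phrase everything via the sequential conditional law above. Conditioned on the past, the new Gram--Schmidt direction is uniform in a fixed subspace that contains $\Span(A)\cap(\text{past})^\perp$ up to dimension loss at most $N$. This lets us couple $X$ with a uniformly random $N$-dimensional subspace of a $(d-O(N))$-dimensional space containing $\Span(A)$ up to an $O(N)$-dimensional correction, and that correction perturbs the constants only by $O(N/d)$.

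Combining the two bounds in Step 1 gives $\|\proj_{\bar{x}_{<j}}a_i\|\le (1/10)/(1/2)=1/5$ for every $i\in[d']$, and hence $\|\proj^\perp_{\bar{x}_{<j}}a_i\|\ge\sqrt{1-1/25}\ge 1/2$, with probability $1-2^{-\Omega(d)}$ after the union bound over $i\in[d']$ and $l<j$.
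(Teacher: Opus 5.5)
Your Step 1 is correct: $\Span(\bar{x}_{<j})=\proj_A(X)$ and $\|\proj_{\bar{x}_{<j}}a_i\|\le\|\proj_X a_i\|/\sigma$. Your reading of the statement is also the intended one, since the paper's proof shows the projection \emph{onto} $\Span(\bar{x}_{<j})$ is below $1/2$. The gap is in Step 2, exactly at the adaptivity issue you flag.

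Item (a) fails as written. For one (near-)uniform direction $e$ in dimension about $d$, $\Pr[\langle a_i,e\rangle^2\ge C/d]\approx e^{-C/2}$. So the random projection lemma with a one-dimensional subspace and constant $\epsilon$ gives only constant failure probability. At confidence $1-2^{-\Omega(d)}$ each term can only be bounded by a constant, and summing $N$ such bounds gives nothing. The target $\|\proj_X a_i\|\le 1/10$ is true, but it needs a joint concentration argument over all $N$ steps.

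The net argument for $\sigma\ge 1/2$ has a similar problem: there are no fixed net points when $X$ is built adaptively, and the coupling you sketch is not made precise. Its claimed $O(N/d)$ error is also not attainable at confidence $1-2^{-\Omega(d)}$, where such corrections can only be bounded by constants. (Minor: a $1/4$-net yields roughly $\sigma\ge 0.45$, not $1/2$.)

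The missing idea is a symmetry that removes the adaptivity altogether. Condition on $A$ and let $R$ be any rotation of $\Span(A)$, extended by the identity on $\Span(A)^\perp$.
\begin{itemize}
\item $R$ commutes with $\proj_A$, so it fixes every $\tilde{x}_l=\proj_A^\perp x_l$ and hence every $W_l$.
\item Because $\Span(A)\subseteq W_l$, $R$ maps $W_l$ onto itself.
\item By induction over $l$, the conditional law of $(x_1,\dots,x_{j-1})$ given $A$ is $R$-invariant, and $\proj_A R x_l=R\bar{x}_l$.
\end{itemize}
Hence $\Span(\bar{x}_{<j})$ is a uniformly random $(j-1)$-dimensional subspace of the $d'$-dimensional space $\Span(A)$. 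Now apply the random projection lemma to the fixed unit vector $a_i\in\Span(A)$. Fluctuations are of order $1/\sqrt{d'}$ around $\sqrt{(j-1)/d'}=O(d^{-1/4})$, so $\|\proj_{\bar{x}_{<j}}a_i\|<1/2$ except with probability $2^{-\Omega(d)}$. A union bound over $i$ finishes. This is essentially the paper's argument, and it needs neither $X$ nor $\sigma$.

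If you want to keep Step 1, the same fact can be phrased as an explicit coupling. Take $y_l$ i.i.d.\ uniform on $\cS(1)$, independent of $A$, and set $x_l=\proj_{W_l}y_l/\|\proj_{W_l}y_l\|$. Since $\Span(A)\subseteq W_l$, you get $\bar{x}_l=\proj_A y_l/\|\proj_{W_l}y_l\|$. So $\Span(\bar{x}_{<j})=\proj_A(Y)$ with $Y:=\Span(y_{<j})$ uniform and independent of $A$. Running Step 1 with $Y$ in place of $X$ then only requires the standard facts $\|\proj_Y a_i\|\le 1/10$ and $\min_{w\in Y\cap\cS(1)}\|\proj_A w\|\ge 1/2$.
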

\begin{proof}
For any $\bsA \in\Supp(A)$, condition on $A=\bsA$. Since each $x_{j'}$, $j'<j$, is sampled uniformly at random over a linear subspace containing $\Span(\bsA)$ (the row span of $\bsA$), the conditional distribution given $A=\bsA$ of the vector $\frac{\bar{x}_{j'}}{\|\bar{x}_{j'}\|}$ is uniform over $\Span(\bsA)\cap\cS(1)$; this follows from the rotational invariance of the distribution of $x_{j'}$. Thus, $\Span(\bar{x}_1,...,\bar{x}_{j-1})$ is a uniformly random linaer subspace inside $\Span(\bsA)$. 
Since $j<N<d/48$, by the properties of random projection, for any $i\in[d']$,
\begin{align*}
    \pr{x_{<j}|A=\bsA}{\|\proj_{\bar{x}_{<j}}\bsa_i\| \geq \frac{1}{2}} \leq \pr{}{\|\proj_{\bar{x}_{<j}}\bsa_i\| \geq 2\sqrt{\frac{j}{d}}} \leq 2^{-\Omega(d)}.
\end{align*}
Marginalizing over $A$ yields the same bound on the unconditional probability.
The final result follows from a union bound and the fact that $d$ is sufficiently larger than a constant.
\end{proof}

For the following, we recall that for each $l\in[N]$, $\bar{x}_l = \proj_A x_l$. Further, for the value of $j$ fixed in Lemma \ref{lem:opt-ocvg}, we have defined $\bar{a}_i = \proj_{\bar{x}_{<j}} a_i$. We give an explicit form for the conditional distribution of $A$ given certain fixed information. The proof ideas are similar in nature to the proof of \cite[Lemma 3]{MN25}.
\begin{lemma} \label{lem:conditional-of-rows}
Fix any realization of $\tilde{x}_{< j}$,$\bar{x}_{<j}$,$x_{\leq j}$, and $\{\bar{a}_i\}_{i\in[d']}$. 
Let, 
\begin{align*}
    \cA_i = \bc{a\in\cS(1) : \proj_{\bar{x}_{<j}}a = \bar{a}_i ~~\text{ and }~~ \proj_{\bar{x}_{<j}}^\perp a \in \Span^\perp(\tilde{x}_{<j}) \cap \cB(\sqrt{1-\|\bar{a}_i\|^2})},
\end{align*}
and let $\cA\subseteq \Supp(A)$ be the set of matrices such that $\hat{A} \in \cA$ if each row $\hat{a}_i$, $i\in [d']$, satisfies $\hat{a}_i \in \cA_i$.
The conditional distribution of $A$ given $\tilde{x}_{< j}$, $\bar{x}_{<j}$ $x_{\leq j}$, and $\{\bar{a}_i\}_{i\in[d']}$ is uniform over $\cA$. 
\end{lemma}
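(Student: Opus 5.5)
The approach is to compute the joint density of $(A, x_{<j})$, re-express it in the variables being conditioned on, and show that, once those are fixed, what remains of the density does not depend on $A$ except through the constraint $A\in\cA$. The key structural fact is that the sampling process for $x_{1:j}$ depends on $A$ only through the subspaces $\Span(\tilde{x}_{<l})$. Fixing $\tilde{x}_{<j}$ therefore freezes the conditional densities of the $x_l$.

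\textbf{Step 1: the likelihood of the Nemirovski vectors is constant in $A$.} Fix a realization $\bar A$ of $A$. For each $l\le j$, the conditional density of $x_l$ given $A=\bar A$ and $x_{<l}$ is the uniform density on $\Span^\perp(\tilde{x}_{<l})\cap\cS(1)$. This density depends on $\bar A$ only through $\tilde{x}_{<l}$. Hence the joint density of $x_{\le j}$ given $A=\bar A$ equals $\prod_{l\le j} p_l(x_l\mid \tilde{x}_{<l})$. Once $\tilde{x}_{<j}$ and $x_{\le j}$ are fixed, this product is the same constant for every $\bar A$ compatible with them.

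\textbf{Step 2: identify the compatible matrices.} We need the set of $\bar A$ consistent with the fixed values of $\tilde{x}_{<j}$, $\bar{x}_{<j}$ and $\bar a_{1:d'}$. Since $\tilde{x}_l=\proj_{\bar A}^\perp x_l$ and $\bar x_l=\proj_{\bar A}x_l$, compatibility means two things: $\Span(\bar{x}_{<j})\subseteq \Span(\bar A)\perp \tilde{x}_{<j}$, and $\proj_{\bar{x}_{<j}} \bar a_i=\bar a_i$. I claim these conditions are exactly the membership conditions of $\cA$. Each row $\hat a_i$ is a unit vector with prescribed projection $\bar a_i$ onto $\Span(\bar x_{<j})$. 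Its remaining component lies in $\Span^\perp(\tilde{x}_{<j})$, is orthogonal to $\bar x_{<j}$, and has norm $\sqrt{1-\|\bar a_i\|^2}$, so the ball in the definition is effectively a sphere. Conversely, take any $\hat A\in\cA$ (generic, so its rows span $d'$ dimensions and contain $\Span(\bar x_{<j})$). For such $\hat A$, the decomposition $x_l=\bar x_l+\tilde x_l$ gives $\proj_{\hat A}x_l=\bar x_l$, because $\bar x_l\in\Span(\hat A)$ and $\tilde x_l\perp \Span(\hat A)$. So $\hat A$ reproduces all the conditioned quantities. The one point that needs care is that $\Span(\bar x_{<j})\subseteq\Span(\hat A)$ holds for every $\hat A\in\cA$, not just for the true $A$. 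I expect to get this from a dimension/genericity argument: the $\bar a_i$ are projections onto $\Span(\bar x_{<j})$ of a full-rank $A$, so they span that space, and therefore so do the rows of $\hat A$ after projection.

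\textbf{Step 3: the prior restricted to $\cA$ is uniform.} The prior on $A$ is a product of uniform measures on $\cS(1)$. Conditioning each row on its projection onto the fixed subspace $\Span(\bar x_{<j})$ leaves the orthogonal part uniform on a sphere of radius $\sqrt{1-\|\bar a_i\|^2}$ in the complement. This is rotational invariance; the Jacobian factor depends only on $\|\bar a_i\|$, which is fixed. The further restriction that this part lie in $\Span^\perp(\tilde x_{<j})$ is a measure-zero event, so it must be handled by a disintegration or limiting argument, as in \cite[Lemma 3]{MN25}. Rotations of $\Span^\perp(\tilde x_{<j},\bar x_{<j})$ preserve both the prior and the likelihood from Step 1, and I will argue that the conditional law is invariant under them, hence uniform on $\cA$. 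Combining with Step 1 via Bayes' rule then gives the uniform posterior on $\cA$.

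The main obstacle is Step 3's conditioning on a null event. I would handle it through this symmetry argument rather than with explicit densities: show that the joint law is invariant under the group of rotations fixing $\tilde x_{<j}$, $x_{\le j}$ and $\bar x_{<j}$ (acting independently on each row's orthogonal part), and that this group acts transitively on $\cA$.
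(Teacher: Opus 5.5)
Your Steps~1--2 follow the paper's route: a uniform prior, the likelihood $\prod_{l}\rho(x_l\mid\tilde{x}_{<l})$ being independent of $A$, and then identification of the compatible matrices. The gap is exactly the point you flagged in Step~2, and your proposed fix runs the wrong way. Knowing that the $\bar{a}_i$ span $\Span(\bar{x}_{<j})$ only says that $\proj_{\bar{x}_{<j}}$ maps $\Span(\hat{A})$ \emph{onto} $\Span(\bar{x}_{<j})$. It does not give $\Span(\bar{x}_{<j})\subseteq\Span(\hat{A})$, and generically the opposite holds.

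Write $w_i=\proj^\perp_{\bar{x}_{<j}}\hat{a}_i$. Under the uniform measure on $\cA$, the $w_i$ are independent uniform vectors on spheres in $W=\Span^\perp(\tilde{x}_{<j},\bar{x}_{<j})$. This space has dimension $d-2(j-1)>d'=d/2$ (as $j<N\ll d$), so the $w_i$ are almost surely linearly independent. Then $\sum_i\gamma_i\hat{a}_i\in\Span(\bar{x}_{<j})$ forces $\sum_i\gamma_iw_i=0$, i.e.\ $\gamma=0$. So $\Span(\hat{A})\cap\Span(\bar{x}_{<j})=\{0\}$ and $\proj_{\hat{A}}x_l\neq\bar{x}_l$. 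The true $A$, by contrast, always has $\bar{x}_l=\proj_Ax_l\in\Span(A)$, which forces its $w_i$ to span only a $(d'-j+1)$-dimensional space. Hence for $j\ge 2$ the compatible set is a null subset of $\cA$, and the posterior cannot be uniform on $\cA$.

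The smallest case makes this concrete: take $d=4$, $d'=2$, $j=2$. Here $\cA$ is a $2$-torus (two circles in the plane $W$), while compatibility requires $w_1\parallel w_2$, which is a curve. The dimension count $6-3-2=1$ confirms that the posterior lives on that curve: the prior has dimension $6$, conditioning on $\bar{x}_1$ given $x_1$ imposes $3$ constraints, and the two scalars $\langle\hat{a}_i,\bar{x}_1\rangle$ impose $2$ more.

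The same issue breaks Step~3. Rotating each row's orthogonal part independently changes $\Span(\hat{A})$, hence changes $\proj_{\hat{A}}x_{<j}$, so these maps do not preserve the conditioning statistic at all. Only simultaneous rotations of $W$ preserve it, and the resulting diagonal invariance is neither transitive on $\cA$ nor gives independence across rows.

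You should know that the paper's proof has the same hole. It asserts that $\tilde{x}_l\perp\Span(\hat{A})$ implies $\proj^\perp_{\hat{A}}x_l=\tilde{x}_l$, but this additionally needs $\bar{x}_l\in\Span(\hat{A})$. So the statement as written is not provable. A correct version must add $\Span(\bar{x}_{<j})\subseteq\Span(\hat{A})$ to the definition of $\cA$, which has three consequences:
\begin{itemize}
\item the rows become coupled;
\item uniformity on the corrected set would need its own Jacobian computation, since disintegrating through $\Span(\hat{A})$ typically introduces Gram-determinant factors;
\item the downstream reduction in Lemma~\ref{lem:opt-ocvg}, which resamples the rows independently, would have to be revisited.
\end{itemize}
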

\begin{proof}
Since $\bar{x}_l = x_l - \tilde{x}_l$ for all $l\in[N]$, it suffices to show the result when conditioning on $\tilde{x}_{< j}$,$x_{\leq j}$, and $\{\bar{a}_i\}_{i\in[d']}$

Let $\bar{A}$ be the matrix whose $i$'th row is $\bar{a}_i$ for each $i\in[d']$. Consider any $\bsA\in\Supp(A)$, $\bsbarA\in\Supp(\bar{A})$, $\bsx_{\leq j} \in \Supp(x_{\leq j})$ and $\tilde{\bsx}_{<j}\in\Supp(\tilde{x}_{<j})$. 
In the following, we use $\rho$ to denote density functions of random variables, where disambiguation will be obvious from context. 
Define $\bar{\bsx}_l = \proj_{A} \bsx_l$ for each $l\in[j]$.
We have for some $c\in\re$,
\begin{align}\label{eq:bayes}
    \rho(\bsA | \bsx_{\leq j}, \tilde{\bsx}_{<j}, \bsbarA) = \frac{\rho(\bsbarA, \bsx_{\leq j}, \tilde{\bsx}_{<j} | \bsA)\rho(\bsA)}{\rho(\bsbarA, \bsx_{\leq j}, \tilde{\bsx}_{<j})} = c\cdot \rho(\bsbarA, \bsx_{\leq j}, \tilde{\bsx}_{<j} | \bsA).
\end{align}
For the second inequality, note that
since the prior distribution of $A$ is uniform, $\rho(\bsA)$ does not depend on the value of $\bsA$. Further, $\rho(\bsx_{\leq j}, \tilde{\bsx}_{<j}, \bsbarA)$ also does not depend on $\bsA$. What remains is to show that $\rho(\bsbarA, \bsx_{\leq j}, \tilde{\bsx}_{<j} | \bsA) = c' \cdot \ind\{A\in\cA\}$ for some $c' \in\re$. 
We have,
\begin{align} \label{eq:likelihood}
    \rho(\bsbarA, \bsx_{\leq j}, \tilde{\bsx}_{<j} | \bsA) 
    = \frac{\rho(\bsbarA, \bsx_{\leq j}, \tilde{\bsx}_{<j} | \bsA)}{\rho(\bsx_{\leq j}, \tilde{\bsx}_{<j} | \bsA)} \rho(\bsx_{\leq j}, \tilde{\bsx}_{<j} | \bsA),
\end{align}
and,
\begin{align*}
    \rho(\bsx_{\leq j}, \tilde{\bsx}_{<j} | \bsA) &= \rho(\bsx_j |  \bsx_{< j},\tilde{\bsx}_{<j}, \bsA)\rho(\tilde{\bsx}_{j-1} | \bsx_{< j}, \tilde{\bsx}_{<j-1}, \bsA) \rho(\bsx_{< j}, \tilde{\bsx}_{<j-1} | \bsA) \\
    &~~\vdots \\
    &= \prod_{l=1}^j \rho(\bsx_l |  \bsx_{< l},\tilde{\bsx}_{<l}, \bsA)\rho(\tilde{\bsx}_{l-1} | \bsx_{< l}, \tilde{\bsx}_{<l-1}, \bsA) \\
    &= \prod_{l=1}^j \rho(\bsx_l | \tilde{\bsx}_{<l})\rho(\tilde{\bsx}_{l-1} | \bsx_{l-1}, \bsA) \\
    &= \prod_{l=1}^j \rho(\bsx_l | \tilde{\bsx}_{<l})\frac{\rho(\tilde{\bsx}_{l-1}, \bsx_{l-1} | \bsA)}{\rho(\bsx_{l-1}| \bsA)}. %
\end{align*}
Plugging into Eqn. \eqref{eq:likelihood} we obtain,
\begin{align*}
    \rho(\bsbarA, \bsx_{\leq j}, \tilde{\bsx}_{<j} | \bsA) = \frac{\rho(\bsbarA, \bsx_{\leq j}, \tilde{\bsx}_{<j} | \bsA)}{\rho(\bsx_{\leq j}, \tilde{\bsx}_{<j} | \bsA)} \prod_{l=1}^j \rho(\bsx_l | \tilde{\bsx}_{<l})\frac{\rho(\tilde{\bsx}_{l-1}, \bsx_{l-1} | \bsA)}{\rho(\bsx_{l-1}| \bsA)}.
\end{align*}
Since for any $l\in[j-1]$, because $\tilde{x}_l$ is completely determined by $x_{l}$ and $A$, the term $\frac{\rho(\tilde{\bsx}_{l}, \bsx_{l} | \bsA)}{\rho(\bsx_{l}| \bsA)}$ is $1$ if and only $\proj_A^\perp \bsx_{l} = \tilde{\bsx}_l$ and is $0$ otherwise; recall the orthogonal projection of a vector onto a linear subspace is unique.
Similarly, $\bar{A}$ (and $\bar{x}_{<j}$) are completely determined by $x_{\leq j}$ and $A$.
Assuming each $\tilde{\bsx}_{l} = \proj_{A}^\perp \bsx_l$ for all $l\in[j-1]$ (i.e. the case where $\frac{\rho(\tilde{\bsx}_{l}, \bsx_{l} | \bsA)}{\rho(\bsx_{l}| \bsA)}$ is non-zero), then the term $\frac{\rho(\bsbarA, \bsx_{\leq j}, \tilde{\bsx}_{<j} | \bsA)}{\rho(\bsx_{\leq j}, \tilde{\bsx}_{<j} | \bsA)}$ is $1$ if and only if $\forall i\in[d']: \bar{\bsa}_i = \proj_{\bar{\bsx}_{<j}} \bsa_i$ and is $0$ otherwise ($\bar{\bsa}_i$ is the $i$'th row of $\bsbarA$). 
Since each $\rho(\bsx_l|\tilde{\bsx}_{<l})$ term does not depend on the choice of $\bsA$, via Eqn. \eqref{eq:bayes} this shows that the conditional distribution of $A$ is uniform over $\cA'$, where $\cA'$ is the set of matrices, $\bsA$, such that $\bsA\in\Supp(A)$ and for every $l\in[j-1]$, $ \proj_{\bsA}^\perp x_l = \tilde{x}_l$ and $\proj_{\bar{x}_{<j}} \bsa_i = \bar{a}_i$ for every $i\in[d']$. 

What remains is to show that $\cA'=\cA$.
Let $\bsa_{1:d'}$ denote some potential value for the rows of $A$. 
Recall we are conditioning on a realization of $\tilde{x}_{< j}$, $x_{\leq j}$, and $\{\bar{a}_i\}_{i\in[d']}$ in their joint support, which also determines $\bar{x}_{<j}$. 
Both $\cA$ and $\cA'$ require $\proj_{\bar{x}_{<j}} \bsa_i = \bar{a}_i$ for all $i\in[d']$. 
The condition $\|\proj_{\bar{x}_{<j}}^\perp \bsa_i\| = \sqrt{1-\|\bar{a}_i\|^2}$ follows from that fact that that $\Supp(A) = (\cS(1))^{d'}$.
Let $\tilde{\bsa}_i = \proj_{\bar{x}_{<j}^\perp}\bsa_i$ for all $i\in[d']$.
What remains is to show 
$\proj_{\bsA}^\perp x_l = \tilde{x}_l$ for all $l\in[j-1]$ if and only if $\tilde{\bsa}_1,...,\tilde{\bsa}_{d'} \in  \Span^\perp(\tilde{x}_{<j})$.

We have for any $l \in [j-1]$ and $i\in[d']$,
\begin{align*}
    \ip{\tilde{x}_l}{\bsa_i} 
    &= \ip{x_l}{\bsa_i} - \ip{\bar{x}_l}{\bsa_i} \\
    &=  \ip{x_l}{\proj_{\bar{x}_{<j}} \bsa_i} + \ip{x_l}{\proj_{\bar{x}_{<j}}^\perp \bsa_i} - \ip{\bar{x}_l}{\bsa_i} \\
    &\overset{(i)}{=} \ip{x_l}{\proj_{\bar{x}_{<j}}^\perp \bsa_i} \\
    &= \ip{\bar{x}_l}{\proj_{\bar{x}_{<j}}^\perp \bsa_i} + \ip{\tilde{x}_l}{\proj_{\bar{x}_{<j}}^\perp \bsa_i} \\
    &= \ip{\tilde{x}_l}{\proj_{\bar{x}_{<j}}^\perp \bsa_i} \\
    &= \ip{\tilde{x}_l}{\tilde{\bsa}_i}.
\end{align*}
Step $(i)$ uses the fact that for any $l<j$, $\ipnos{x_l}{\proj_{\bar{x}_{<j}} \bsa_i} = \ipnos{\proj_{\bar{x}_{<j}}(\bar{x}_l + \tilde{x}_l)}{ \bsa_i} = \ip{\bar{x}_l}{\bsa_i}$, noting $\tilde{x}_l$ is orthogonal to $\Span(\bar{x}_{<j})$ for any realization in their (joint) support.
The above shows the vector $\tilde{x}_l$ is orthogonal to $\Span(\bsA)$ if and only if $\Span(\tilde{\bsa}_{1:d'})$ is orthogonal to $\tilde{x}_l$.
Since the orthogonal projection of $x_l$ onto $\Span^\perp(\bsA)$ is the closest vector to $x_l$ which is orthogonal to $\bsA$, if $\Span(\tilde{\bsa}_{1:d'}) \in \Span^\perp(\tilde{x}_l)$ then
$\proj_A^\perp x_l = \tilde{x}_l$ since $\tilde{x}_l$ is in the set of vectors orthogonal to $\bsA$.
Alternatively, if $\Span(\tilde{\bsa}_{1:d'}) \notin \Span^\perp(\tilde{x}_l)$, then $\proj_{\bsA}^\perp x_l \neq \tilde{x}_l$ since $\tilde{x}_l$ is not orthogonal to $\Span(\bsA)$.
We have thus shown that $\proj_{\bsA}^\perp x_l=\tilde{x}_l$ for all $l\in[j-1]$ if and only if $\tilde{\bsa}_{1:d'} \in \Span^\perp(\tilde{x}_{<j})$, and we are done.
\end{proof}

Since Lemma \ref{lem:opt-ocvg} uses an OCVG matrix, $A'$, which is only ``part of'' the matrix $A$ used in the barrier function, we need to ensure that a query which is not orthogonal to $A'$ is also not too orthogonal to $A$ This is accomplished by the following lemma.
\begin{lemma} \label{lem:Aprime-lb-A}
Let $A\in\re^{d'\times d}$ have each row $a_i \sim \Unif(\cS(1))$, $\forall i\in[d']$.
For any linear subspace of $\Span(A)$, $\cS$, let $A'$ be the matrix formed by projecting each row of $A$ onto $\cS$.
With probability at least $1-2^{-\Omega(d)}$ under the draw of $A$, both of the following hold for any $v\in\re^d$,
\begin{enumerate}
\item $\|Av\|_\infty  \geq \frac{\|A'v\|_\infty}{2 d}$

\item For any $x\in\cB(1)$,
$|\ip{v}{x} - \ip{v}{\proj_A^\perp x}| \leq 2d\|Av\|_\infty$

\end{enumerate}
\end{lemma}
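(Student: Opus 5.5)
The whole argument rests on one deterministic fact: the nonzero singular values of $A$ are bounded below. Both items then follow from linear algebra with no further randomness. We will show that with probability $1-2^{-\Omega(d)}$ every row satisfies $\|\proj_{a_{\neq i}}^\perp a_i\| \geq \frac{1}{2}$.

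\textbf{The probabilistic step.} Fix $i$ and condition on $a_{\neq i}$. Then $\Span(a_{\neq i})$ has dimension at most $d'-1 < d/2$, and $a_i$ is independent of it and uniform on $\cS(1)$. By Lemma~\ref{lem:random-projection}, applied to the orthogonal complement (dimension at least $d/2$), we get $\|\proj_{a_{\neq i}}^\perp a_i\| \geq (1-\epsilon)\sqrt{1/2} \geq 1/2$ except with probability $2e^{-\Omega(d)}$. A union bound over the $d'$ rows keeps the failure probability at $2^{-\Omega(d)}$. On this event the rows are linearly independent. Lemma~\ref{lem:singular-vals} then gives $\sigma_{\min}(A) \geq \frac{1}{2\sqrt{d}}$.

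\textbf{Item 1.} Each row of $A'$ is $a'_i = \proj_{\cS} a_i$, which lies in $\cS \subseteq \Span(A)$. By Lemma~\ref{lem:l1-coeff-norm}, applied to $U = A^\top$, we can write $a'_i = \sum_l c^{(i)}_l a_l$ with $\|c^{(i)}\|_1 \leq \sqrt{d}\,\|a'_i\|/\sigma_{\min} \leq 2d$. Therefore
\[
|\ip{a'_i}{v}| \leq \sum_l |c^{(i)}_l|\,|\ip{a_l}{v}| \leq 2d\|Av\|_\infty,
\]
and taking the maximum over $i$ gives $\|A'v\|_\infty \leq 2d\|Av\|_\infty$.

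\textbf{Item 2.} The difference is $\ip{v}{x} - \ip{v}{\proj_A^\perp x} = \ip{v}{\proj_A x}$. The vector $\proj_A x$ lies in $\Span(A)$ and has norm at most $1$. The same expansion gives $\proj_A x = \sum_l c_l a_l$ with $\|c\|_1 \leq 2d$, so
\[
|\ip{v}{\proj_A x}| \leq 2d\|Av\|_\infty.
\]

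\textbf{Scaling and obstacles.} As stated, $v$ ranges over all of $\re^d$ while the claims are homogeneous in $v$; only $x$ needs norm at most $1$. Likewise, item 1 is homogeneous in $v$, and we use only $\|a'_i\| \leq 1$. The step most likely to cause trouble is the per-row distance bound, specifically getting the constant $\frac{1}{2}$ from dimension counting with $d' = d/2$. If the constants fail, one can instead use the standard fact that a random $d/2 \times d$ matrix has smallest singular value $\Omega(1)$ with probability $1-2^{-\Omega(d)}$. Any constant lower bound on $\|\proj_{a_{\neq i}}^\perp a_i\|$ suffices, at the cost of adjusting the factor $2d$ by a constant. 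Note that the argument never needs $\cS$ to be independent of $A$, since the bound holds simultaneously for every subspace of $\Span(A)$.
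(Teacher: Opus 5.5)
Your proof is correct and follows the paper's argument. Both use the per-row distance bound with Lemma~\ref{lem:singular-vals} to get $\sigma_{\min}(A)\geq \frac{1}{2\sqrt{d}}$ with probability $1-2^{-\Omega(d)}$, and the $\ell_1$-coefficient bound of Lemma~\ref{lem:l1-coeff-norm}. The only difference is cosmetic: you expand the test vectors $a'_i$ and $\proj_A x$ in the rows of $A$, whereas the paper expands $\proj_A v$ and uses H\"older to get the single bound $\|\proj_A v\|\leq 2d\|Av\|_\infty$, from which both items follow by Cauchy--Schwarz (the two are dual forms of the same inequality).
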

\begin{proof}
Let $v\in\re^d$. Let $v_A = \proj_{A} v$ and let $\gamma \in \re^{d'}$ be such that $v = \gamma^T A$. By Holder's inequality we have
$\|v_A\|^2 = \|\sum_{j=1}^{d'} \gamma_j \ip{a_j}{v_A}\| \leq \|\gamma\|_1 \|A v_A\|_\infty = \|\gamma\|_1 \|A v\|_\infty$. 
Thus by Lemma \ref{lem:l1-coeff-norm} we have,
\begin{align}\label{eq:A-comp-via-max}
    \frac{\|v_A\|^2}{\|Av\|_\infty} \leq \|\gamma\|_1 \leq \frac{\sqrt{d}\|v_A\|}{\sigma_{min}(A)} \implies \frac{\|v_A\|\sigma_{min}}{\sqrt{d}} \leq \|Av\|_\infty,
\end{align}
where $\sigma_{min}$ denotes the smallest non-zero singular value of $A$. 
Since $\|v_A\| \geq \|\proj_{S} v\| \geq \|A' v\|_\infty$, we have,
\begin{align*}
    \|Av\|_\infty \geq \frac{\|v_A\|\sigma_{min}}{\sqrt{d}} \geq \frac{\|A'v\|_\infty \sigma_{min}}{\sqrt{d}}. %
\end{align*}
It remains to show that w.h.p. $\sigma_{min} \geq \frac{1}{2\sqrt{d}}$.
Recall $A$ is sampled as a $d'\times d$ matrix whose rows are uniformly random unit vectors.
By standard result for random projection we have for any $j\in [d']$,
\begin{align*}
    \pr{A}{\|\proj_{a_{\neq j}}^\perp a_j\| \geq \frac{\|a_j\|}{2} } \leq 1-e^{-\Omega(d)}.
\end{align*}
Thus by a union bound a similar event holds for every $j\in [d']$ w.p. at least $1-e^{-\Omega(d)}$. The lower bound on $\sigma_{min}$ then follows from Lemma \ref{lem:singular-vals}. 

For the second part of the lemma, when $\sigma_{min} \geq \frac{1}{2\sqrt{d}}$ we also have by the previously established Eqn. \eqref{eq:A-comp-via-max} that,
\begin{align*}
    |\ip{v}{x} - \ipnos{v}{\proj_A^\perp x}|  = |\ip{v}{\proj_A x}| \leq \|\proj_A v\| \leq 2d\|Av\|_\infty.
\end{align*}
\end{proof}

\subsection{Proof of Lemma \ref{lem:resisting-consistent}} \label{app:resisting-consistent}
\paragraph{Oracle implementation.} 
Before proving the lemma, we define the exact implementations of the resisting and true oracles. For convenience, we recall the definition of the objective function.
\begin{align*}
    F(w) = \max\big\{\max_{j\in[N]}\bc{|\ip{w}{x_j} - 2\alpha| - j\gamma},~ d^{6}\|A w\|_\infty,~ \max_{z\in\cZ}\bc{\ip{w}{z}} - 4\alpha \big\}.
\end{align*}
For $w\in\re^d$, we have the true oracle, $\cO$, select an element from $\partial F(w)$ in the following way. If $\nem(w)$ is largest, $\cO$ returns the lowest index Nemirovski vector in the subgradient of $\nem(w)$ and similarly for $\tilde{\cO}_j$.
When $\barr(w)$ is largest, $\cO$ returns the lowest index row of $A$ in the subgradient, and similarly for the resisting oracle.
When $\wall(w)$ is largest, we use $\argmax_{z\in\cZ}\{\ip{w}{z}\}$. Ties can be broken by choosing the element in the maximal set closest to some arbitrary fixed vector $z^*\in\cS(1)$. The analogous strategy is used for the resisting oracle. When more than one of $\nem,\barr$ or $\wall$ is maximal, we take the oracle to prefer a subgradient of the $\nem$ first, then $\barr$, then $\wall$. Note this means that when $F(w)=\max\{\nem_j(w),\barr(w)\}$ then the true and resisting oracles return the same response.

\begin{lemma*}
For Algorithm \ref{alg:opt}, 
$\pr{A,x_{1:N}}{\forall j\in[N-1], t\in[T]: \tilde{\cO}_j(\ujt) = \cO(\ujt)} \geq 1-O(\frac{1}{d})$.
\end{lemma*}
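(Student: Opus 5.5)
We condition on a high-probability event $\cE$ and show that on $\cE$ every query falls into one of the four cases sketched after the statement of Lemma \ref{lem:resisting-consistent}, with $F$ and $F_j$ agreeing at $\ujt$ on the maximizing component. The event $\cE$ is the intersection of three pieces. First, the conclusion of Lemma \ref{lem:opt-ocvg} holds for every $j\in[N-1]$, which fails with probability at most $N\cdot O(1/d^3)$. Second, the bounds of Lemma \ref{lem:Aprime-lb-A} hold. Third, every pair $(j,t)$ satisfies, for all $i>j$,
\[
|\ip{\tujt}{x_i}|\le 3\|\tujt\|\sqrt{\log(d)/d}
\quad\text{and}\quad
|\ip{\tujt}{\tilde{x}_i}|\le 6\|\tujt\|\sqrt{k/d}.
\]
The third piece follows from Lemma \ref{lem:random-correlation}. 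The reason is that $\ujt$ is determined by $A,x_{\le j}$ and the resisting-oracle answers, which do not depend on $x_{j+1:N}$ or on $\tilde{x}_{j:N}$, while $x_i$ is uniform on $\Span^\perp(\tilde{x}_{<i})\cap\cS(1)$. For $\tilde{x}_i$, condition on $A$ and use that $\tilde{x}_i/\|\tilde{x}_i\|$ is uniform on a subspace of dimension $\Omega(d)$. A union bound over the at most $NTd$ triples $(j,t,i)$ gives total failure probability $O(1/d)$.

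With $\cE$ in hand, the key structural facts are the following.

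(a) Since $\ip{\ujt}{x_i}=\ip{\tujt}{x_i}$ for $i\ge j$ and $\tilde{x}_{<j}$ already appears in both $\cZ$ and $\cZ_j$, the Nemirovski terms for $i>j$ are at most $2\alpha+|\ip{\tujt}{x_i}|-i\gamma$. They therefore lose to the $j$-th term whenever $|\ip{\tujt}{x_j}|$ and $|\ip{\tujt}{x_i}|$ are both below $\gamma/2=5\alpha\sqrt{k/d}$. This is immediate once $\|\tujt\|\le\alpha$.

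(b) The wall with caps $\cC_{j:N}$ removed agrees with the wall over $\cZ_j$ whenever $|\ip{\tujt}{\tilde{x}_i}|\le 6\|\tujt\|\sqrt{k/d}$ for all $i\ge j$. Here I would prove a Bubeck-style lemma, namely Lemma \ref{lem:wall}. Its content is that the maximizer $z$ of $\ip{w}{z}$ over the complement of the caps $\cC_{<j}$ is a combination of $w$ and $\tilde{x}_{<j}$, and hence is not in $\cC_i$ for $i\ge j$. In addition, when $\|\proj^\perp_{\tilde{x}_{<j}}w\|\ge 6\alpha$, the wall value is at least $2\alpha$.

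(c) The barrier beats the wall when $\|A\tujt\|_\infty\ge\|\tujt\|/(8d^{5.5})$, because $d^6\|A\ujt\|_\infty\ge d^{0.5}\|\tujt\|/8$. The wall, in contrast, is at most $\|\tujt\|+O(\|\proj_{\tilde{x}_{<j}}\ujt\|\sqrt{k/d})-4\alpha$. This is Lemma \ref{lem:barrier-beats-wall}. I need to handle carefully the component along $\tilde{x}_{<j}$: the wall could gain from it, but the resisting wall over $\cZ_j$ sees exactly the same component.

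The four cases then go as follows. Case 1 uses (c) and (a), where (a) applies because the norm is small. Case 2 uses (c), together with $\barr\ge d^{0.5}\alpha\sqrt{k}/(8d)\ge 1\ge\nem$. Case 3 uses Lemma \ref{lem:opt-ocvg} and $\cE$ to bound the correlations with $x_j$, $\tilde{x}_j$, $x_{>j}$ and $\tilde{x}_{>j}$, and then applies (a) and (b). Case 4 applies (b) to get $\wall(\ujt)\ge 2\alpha+\max_i(\ldots)$. Here I must check that every term with $i>j$ is at most $2\alpha+3\|\tujt\|\sqrt{\log d/d}-i\gamma$, and that this is dominated by the wall lower bound, which scales as $\|\tujt\|$ minus $4\alpha$. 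I would sharpen Lemma \ref{lem:wall} to give $\wall(w)\ge\|\proj^\perp w\|(1-O(k/d))-4\alpha$.

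I expect the main obstacle to be this wall analysis in (b) and Case 4. The caps are defined via the projected vectors $\tilde{x}_i$ rather than $x_i$, and the maximizer over a nonconvex set must be shown to avoid the new caps. The tie-breaking rule (prefer $\nem$, then $\barr$, then $\wall$) must also be checked in every case, so that the returned subgradient itself coincides, not just the function values.
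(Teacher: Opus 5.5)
Your plan follows the paper's proof essentially step for step. It uses the same conditioning event, the same four-way split, and the same Lemmas \ref{lem:opt-ocvg}, \ref{lem:wall}, \ref{lem:barrier-beats-wall}. You are in fact more careful than the paper in putting the bounds on $|\ip{\tujt}{\tilde x_i}|$ for $i>j$ into the event; the paper's Case 3 invokes these without having conditioned on them. However, there is a genuine gap: the near-orthogonal cases do not close.

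\textbf{The uncovered window.} For near-orthogonal queries, the only control on the $j$-th correlation is the OCVG bound $|\ip{\tujt}{x_j}|\le 6\|\tujt\|\sqrt{k/d}$.
\begin{itemize}
\item Step (a) needs $|\ip{\tujt}{x_j}|+|\ip{\tujt}{x_i}|<(i-j)\gamma$ with $\gamma=10\alpha\sqrt{k/d}$. This is guaranteed only for $\|\tujt\|$ below roughly $1.5\alpha$. Your symmetric split ``both below $\gamma/2$'' already fails at $\|\tujt\|=\alpha$, since $6\alpha\sqrt{k/d}>5\alpha\sqrt{k/d}$. So (a) does not cover your Case 3, which runs up to $3\alpha$.
\item Step (b) gives $\wall(\ujt)\ge\|\tujt\|-4\alpha$. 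The terms with $i>j$ can be as large as $2\alpha+3\|\tujt\|\sqrt{\log(d)/d}-i\gamma$, with $i\gamma\le N\gamma\approx 0.21\alpha$. So the wall dominates only once $\|\tujt\|$ is about $5.8\alpha$, consistent with your own remark that (b) is useful from $6\alpha$ on. At the left end of your Case 4 the bound is $-\alpha$, and sharpening by a factor $1-O(k/d)$ changes nothing.
\end{itemize}
Near-orthogonal queries with $\|\tujt\|$ between about $1.5\alpha$ and $6\alpha$ are therefore covered by neither case.

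This is not merely looseness. Take $j=1$ and a query $w$ with $Aw=0$, $\|w\|=3\alpha$ and $\ip{w}{x_1}=16.5\alpha\sqrt{k/d}$, which is below the OCVG threshold $18\alpha\sqrt{k/d}$. Nothing in your event excludes it. Yet the second Nemirovski term exceeds the first, the barrier is $0$, and the wall is at most $-\alpha$. So $\cO$ returns $-x_i$ for some $i\ge 2$, while $\tilde{\cO}_1$ returns $-x_1$.

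\textbf{The paper has the same slip, and a constant change fixes it.} The paper's written Cases 3 and 4 contain the same defect. Case 3 uses $6\|\tujt\|\sqrt{k/d}\le 6\alpha\sqrt{k/d}$ although $\|\tujt\|$ ranges up to $3\alpha$. Case 4 asserts $\|\tujt\|-4\alpha\ge 0$ from $\|\tujt\|\ge 3\alpha$. Following the paper will not fill the hole.

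What is missing is a choice of constants that makes the two regimes overlap, for example $\gamma=50\alpha\sqrt{k/d}$ and $N=\frac{1}{240}\sqrt{d/k}$. Neither $N\gamma$ nor $N\sqrt{k/d}$ increases, so Lemma \ref{lem:minimizer}, Lemma \ref{lem:barrier-beats-wall} and the final suboptimality step are unaffected, and the query bound changes only by a constant. Then split the near-orthogonal queries at $\|\tujt\|=6\alpha$:
\begin{itemize}
\item Below the split, $|\ip{\tujt}{x_j}|+|\ip{\tujt}{x_i}|\le 36\alpha\sqrt{k/d}+18\alpha\sqrt{\log(d)/d}<\gamma$, so $\nem(\ujt)=\nem_j(\ujt)$.
\item Above the split, $\|\tujt\|-4\alpha>2\alpha+3\|\tujt\|\sqrt{\log(d)/d}-2\gamma$. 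So the wall strictly beats every term with $i>j$, and Lemma \ref{lem:wall} gives identical responses.
\end{itemize}

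\textbf{A separate error in Case 2.} Your inequality $\barr\ge 1\ge\nem$ is false: $\barr(\ujt)\ge\sqrt d\|\tujt\|/8$ can be as small as $\alpha\sqrt k/8$. Compare instead only against the terms with $i>j$, which are at most $2\alpha+\|\tujt\|$. Then use $\sqrt d\|\tujt\|/8>2\alpha+\|\tujt\|$ for $\|\tujt\|\ge\alpha\sqrt{k/d}$, which holds since $k\ge 32\log(d)$ is large. This is how the paper argues.
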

\begin{proof}
For notation, define $\nem_j(w) = \max_{i\in[j]}\bc{|\ip{w}{x_i} - 2\alpha| - i\gamma}$ and $\wall_j(w) = \max_{z\in\cZ_j}\bc{\ip{w}{z}} - 4\alpha$.

To show that the resisting-oracle is consistent with high probability,
we will condition on, 
\begin{align} \label{eq:cases-ocvg}
  \forall j\in[N-1], t\in[T], ~~~ \|A\tujt\|_\infty \geq \frac{\|\tujt\|}{8d^{5.5}} ~\text{ or }~  \max\{|\ip{\tujt}{x_j}|,|\ip{\tujt}{\tilde{x}_j}|\} \leq 6\|\tujt\|\sqrt{\frac{k}{d}},  
\end{align}
and,
\begin{align} \label{eq:cases-subgaussian}
    \forall j\in[N-1], t\in[T],j' \geq j+1  \quad~ |\ip{\ujt}{x_{j'}}| \leq 3\|\tujt\|\sqrt{\frac{\log(d)}{d}} \leq \|\tujt\| \sqrt{\frac{k}{d}},
\end{align}
and that $\|\tilde{x}_i\| \geq 1/2$~ for every $i\in[N]$.
Note $\ip{\ujt}{x_l}=\ip{\tujt}{x_l}$. 
By Lemma \ref{lem:opt-ocvg} and a standard concentration results (Lemmas \ref{lem:random-correlation} and \ref{lem:random-projection}) this happens with probability at least $1-O\big(\frac{N}{d^3}\big) - O\big(\frac{1}{d}\big) - 2^{\Omega(d)} = 1-O(\frac{1}{d})$.

Conditional on this event, we will consider several cases and show that the true and resisting oracles match in each. %
We note the following uses two lemmas, deferred to Section \ref{sec:lemmas-for-opt-reduction} below, which handle the consistency of the wall function. Throughout the following, note that under our choice of the oracle implementations, whenever $F(\ujt) = \max\{\nem_j(\ujt),\barr(\ujt)\}$, the resisting and true oracles return the same response.

\paragraph{Case 1: $\|\tujt\| \in [0,\alpha\sqrt{\frac{k}{d}}] ~~\&~~ \|A\tujt\|_\infty \geq \frac{\|\tujt\|}{8d^6}$.}
Since we have conditioned on $\|\tilde{x}_i\| \geq 1/2$ for every $i\in[N]$, by Lemma \ref{lem:barrier-beats-wall}, we have that $F(\ujt)=\max\{\nem(\ujt),\barr(\ujt)\}$ since $\|A\tujt\|_\infty \geq \frac{\|\tujt\|}{8d^6}$. Thus it only remains to show $\nem(\ujt) = \nem_j(\ujt)$.
Observe that for any $j' \geq j+1$,
\begin{align*} 
    |\ip{\ujt}{x_{j}}-2\alpha| - j\gamma \leq |\ip{\ujt}{x_{j'}}-2\alpha| - j'\gamma \implies 10\alpha\sqrt{\frac{k}{d}} \leq |\ip{\ujt}{x_{j}}| + |\ip{\ujt}{x_{j'}}|. 
\end{align*}
Since we have conditioned on the event, $\forall j'\geq j+1: |\ip{\ujt}{x_{j'}}| \leq 3\|\tujt\|\sqrt{\frac{\log(d)}{d}} \leq \|\tujt\| \sqrt{\frac{k}{d}}$, we thus have,
\begin{align} \label{eq:nem-needs-large-ip}
     |\ip{\ujt}{x_{j}}| \leq 6\alpha\sqrt{\frac{k}{d}} \implies \nem(\ujt) = \nem_j(\ujt)
\end{align}
Since $\|\tujt\|\leq\alpha\sqrt{\frac{k}{d}}$, Cauchy-Schwarz implies the condition of the above is satisfied, and thus $F(\ujt)=\max\{\nem_j(\ujt),\cR(w)\}$, which implies the resisting and true oracles provide the same response.

\paragraph{Case 2: $\|\tujt\| \in [\alpha\sqrt{\frac{k}{d}},1] ~~\&~~ \|A\tujt\|_\infty \geq \frac{\|\tujt\|}{8d^6}$.} 
As in the previous case,
by Lemma \ref{lem:barrier-beats-wall}, we have that $F(\ujt)=\max\{\nem(\ujt),\barr(\ujt)\}$ since $\|A\tujt\|_\infty \geq \frac{\|\tujt\|}{8d^6}$. 

Since $\|\tujt\|\geq\alpha\sqrt{\frac{k}{d}}$,  we have 
$\barr(\ujt) \geq \frac{\sqrt{d}}{8}\|\tujt\|\geq \alpha \sqrt{k}/8 \geq \alpha \sqrt{32\log(d)}/8$. 
Thus for $d$ larger than some constant we have
$\barr(\ujt) \geq 3.5\max\{\|\tujt\|,\alpha\}$.
However, 
for any $i\geq j$, we have $|\ip{\ujt}{x_i}-2\alpha|-i\gamma < 3\max\{\|\tujt\|,\alpha\}$.
Thus we have $F(\ujt) = \max\{\nem_j(\ujt),\barr(\ujt)\}$ and the oracles match.

\paragraph{Case 3: $\|\tujt\| \in [0,3\alpha] ~~\&~~  \|A\tujt\|_\infty \leq \frac{\|\tujt\|}{8d^6}$. } 
Because $\|A\tujt\|_\infty \leq \frac{\|\tujt\|}{8d^6}$, Eqn. \eqref{eq:cases-ocvg} implies $\max\{|\ip{\tujt}{x_j}|, |\ip{\tujt}{\tilde{x}_j}|\} \leq 6\|\tujt\|\sqrt{\frac{k}{d}}$.
Lemma \ref{lem:wall} thus implies the true and resisting oracle return the same vector if $\wall$ is maximal. 
Recall we have also conditioned on the event that for all $l\geq j+1$, $|\ip{\tujt}{\tilde{x}_j}| \leq 6\|\tujt\|\sqrt{\frac{k}{d}}$.
Thus by Eqn. \eqref{eq:nem-needs-large-ip}
we have that $\nem(\ujt) = \nem_j(\ujt)$, and the oracles return the same response.

\paragraph{Case 4: $\|\tujt\| \in [3\alpha,1] ~~\&~~ \|A\tujt\|_\infty \leq \frac{\|\tujt\|}{8d^6}$.} 
Since $\|A \tujt\|_\infty \leq \frac{\|\tujt\|}{8d^6}$, by Eqn. \eqref{eq:cases-ocvg} it must be that $\max\{|\ip{\tujt}{x_j}|, |\ip{\tujt}{\tilde{x}_j}|\} \leq 6\|\tujt\|\sqrt{\frac{k}{d}}$.
Now by Lemma \ref{lem:wall}, we have $\wall(\ujt) \geq \|\tujt\| - 4\alpha \geq 0$,
which implies $\max\{\wall(\ujt),\nem_j(\ujt)\} \geq \nem(\ujt)$.
Further, Lemma \ref{lem:wall} the resisting and true oracle return the same response if the wall function is maximal. This implies the resisting oracle is consistent with the true oracle.
\end{proof}

\subsubsection{Sublemmas for the Proof of Lemma \ref{lem:resisting-consistent}} \label{app:resist-consist-lemmas}
The first two lemmas in this subsection handle the wall function depending on the outcome of the OCVG. 
Recall,
\begin{align*}
    F_j(w)\! =\! \max\big\{\max_{i\in[j]}\bc{|\ip{w}{x_i} - 2\alpha| - i\gamma},~ d^{6}\|A w\|_\infty,~ \max_{z\in\cZ_j}\bc{\ip{w}{z}}\! -\! 4\alpha \big\},
\end{align*}
where $\cZ_j = \big\{z: \|z\|=1 ~~\land~~ z \notin \bigcup_{i\in[j-1]} \cC_{i}\big\}$ and for $i\in[N]$, $\mathcal{C}_i = \big\{y: \abs{\ip{\tilde{x}_i}{y}} > 6\sqrt{\frac{k}{d}}\big\}$.
We start with the case where the optimizer does not find a vector substantially correlated with the problem vector.
This analysis is similar to that in \cite{bubeck19_highlyparallel}.
\begin{lemma} \label{lem:wall}
Let $j\in[N]$.
Let $w\in\cB(1)$ be such that ~$\forall i \geq j$, $|\ipnos{\proj_{\tilde{x}_{< j}}^\perp w}{\tilde{x}_i}| \leq 6\|\proj_{\tilde{x}_{< j}}^\perp w\| \sqrt{\frac{k}{d}}$. 
Then, if $\wall(w) > \max\{\nem(w),\barr(w)\}$, it holds that $\cO(w)=\tilde{\cO}_j(w)$ and further that
$\wall(w) \geq \|\proj_{\tilde{x}_{<j}}^\perp w\| - 4\alpha.$
\end{lemma}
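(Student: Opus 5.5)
Write $w^\perp:=\proj_{\tilde{x}_{<j}}^\perp w$, and recall that $\tilde{x}_1,\dots,\tilde{x}_N$ are mutually orthogonal with $\|\tilde{x}_i\|\ge 1/2$ on the conditioned event. The plan has two steps: first show that the wall is attained at a point of $\cZ_j$ that avoids every later cap, and then use this to get both claims. The main tool is a Bubeck-style witness vector built inside $\Span^\perp(\tilde{x}_{<j})$.

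\textbf{Step 1 (lower bound).} If $w^\perp=0$ the bound is trivial. Otherwise set $z_0 = w^\perp/\|w^\perp\|$, which is a unit vector with $\ip{w}{z_0}=\|w^\perp\|$ because $w-w^\perp\in\Span(\tilde{x}_{<j})\perp z_0$.
\begin{itemize}
\item For $i<j$ we have $\ip{z_0}{\tilde{x}_i}=0$, so $z_0\notin\cC_i$.
\item For $i\ge j$ the hypothesis gives $|\ip{z_0}{\tilde{x}_i}|\le 6\sqrt{k/d}$, which is not strictly greater than the threshold, so $z_0\notin\cC_i$.
\end{itemize}
Hence $z_0\in\cZ\subseteq\cZ_j$, and therefore $\wall(w)\ge \ip{w}{z_0}-4\alpha=\|w^\perp\|-4\alpha$. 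This is the stated inequality.

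\textbf{Step 2 (oracle consistency).} Since $\cZ\subseteq\cZ_j$, we have $\wall_j(w)\ge\wall(w)$. The oracle defines $\wall(w)$ through $\argmax_{z\in\cZ}\ip{w}{z}$, while $\tilde{\cO}_j$ uses the argmax over $\cZ_j$ (a closed set, so the sup is attained, modulo boundary issues of the open caps; I would take closures). It therefore suffices to show two things.
\begin{enumerate}
\item $\max_{z\in\cZ_j}\ip{w}{z}$ is attained at a point in $\cZ$.
\item The tie-breaking rule selects the same point under both oracles.
\end{enumerate}

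For (1), take any maximizer $z^*\in\cZ_j$ and decompose it as $z^*=P+Q$, with $P\in\Span(\tilde{x}_{<j})$ and $Q\in\Span^\perp(\tilde{x}_{<j})$. Then $\ip{w}{z^*}=\ip{\proj_{\tilde{x}_{<j}}w}{P}+\ip{w^\perp}{Q}$. Fixing $P$ and $\|Q\|$, the value is maximized by aligning $Q$ with $w^\perp$, i.e. $Q=\|Q\|z_0$. This change keeps $z$ a unit vector and leaves $\ip{z}{\tilde{x}_i}=\ip{P}{\tilde{x}_i}$ unchanged for $i<j$, so the modified point stays in $\cZ_j$. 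For $i\ge j$ we have $\tilde{x}_i\perp\Span(\tilde{x}_{<j})$, so $|\ip{z}{\tilde{x}_i}|=\|Q\||\ip{z_0}{\tilde{x}_i}|\le 6\sqrt{k/d}$. Hence every aligned candidate lies in $\cZ$.

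If the maximizer were not aligned, alignment would strictly increase the value, contradicting maximality, unless $\ip{w^\perp}{Q}$ already equals $\|w^\perp\|\|Q\|$. So every maximizer over $\cZ_j$ with $w^\perp\ne 0$ lies in $\cZ$. The maximal sets over $\cZ_j$ and over $\cZ$ therefore coincide, and the common tie-break (closest point to $z^*$) picks the same element. The degenerate case $w^\perp=0$ needs separate care, since then the $Q$ component is free and a maximizer could lie in a later cap. In that case, however, Step 1 suggests the wall is at most $\|w\|-4\alpha$, which is dominated, and I would argue it is not maximal; alternatively, I would invoke the tie-break rule restricted to the set of maximizers.

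Finally, $\wall(w)>\max\{\nem,\barr\}$ together with $\wall_j\ge\wall$ and equal argmax values means both oracles return the wall gradient with the same vector. Since also $\nem_j\le\nem$ and $\barr$ is shared, neither oracle prefers a different piece.

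\textbf{Main obstacle.} The hardest part is the argmax-equality argument. It requires handling non-unique maximizers, the strict versus non-strict cap boundaries, and the degenerate $w^\perp=0$ case, while keeping the tie-breaking consistent between the two oracles.
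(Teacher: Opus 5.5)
For $w^\perp:=\proj^\perp_{\tilde{x}_{<j}}w\neq 0$, your argument is correct and is essentially the paper's proof. The paper also splits $z=g+y$ with $g\in\Span(\tilde{x}_{<j})$ and $y\in\Span^\perp(\tilde{x}_{<j})$, uses that the caps $\cC_i$ with $i<j$ depend only on $g$, and aligns $y$ with $w^\perp$, which the hypothesis keeps outside every $\cC_i$ with $i\ge j$. It gets the lower bound from $a=0$, $b=1$, i.e.\ your $z_0$. Your explicit check that the argmax sets over $\cZ$ and $\cZ_j$ coincide is a slightly more careful version of what the tie-breaking rule needs. No closures are needed: the caps are open, so $\cZ_j$ is already compact.

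The case you leave open, $w^\perp=0$, is also tacitly excluded by the paper, whose proof divides by $\|\proj^\perp_{\tilde{x}_{<j}}w\|$. However, neither of your suggested fixes works.

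\textbf{Domination fails.} The bound $\wall(w)\le\|w\|-4\alpha$ comes from Cauchy--Schwarz, not from Step 1, and it is not dominated by $\nem(w)$ in general. For $w\in\Span(\tilde{x}_{<j})$ one only has $\nem(w)\le\nu+2\alpha$, where $\nu:=\max_{i<j}|\langle w,\tilde{x}_i\rangle|$, and $\nu$ can be of order $\|w\|/\sqrt{j-1}$.

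\textbf{Tie-breaking fails.} When $w^\perp=0$, the objective $\langle w,z\rangle$ ignores the $\Span^\perp(\tilde{x}_{<j})$-component of $z$. Given $\|\tilde{x}_i\|\ge 1/2$, that component has norm at least $\sqrt{1-3\sqrt{k/d}}$ for $z\in\cZ_j$. It can therefore be pointed along $\tilde{x}_j$, landing in $\cC_j$, so the two argmax sets can genuinely differ.

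\textbf{What works.} Show the lemma is vacuous there, using the cap constraints instead of Cauchy--Schwarz. If $0\neq w\in\Span(\tilde{x}_{<j})$, then $w\perp\Span(A)$, so $\langle w,x_i\rangle=\langle w,\tilde{x}_i\rangle$ for $i<j$. Expanding $w$ in the orthogonal basis $\tilde{x}_{<j}$ gives $\langle w,z\rangle\le 24(j-1)\sqrt{k/d}\,\nu\le\nu/2$ for all $z\in\cZ_j$. Hence $\wall(w)\le\nu/2-4\alpha<\nu-2\alpha-N\gamma\le\nem(w)$, using $N\gamma=\tfrac{10}{48}\alpha$, so the lemma's hypothesis fails. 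The case $w=0$ is immediate from $\wall(0)=-4\alpha<0<\nem(0)$.

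This needs $\|\tilde{x}_i\|\ge 1/2$. That is not among the lemma's stated hypotheses, but it holds on the event where the lemma is applied. It is the same kind of estimate as in Lemma~\ref{lem:barrier-beats-wall}.
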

\begin{proof}
Via an orthogonal decomposition, we have for any $w$ that,
\begin{align} \label{eq:wallone}
    \wall(w)=\max_{\substack{a^2+b^2=
    1 \\ g+y\in\cZ, g\in\cG_a, y\in\cY_b}}\bc{\ip{g+y}{w}} - 4\alpha,
\end{align}
where $\cG_a=\Span(\tilde{x}_{<j})\cap \cS(a)$ and $\cY_b=\Span^\perp(\tilde{x}_{<j})\cap\cS(b)$. 
Alternatively, the resisting oracle uses the function (noting $g+y \in \cZ_j \iff g\in\cZ_j$),
\begin{align}\label{eq:walltwo}
    w \mapsto \max_{\substack{a^2+b^2=
    1 \\ g\in\cZ_j, g\in\cG_a, y\in\cY_b}}\bc{\ip{g+y}{w}} - 4\alpha.
\end{align}
Further, for any $y\in\cY_b$ it holds that $g+y\in\cZ \implies g\in\cZ_j$. %
Thus it suffices to show that for any $g\in\cZ_j \cap \cG_a$, the maximizing value of $y$ in both Eqns. \eqref{eq:wallone} and \eqref{eq:walltwo} is  $y=b\frac{\proj_{\tilde{x}_{<j}}^\perp w}{\|\proj_{\tilde{x}_{<j}}^\perp w\|}$, and thus the oracle responses are the same (assuming as we have that the resisting and true oracles choose the same $g\in \cZ_j\cap \cG_a$).
For any setting of $g\in\cZ_j \cap \cG_a$, the setting of $y'\in \cY_b$ (i.e. ignoring the condition that $g+y\in\cZ$) which maximizes $\ip{w}{g+y}$ is $b\frac{\proj_{\tilde{x}_{<j}}^\perp w}{\|\proj_{\tilde{x}_{<j}}^\perp w\|}$. 
But by the assumption that $\forall i \geq j$, $|\ipnos{\proj_{\tilde{x}_{< j}}^\perp w}{\tilde{x}_i}| \leq 6\|\proj_{x_{< j}}^\perp w\| \sqrt{\frac{k}{d}}$, such a setting of $y$ still yields $g+y\in\cZ$ whenever $g\in\cZ_j \cap \cG_a$, as desired. 

Finally, when $y=b\frac{\proj_{\tilde{x}_{<j}}^\perp w}{\|\proj_{\tilde{x}_{<j}}^\perp w\|}$, we have $\cW(w) = \ip{g}{w} + b\|\proj_{\tilde{x}_{<j}}^\perp w\| -4\alpha$ for some $g$ of norm $a=\sqrt{1-b^2}$.
The lower bound comes from the fact that $\cW$ maximizes over all $a^2+b^2=1$ and $y\in\cY_b$, thus the value of $\cW$ must be lower bounded by the case where $a=0$, $b=1$, and $y=b\frac{\proj_{\tilde{x}_{<j}}^\perp w}{\|\proj_{\tilde{x}_{<j}}^\perp w\|}$.
\end{proof}

The next lemma handles the wall function in the case where the barrier function is large, which happens if the optimizer finds a point sufficiently correlated with the problem vector.
\label{sec:lemmas-for-opt-reduction}
\begin{lemma}\label{lem:barrier-beats-wall}
Assume $\|\tilde{x}_i\| \geq 1/2$~ for all $i\in[N]$. Then if 
$\|A\tujt\|_\infty \geq \frac{\|\tujt\|}{8d^{5.5}}$, then 
$\max\{\nem(\ujt),\barr(\ujt)\} > \wall(\ujt)$.
\end{lemma}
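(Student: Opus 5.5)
The plan is to compare the barrier directly with the crude upper bound $\wall(\ujt)\le \|\ujt\|-4\alpha\le 1-4\alpha$, which holds since every $z\in\cZ$ is a unit vector. The point is that the hypothesis makes the barrier huge unless $\tujt$ is tiny, and when $\tujt$ is tiny the wall is forced to be small as well.

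Write $w=\ujt$ and $\tilde w=\tujt=\proj^\perp_{\tilde x_{<j}} w$. Since each $\tilde x_i$ is orthogonal to the rows of $A$, we have $\|Aw\|_\infty=\|A\tilde w\|_\infty$, and so
\[\barr(w)\ge d^{6}\frac{\|\tilde w\|}{8d^{5.5}}=\frac{\sqrt d}{8}\|\tilde w\|.\]
I would split into two cases according to the size of $\|\tilde w\|$.

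\textbf{Case $\|\tilde w\|\ge 8/\sqrt d$.} Here $\barr(w)\ge 1>1-4\alpha\ge\wall(w)$, and we are done.

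\textbf{Case $\|\tilde w\|<8/\sqrt d$.} This is the main step. I would use the decomposition from the proof of Lemma~\ref{lem:wall}. Write $w=g_w+\tilde w$ with $g_w\in\Span(\tilde x_{<j})$, and $z=g+y$ with $g\in\Span(\tilde x_{<j})$, $y\perp \tilde x_{<j}$. Then
\[\ip{z}{w}=\ip{g}{g_w}+\ip{y}{\tilde w}.\]
Since $z\in\cZ$, we have $|\ip{z}{\tilde x_i}|\le 6\sqrt{k/d}$ for $i<j$. The $\tilde x_i$ are orthogonal with norm at least $1/2$, so writing $g_w=\sum_{i<j}c_i\tilde x_i$ gives
\[|\ip{g}{g_w}|=\Big|\sum_i c_i\ip{z}{\tilde x_i}\Big|\le 6\sqrt{k/d}\,\|c\|_1\le 6\sqrt{k/d}\cdot 2\sqrt{N}\,\|g_w\|.\]
Here $\|c\|_1\le\sqrt N\|c\|_2\le 2\sqrt N\|g_w\|$ by orthogonality. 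Using $N=\frac1{48}\sqrt{d/k}$ this is at most $12 (k/d)^{1/4}/\sqrt{48}\cdot\|g_w\|$, which is below $1$ times a small constant once $k\le d/C$. It is therefore not obviously below $4\alpha$, and this is the obstacle.

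To handle it, I would also compare against the Nemirovski term: $\nem(w)\ge |\ip{w}{x_1}-2\alpha|-\gamma$, which is nonnegative-ish only near $2\alpha$. Since that alone does not help, the better route is to bound the wall against $\nem$ via its $j'<j$ pieces. Large $\|g_w\|$ means large correlation with some $\tilde x_i$, and by Lemma~\ref{lem:Aprime-lb-A} (part 2) also with $x_i$ up to $2d\|Aw\|_\infty$. Concretely, the wall excess $\ip{g}{g_w}-4\alpha$ is at most $\sum_i|c_i|\,6\sqrt{k/d}-4\alpha$. Meanwhile some $|\ip{w}{\tilde x_i}|=|c_i|\|\tilde x_i\|^2\ge |c_i|/4$. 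Then $\nem(w)\ge |c_i|/4-2d\|Aw\|_\infty-2\alpha-N\gamma$. I expect the delicate accounting of $\|c\|_1$ versus $\max_i|c_i|$ (a $\sqrt N$ or $N$ loss) to be the crux. The bound $N\cdot 6\sqrt{k/d}\le 1/8$ from the choice of $N$ is exactly what should let $\max\{\nem,\barr\}$ dominate: the loss from $\|c\|_1\le N\max|c_i|$ is absorbed, the $y$-part contributes at most $\|\tilde w\|\le 8\barr(w)/\sqrt d$, and the $2d\|Aw\|_\infty$ error is absorbed by $\barr(w)=d^6\|Aw\|_\infty$.
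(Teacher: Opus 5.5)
Your final paragraph lands on essentially the paper's argument. You split $\ujt=g_w+\tujt$ with $g_w=\sum_{i<j}c_i\tilde x_i$. You then use $|\ip{z}{\tilde x_i}|\le 6\sqrt{k/d}$ for $z\in\cZ$ together with $6N\sqrt{k/d}=1/8$ to get $\wall(\ujt)\le \mu/8+\|\tujt\|-4\alpha$, where $\mu=\max_{i<j}|c_i|$. Finally, either the $i$-th Nemirovski piece or the barrier dominates. The paper runs the same computation with $\nu=\max_{i<j}|\ip{\ujt}{\tilde x_i}|$, obtaining $\wall(\ujt)\le \nu/2+\|\tujt\|-4\alpha$, and splits on $\nu\le 4\|\tujt\|$. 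Your preliminary case $\|\tujt\|\ge 8/\sqrt d$ and the $\sqrt N\|c\|_2$ attempt are superfluous and can be dropped.

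Two points need repair before this is a proof.

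\textbf{The Nemirovski comparison uses an unavailable hypothesis.} You relate $\ip{\ujt}{x_i}$ to $\ip{\ujt}{\tilde x_i}$ via Lemma~\ref{lem:Aprime-lb-A}. That lemma holds only on the event $\sigma_{\min}(A)\ge 1/(2\sqrt d)$, which is not a hypothesis here; only $\|\tilde x_i\|\ge 1/2$ is assumed. The detour is also unnecessary. Since $g_w\in\Span(\tilde x_{<j})$ is orthogonal to every row of $A$, we have $\ip{\ujt}{\proj_A x_i}=\ip{\tujt}{\proj_A x_i}$. Hence $|\ip{\ujt}{x_i}-\ip{\ujt}{\tilde x_i}|\le\|\tujt\|$ holds deterministically, and this error is absorbed exactly like the $y$-part.

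\textbf{The closing step is only asserted (``should dominate''), so write it out.} Take $i$ with $|c_i|=\mu$ (set $\mu:=0$ if $j=1$ and use $\nem\ge-\gamma$). Using $\|\tilde x_i\|\ge 1/2$ and $N\gamma=\frac{10}{48}\alpha<\alpha$, you get $\nem(\ujt)\ge \mu/4-\|\tujt\|-3\alpha$.
\begin{itemize}
\item If $\mu\ge 16\|\tujt\|$, this exceeds $\mu/8+\|\tujt\|-4\alpha\ge\wall(\ujt)$.
\item Otherwise $\|\tujt\|>0$, and $\wall(\ujt)<3\|\tujt\|-4\alpha<\frac{\sqrt d}{8}\|\tujt\|\le\barr(\ujt)$ once $\sqrt d>24$.
\end{itemize}
The last inequality is the only place the hypothesis $\|A\tujt\|_\infty\ge\|\tujt\|/(8d^{5.5})$ is needed.
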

\begin{proof}
First we show an upper bound on $\wall(\ujt)$. Let $\nu = \max\limits_{i<j}\{\ip{\ujt}{\tilde{x}_i}\}$. Since $\tilde{x}_{1:j}$ is an orthogonal set of vectors each of norm at least $1/2$, $\|\proj_{\tilde{x}_{<j}} \ujt\| \leq 2\sqrt{j}\nu$ and we have,
\begin{align}\label{eq:wall-bound-via-norm}
    \wall(\ujt) \leq \sqrt{\|\proj_{\tilde{x}_{<j}}\ujt\|^2 + \|\tujt\|^2} - 4\alpha \leq 2\nu\sqrt{j/d} + \|\tujt\| - 4\alpha \leq \frac{\nu}{2} + \|\tujt\| - 4\alpha.
\end{align}
We consider two cases based on whether $\nu \leq 4\|\tujt\|$. 
If $\nu \leq 4\|\tujt\|$, then Eqn. \eqref{eq:wall-bound-via-norm} implies 
$\wall(\ujt) \leq 3\|\tujt\|-4\alpha$.
On the other hand, because we have assumed $\|A\tujt\|_\infty \geq \frac{\|\tujt\|}{8d^{5.5}}$, we have 
$\barr(\ujt) \geq \frac{\sqrt{d}}{8}\|\tujt\| > 3\|\tujt\|$ (for $\sqrt{d}>24$). Thus we have $\barr(\ujt) \geq \wall(\ujt)$. 

In the other case where $\nu\geq 4\|\tujt\|$, we consider $\nem(\ujt)$. 
Then there exists $i\in[j-1]$ such that,
\begin{align*}
    \ip{\ujt}{x_i} &= \ip{\ujt}{\proj_A x_i} + \ip{\ujt}{\tilde{x}_i} \\
    &= \ipnos{\proj_{\tilde{x}_{<j}}\ujt}{\proj_A x_i} + \ipnos{\proj_{\tilde{x}_{<j}}^\perp\ujt}{\proj_A x_i} + \ip{\ujt}{\tilde{x}_i} \\
    &= \ipnos{\tujt}{\proj_A x_i} + \ip{\ujt}{\tilde{x}_i} 
    \geq \nu - \|\tujt\| \geq \frac{3\nu}{4}.
\end{align*}
This implies $\nem(\ujt) \geq \frac{3\nu}{4}-3\alpha$.
But by Eqn. \eqref{eq:wall-bound-via-norm} we have $\wall(\ujt) \leq \frac{3\nu}{4}-4\alpha$.
Thus $\nem(\ujt) > \wall(\ujt)$.
\end{proof}

\paragraph{Acknowledgements.}
M. Menart and A. Nikolov are supported by an NSERC Discovery Grant (RGPIN-2021-03206), and the Canada Research Chairs program (CRC-2020-00004).

\bibliographystyle{alpha} 
\bibliography{ref-base,ref-add}
\end{document}